\newtheoremstyle{mystyle}{1pt}{1pt}{\normalfont}{\parindent}{\bfseries}{:}{1em}{}
\theoremstyle{mystyle}
\newtheorem{Thm}{Theorem}
\newtheorem{Asu}{Assumption}
\newtheorem{Pro}{Proposition}
\newtheorem{Lem}{Lemma}
\newtheorem{Col}{Corollary}
\newtheorem{Rem}{Remark}
\begin{document}
%
\title{Distributed Policy Gradient with Variance Reduction in Multi-Agent Reinforcement Learning}
%
%
%

\author{Xiaoxiao Zhao, ~
        Jinlong Lei,~\IEEEmembership{Member,~IEEE,}
        ~Li Li,~\IEEEmembership{Member,~IEEE}
        and ~Jie Chen, ~\IEEEmembership{Fellow,~IEEE}
\thanks{This work was supported by Shanghai Municipal Science
	and Technology Major Project under Grant 2021SHZDZX0100, National Key Research
	and Development Program of Science and Technology of China under
	Grant 2018YFB1305304, Shanghai Science and Technology Pilot
	Project under Grant 19511132100,  National Natural Science Foundation of China under Grant 72171172.}
\thanks{X. Zhao is with the College
of Electronic and Information Engineering, Tongji University, Shanghai,
201804 China. (e-mail: zh\_xiaoxiao@tongji.edu.cn).
J. Lei, L. Li, and J. Chen are with the College of Electronic and Information Engineering, Tongji University, Shanghai, 201804 China, and Institute of Intelligent Science and Technology, Tongji University, Shanghai 201203, China. (e-mail: leijinlong@tongji.edu.cn,~lili@tongji.edu.cn,~chenjie@bit.edu.cn).}
\thanks{Manuscript received xxxx xx}}

%
%

\markboth{Manuscript }%
{X. Zhao \MakeLowercase{\textit{et al.}}: Distributed Policy Gradient with Variance Reduction in Multi-Agent Reinforcement Learning}

%


\maketitle

\begin{abstract}
This paper studies a distributed policy gradient in collaborative multi-agent reinforcement learning (MARL), where agents over a communication network aim to find the optimal policy to maximize the average of all agents' local returns. Due to the non-concave performance function of policy gradient, the existing distributed stochastic optimization methods for convex problems cannot be directly used for policy gradient in MARL. This paper proposes a distributed policy gradient with variance reduction and gradient tracking to address the high variances of policy gradient, and utilizes importance weight to solve the {distribution shift} problem in the sampling process. We then provide an upper bound on the mean-squared stationary gap, which depends on the number of iterations, the mini-batch size, the epoch size, the problem parameters, and the network topology. We further establish the sample and communication complexity to obtain an $\epsilon$-approximate stationary point.
Numerical experiments are performed to validate the effectiveness of the proposed algorithm.
\end{abstract}

\begin{IEEEkeywords}
distributed optimization, variance reduction, gradient tracking, reinforcement learning, multi-agent systems
\end{IEEEkeywords}

%
\IEEEpeerreviewmaketitle

\section{Introduction}
%
%
%
%

\IEEEPARstart{R}{einforcement} learning (RL) searches for the optimal policy through the dynamic interaction between an agent and the environment \cite{sutton2018reinforcement, bertsekas2019reinforcement}. Multi-agent reinforcement learning (MARL) incorporates the idea of RL into multi-agent systems \cite{yan2022optimal, duan2022optimal}, where a common environment is influenced by the joint actions of multiple agents \cite{Busoniu2008survey}. MARL achieves significant success in many complex decision-making tasks, such as intelligent traffic systems \cite{lin2018efficient}, resource allocation \cite{chen2019iraf}, and networked system control \cite{li2012optimal}, etc. Since agents interact not only with the environment but also with other agents, MARL suffers from several challenges including the non-stationarity, partial observability, scalability issues, and various information structures\cite{zhang2019multi}.

For collaborative MARL, the agents share a common aim of maximizing the globally averaged return of all agents. Especially, the reward functions are private to each agent and might vary for heterogeneous agents. One straightforward choice is to have a central controller that gathers the rewards of all agents and decides the actions for each agent. However, a central controller may be too expensive to deploy, and may also degrade the robustness to malicious attacks. Therefore, we focus on the fully distributed scheme, where agents are connected through a communication network and each agent takes an individual action based on local/neighboring information. Particularly, we study the policy gradient, one of  the most popular approaches to search for the optimal policy in high dimensional continuous action space \cite{sutton2000policy}.

Policy gradient methods parameterize the policy with an unknown parameter $\boldsymbol{\theta} \in \mathbb{R}^d$ and find the optimal parameter $\boldsymbol{\theta}^{*}$ to directly optimize the policy. The objective function $J\left(\boldsymbol{\theta}\right)$ of policy gradient is the expected return under a given policy and is usually non-concave. The goal is to find a stationary point $\boldsymbol{\theta}^{*}$ using gradient-based algorithms such that $\left\|\nabla J(\boldsymbol{ \theta }^{*})\right\|^2 = 0 $.
Since $J\left(\boldsymbol{\theta}\right)$ is expectation-valued, it is impracticable to calculate the exact gradients. Stochastic gradient estimators such as REINFORCE \cite{williams1992simple} and G(PO)MDP \cite{baxter2001infinite} have been applied to approximate the gradient via sampled trajectories. However, such approximation introduces high variances and slows down the convergence. To reduce the high variance of policy gradient approaches, we reformulate the problem of multi-agent policy gradient as a distributed stochastic optimization problem, and propose a fully distributed policy gradient algorithm with variance reduction and gradient tracking in MARL. At the nucleus of the proposed algorithm is the local gradient trackers to track the average of the variance-reduced gradient estimators across agents. Moreover, we provide the theoretical guarantees that the proposed algorithm finds an $\epsilon$-approximate stationary point of the non-concave performance function.

\subsection{Related Work}
\subsubsection{Multi-agent reinforcement learning}
Existing researches on MARL are mainly based on the framework of Markov games proposed by Littman \cite{littman1994markov}. However, most of these early works only consider the tabular setting, which suffers from the curse of dimensionality with large action-state space. To solve this issue, multi-agent deep reinforcement learning has received increasing attention \cite{hernandez2019survey, nguyen2020deep, foerster2016learning, lowe2017multi, foerster2019bayesian, vinyals2019grandmaster}, where deep neural networks are trained to approximate the policy or value function. But most of these efforts focus on empirical performance while are lack of theoretical guarantees. In addition, they mainly use the centralized scheme or ``centralized training and decentralized execution", and ignore the importance of information exchange across agents.

Recently, distributed MARL approaches with theoretical guarantees have been studied \cite{lee2020optimization}. Macua et al. \cite{macua2014distributed} applied diffusion strategies to develop a fully distributed gradient temporal-difference (GTD) algorithm, then provided a mean-square-error performance analysis and established the convergence under constant step size updates. Besides, Lee et al. \cite{lee2018primal} studied a new class of distributed GTD algorithm based on primal-dual iterations, and proved that it almost surely converged to a set of stationary points using ODE-based methods. In addition, Wai et al. \cite{wai2018multi} proposed a decentralized primal-dual optimization algorithm with a double averaging update scheme to solve the policy evaluation problem in MARL, and established the global geometric rate of convergence. Doan et al. \cite{doan2019finite} proposed a distributed TD(0) algorithm for solving the policy evaluation problem in MARL, and provided a finite-time convergence analysis over time-varying networks.
Recently, Cassano et al. \cite{cassano2020multi} developed a fully decentralized multi-agent algorithm for policy evaluation by combining off-policy learning and linear function approximation, and provided the linear convergence analysis.
{While the above works mainly focus on the policy evaluation problem, which estimates the value function of a given policy without finding the optimal policy.} The proposed distributed policy gradient algorithm in this paper can obtain the optimal policy for decision-making tasks in MARL.

\subsubsection{Policy gradient}
To reduce the high variance of policy gradient, traditional approaches introduced baseline functions \cite{sutton2000policy} or used function approximation for estimating the value function (see e.g., the actor-critic algorithms \cite{konda2000actor, bhatnagar2009natural}). The idea of variance reduction was first proposed to accelerate stochastic optimization. Variance reduction algorithms such as SAG \cite{schmidt2017minimizing}, SVRG \cite{johnson2013accelerating}, SAGA \cite{defazio2014saga} and SARAH \cite{nguyen2017sarah} achieved superior performance for both convex and non-convex problems \cite{allen2016variance, reddi2016stochastic, li2018simple, ge2019stabilized}. Motivated by the advantages in stochastic optimization, some researchers incorporated the variance reduction into policy gradient. Papini et al. \cite{papini2018stochastic} proposed a stochastic variance-reduced policy gradient (SVRPG) by borrowing the idea of SVRG and provided convergence guarantees. Xu et al. \cite{xu2020improved} established an improved convergence analysis of SVRPG \cite{papini2018stochastic} and showed it can find an $\epsilon$-approximate stationary point. They further proposed a stochastic recursive variance reduced policy gradient (SRVR-PG) \cite{xu2020sample} to reduce the sample complexity. {However, these approaches with theoretical guarantees are only proposed for single-agent. Since multiple agents learn and update in a common environment, the inherent challenges in multi-agent settings would make these methods unsuitable.} This paper proposes a distributed policy gradient with variance reduction and gradient tracking in MARL.

\subsubsection{Distributed stochastic optimization}
Recent breakthrough in stochastic variance reduced methods has made it possible to achieve better performance in distributed stochastic optimization problems. Decentralized double stochastic averaging (DSA) gradient algorithm \cite{mokhtari2016dsa} was the first decentralized variance reduction method by combining EXTRA \cite{ShiLWY15} and SAGA \cite{defazio2014saga}, and linear convergence was shown for the strong convexity of local functions and Lipschitz continuity of local gradients.
Later on, Yuan et al. \cite{yuan2019variance} developed a fully decentralized variance-reduced stochastic gradient algorithm named diffusion-AVRG (amortized variance-reduced gradient), which displayed a linear convergence to the exact solution and was more memory efficient than DSA \cite{mokhtari2016dsa}. Xin et al.\cite{xin2020decentralized} proposed a unified framework for variance-reduced decentralized stochastic methods that utilize gradient tracking \cite{qu2017harnessing, nedic2017achieving, pu2020distributed} to obtain robust performance. In particular, two algorithms GT-SAGA and GT-SVRG \cite{xin2020variance} showed accelerated linear convergence without computing the expensive dual gradients. {However, the above distributed stochastic optimization approaches with variance reduction are only applicable for convex functions.} Since the objective functions of policy gradient are non-concave, the aforementioned  methods cannot be directly used for the distributed policy gradient.

Recently, Zhang et al. \cite{zhang2019decentralized} proposed a decentralized stochastic gradient tracking algorithm for non-convex empirical risk minimization problems, and showed  {that} the convergence can be comparable to the centralized SGD method.
Besides, Xin et al. \cite{xin2021improved} developed a decentralized stochastic gradient descent algorithm with gradient tracking for online stochastic non-convex optimization, and established the convergence with constant and decaying step sizes.
Since {policy gradient suffers from the challenge of distribution shift},  the above methods \cite{zhang2019decentralized, xin2021improved} cannot be directly  applied to the policy gradient in MARL. 

\subsection{Contribution}
The main contributions of this paper are summarized as follows.
We first formulate the policy gradient in MARL as a distributed stochastic optimization problem. We then propose a distributed stochastic policy gradient algorithm with variance reduction and gradient tracking, where the importance weight is incorporated to deal with the distribution shift problem in the sampling process such that it can maintain the unbiased property of the variance-reduced gradient estimators.
We provide the theoretical guarantees that the proposed algorithm can converge to a stationary point of the non-concave performance function, and show the convergence rate in terms of the number of iterations, the mini-batch size, the epoch size, the problem parameters,  and the network topology.
We further establish that the sample and communication complexity are $O\left(\frac{1}{\epsilon^{\frac{3}{2}}}\right)$ and $O\left(\left|\mathcal{E}\right|\frac{1}{\epsilon}\right)$, respectively, for finding an $\epsilon$-approximate stationary point such that $\mathbb{E}[\left\|\nabla J \left({\boldsymbol{\theta}}\right)\right\|^2] \le \epsilon$.

\subsection{Notation and Organization}
Throughout the rest of this paper, we use lowercase bold letters $\boldsymbol{x} \in \mathbb{R}^n$ to denote the vectors and uppercase bold letters $\boldsymbol{Y} \in \mathbb{R}^{m \times n}$ to denote the matrices.
We use $\|\boldsymbol{x}\|$ and $\left\|\boldsymbol{Y}\right\|$
to represent the Euclidean norm of vector $\boldsymbol{x}$ and the induced 2-norm of matrix $\boldsymbol{Y}$, respectively.  {We denote by} $\boldsymbol{I}_d$ the $d \times d$ identity matrix, and  {by} $\boldsymbol{1}_{d}$ the $d$-dimensional all one column vector. The Kronecker product of two matrices $\boldsymbol{A}$, $\boldsymbol{B} \in \mathbb{R}^{d \times d}$ is denoted by $\boldsymbol{A} \otimes \boldsymbol{B}$.
We   use $\mathcal{A}$ to denote a finite set.
Let $\mathbb{E}_{g}\left[ \cdot \right]$ denote  the expected value with respect to distribution $g$. Denote $a_{n}=O\left(b_{n}\right)$ if $a_{n} \le Cb_{n}$ for some constant $C > 0$.

The paper is organized as follows. In Section II, we introduce the problem formulation of policy gradient in MARL. We propose a distributed policy gradient algorithm with variance reduction and gradient tracking in Section III, and explore its convergence properties in Section IV. We provide experimental results in Section V. Section VI concludes this paper.

\section{Problem Formulation}
In this section, we first introduce the multi-agent Markov decision process (MDP), and then formulate the policy gradient in MARL as a distributed stochastic optimization problem over networks.

\subsection{Multi-agent MDP}
A multi-agent MDP is characterized by a tuple $\left( \mathcal{S},\left\{\mathcal{A}_{i}\right\}_{i=1}^{n}, P,\left\{R_{i}\right\}_{i=1}^{n}, \gamma\right)$, where $n$ denotes the number of agents, $\mathcal{S}$ is the environmental state space, and $\mathcal{A}_{i}$ is the action space of agent $i$.
Denote $\mathcal{A}=\mathcal{A}_{1} \times \cdots \mathcal{A}_{n}$ as the joint action space of all agents.
In addition, let $\boldsymbol{s} \in \mathcal{S}$ be the global state shared by all agents, $\boldsymbol{a}_{i} \in \mathcal{A}_{i}$ be the action executed by agent $i$, and $\boldsymbol{a}=\left(\boldsymbol{a}_{1}, \ldots, \boldsymbol{a}_{n}\right) \in \mathcal{A}_{1} \times \cdots \times \mathcal{A}_{n}$ be the joint action, respectively. Then the reward function $R_{i}: \mathcal{S} \times \mathcal{A} \rightarrow \mathbb{R}$ is the local reward of agent $i$, and $P\left(\boldsymbol{s}^{\prime}|\boldsymbol{s}, \boldsymbol{a}\right)$ represents the state transition probability from state $\boldsymbol{s}$ to $\boldsymbol{s}^{\prime}$ after taking a joint action $\boldsymbol{a}$. $\gamma \in \left( 0,1 \right) $ is the discount factor.

At time $t$, each agent $i$ selects its action $\boldsymbol{a}_i^t$ given state $\boldsymbol{s}^t$ following a local policy $\pi_i:\mathcal{S} \times \mathcal{A}_i \rightarrow[0,1]$, where $\pi_i \left(\boldsymbol{a}_i|\boldsymbol{s}\right)$ is the probability that agent $i$ selects action $\boldsymbol{a}_i$ at state $\boldsymbol{s}$. Let $\pi: \mathcal{S} \times \mathcal{A} \rightarrow[0,1]$ be a joint policy, and $\pi (\boldsymbol{a}|\boldsymbol{s})=\prod_{i=1}^{n} \pi_i \left(\boldsymbol{a}_i|\boldsymbol{s}\right)$  {be} the probability to choose a joint action $\boldsymbol{a}$ at state $\boldsymbol{s}$.

In episodic task, a trajectory $\tau$ is a sequence of state-action pairs $\left\{\boldsymbol{s}^0,\boldsymbol{a}^0,\boldsymbol{s}^1,\boldsymbol{a}^1,\ldots \boldsymbol{s}^{H-1},\boldsymbol{a}^{H-1},\boldsymbol{s}^{H}\right\}$ observed by following any stationary policy $\pi$ up to time horizon $H$. In the cooperative settings, the goal is to
maximize the collective return of all agents, that is  $R_c(\tau)=\sum_{i=1}^{n}\sum_{h=0}^{H-1}\gamma^{h}R_i\left(\boldsymbol{s}^h,\boldsymbol{a}^h\right)$  {at} trajectory $\tau$.
We assume that both $\boldsymbol{a}$ and $\boldsymbol{s}$ are available to all agents, whereas the rewards $R_{i}$ are observed only locally by  {agent $i$.}
Therefore, it is essential for agents to cooperate with other agent to solve the multi-agent problem based on local information.

\subsection{Multi-agent Policy Gradient}
Suppose the policy is parameterized by an unknown parameter $\boldsymbol{\theta} \in \mathbb{R}^d$ denoted by $\pi_{\boldsymbol{\theta}}$, which is differentiable with respect to  $\boldsymbol{\theta}$.
We denote the trajectory distribution induced by policy $\pi_{\boldsymbol{\theta}}$ as $p\left(\tau \left|\boldsymbol{\theta}\right.\right)$,
\begin{equation}\label{eq1}
p\left(\tau \left|\boldsymbol{\theta}\right.\right) = d\left(\boldsymbol{s}^{0}\right) \prod_{h=0}^{H-1} \pi_{\boldsymbol{\theta}}\left(\boldsymbol{a}^h | \boldsymbol{s}^h\right) P\left(\boldsymbol{s}^{h+1} |\boldsymbol{s}^{h}, \boldsymbol{a}^{h} \right),
\end{equation}
where $d\left(\boldsymbol{s}^{0}\right)$ is the distribution of initial state.
The performance function under policy $\pi_{\boldsymbol{\theta}}$ is measured by the expected discounted return $J\left(\boldsymbol{\theta}\right)$,
\begin{equation}\label{eq2}
J\left(\boldsymbol{\theta}\right)=\mathbb{E}_{\tau \sim p\left(\cdot \left|\boldsymbol{\theta}\right.\right)}\left[R_c(\tau)\right].
\end{equation}
Maximizing the performance function $J\left(\boldsymbol{\theta}\right)$ can be obtained through gradient ascent algorithm $\boldsymbol{\theta}^{k+1}=\boldsymbol{\theta}^{k}+\alpha {\nabla}_{\boldsymbol{\theta}}J(\boldsymbol{\theta}^{k})$, where $\alpha > 0$ is the step size and the gradient $\nabla_{\boldsymbol{\theta}} J(\boldsymbol{\theta})$ is calculated as follows,
\begin{equation}\label{eq3}
\nabla_{\boldsymbol{\theta}} J(\boldsymbol{\theta})
=\mathbb{E}_{\tau \sim p\left(\cdot \left|\boldsymbol{\theta}\right.\right)}\left[\nabla_{\boldsymbol{\theta}} \log \pi_{\boldsymbol{\theta}}(\tau) R_c(\tau)\right].
\end{equation}
It is difficult to compute the exact gradient in (\ref{eq3}) because the trajectory distribution is unknown.
In practice, an approximate gradient estimator $\hat{\nabla}_{\boldsymbol{\theta}}J(\boldsymbol{\theta})$ using a batch of sampled trajectories $\{\tau_j\}_{j=0}^{M}$ from $\pi_{\boldsymbol{\theta}}$ is applied,
\begin{equation}\label{eq4}
\hat{\nabla}_{\boldsymbol{\theta}}J(\boldsymbol{\theta})
=\frac{1}{M}\sum_{j=1}^{M}g\left(\tau_j|\boldsymbol{\theta}\right),
\end{equation}
where $g\left(\tau_j|\boldsymbol{\theta}\right)$ is an estimator of $\nabla_{\boldsymbol{\theta}} J(\boldsymbol{\theta})$ using trajectory $\tau_j$.
Then the stochastic gradient ascent algorithm below is typically utilized to update the policy parameter $\boldsymbol{\theta}$,
\begin{equation}\label{eq5}
\boldsymbol{\theta}^{k+1}=\boldsymbol{\theta}^{k}+\alpha\hat{\nabla}_{\boldsymbol{\theta}}J(\boldsymbol{\theta}^{k}).
\end{equation}
The widely used unbiased estimators of policy gradient include REINFORCE \cite{williams1992simple} and G(PO)MDP \cite{baxter2001infinite}.
Since G(PO)MDP has lower variance than REINFORCE and the variance is independent of $H$, we use G(PO)MDP estimator as the policy gradient estimator, i.e,
\begin{align}
&g\left(\tau_j| \boldsymbol{\theta}\right) \notag \\
= &\sum_{h=0}^{H-1} \left(\sum_{t=0}^{h}\nabla_{\boldsymbol{\theta}}\log \pi_{\boldsymbol{\theta}}(\boldsymbol{a}^{t}_{(j)} \left| \boldsymbol{s}^{t}_{(j)}\right.)\right)\left(\gamma^{h}R(\boldsymbol{s}^{h}_{(j)},\boldsymbol{a}^{h}_{(j)})-b^{h}\right), \label{eq6}
\end{align}
where $\boldsymbol{a}^{t}_{(j)}$ and $\boldsymbol{s}^{t}_{(j)}$ are the  {joint} action and state at time $t$ in trajectory $j$, respectively, and $b$ is the constant baseline.

Since each agent $i$ can only observe its local reward $R_i$, it is unable to obtain $R_c\left(\tau\right)$.
We define the local discounted cumulative reward of agent $i$ over the trajectory $\tau$ as
$R_i(\tau)=\sum_{h=0}^{H-1}\gamma^{h}R_i\left(\boldsymbol{s}^h,\boldsymbol{a}^h\right)$, which is private to agent $i$.
Maximizing the performance function $J(\boldsymbol{\theta})$ in (\ref{eq2}) is equivalent to solving
\begin{equation}\label{eq7}
\max J(\boldsymbol{\theta})=\max \sum_{i=1}^{n} J_i(\boldsymbol{\theta}),
\end{equation}
where $J_i(\boldsymbol{\theta})=\mathbb{E}_{\tau \sim p\left(\cdot\left|\boldsymbol{\theta}\right.\right)}\left[R_i(\tau)\right]$.

We focus on the fully distributed manner where $n$ agents need to reach a consensus on $\boldsymbol{\theta}$ without sharing the local performance function $J_i(\boldsymbol{\theta})$.
Suppose that agents can exchange information through a communication graph $\mathcal{G}=\left(\mathcal{V}, \mathcal{E}\right)$, where $\mathcal{V}=\{1,\ldots,n\}$ is set of nodes, $\mathcal{E}$ is the set of edges, and $\left(i, j\right) \in \mathcal{E}$ if and only if nodes $i$ and $j$ can communicate with each other. The set $\mathcal{N}_{i}=\{j | \left(i, j\right) \in \mathcal{E}\}$ is the neighbors of node $i$.
The corresponding adjacency matrix is $\boldsymbol{W} = \{w_{ij}\} \in \mathbb{R}^{n \times n}$, where $w_{ij} > 0$ if $j \in \mathcal{N}_{i}$ and $w_{ij} = 0$, otherwise.

\section{Distributed policy gradient with variance reduction and gradient tracking}
 Since the stochastic gradient estimator of the performance function has a high variance, which destabilizes and decelerates the convergence. We will incorporate the variance reduction technique into the gradient estimation, and combine it with the distributed gradient tracking for policy gradient in collaborative MARL.
 In addition, since the distribution of sampled trajectories shifts with the update of policy parameter $\boldsymbol{\theta}$, and this problem introduces a bias to the gradient estimator. We will use the importance weight method to correct the distribution shift.

We now introduce the distributed policy gradient with variance reduction and gradient tracking shown in Algorithm \ref{Alg1}.
The proposed algorithm consists of $S$ epochs. The reference policy of agent $i$ at the $s$-th epoch is denoted by $\tilde{\boldsymbol{\theta}}^s_i$. After initialization, we sample $M$ trajectories $\{\tilde{\tau}_{i,j}\}_{j=1}^{M}$ from $\tilde{\boldsymbol{\theta}}^s_i$ for agent $i$ to compute the gradient estimator $\tilde{\mu}_{i}^{s}$ shown in Line 4 of Algorithm \ref{Alg1}.

At the $k$-th iteration within the $s$-th epoch, the current policy of agent $i$ is denoted by  {${\boldsymbol{\theta}}_{i,k}^{s}$}. Each agent samples $B$ trajectories $\{\tau_{i,b}\}_{b=1}^{B}$ from ${\boldsymbol{\theta}}_{i,k+1}^{s+1}$ and estimates the gradient using SVRG as follows,
\begin{align}
\boldsymbol{v}_{i,k+1}^{s+1}=\tilde{\mu}_{i}^{s}
+\frac{1}{B}&\sum_{b=1}^{B}\left(g_i\left(\tau_{i,b}\left|{\boldsymbol{\theta}}_{i,k+1}^{s+1}\right.\right)\right. \notag \\
&\left.-\omega\left(\tau_{i,b}\left|{\boldsymbol{\theta}}_{i,k+1}^{s+1}\right.,\tilde{\boldsymbol{\theta}}_i^{s}\right)
g_i\left(\tau_{i,b}\left|\tilde{\boldsymbol{\theta}}_i^{s}\right.\right)\right), \label{eq8}
\end{align}
where $\omega\left(\tau\left|{\boldsymbol{\theta}}_{i,k+1}^{s+1}\right.,\tilde{\boldsymbol{\theta}}_{i}^{s}\right)$ is the importance weight from the current policy $\boldsymbol{\theta}^{s+1}_{i,k+1}$ to the reference policy $\tilde{\boldsymbol{\theta}}^{s}_{i}$, and is defined by
\begin{equation}\label{eq9}
\omega\left(\tau\left|{\boldsymbol{\theta}}_{i,k+1}^{s+1}\right.,\tilde{\boldsymbol{\theta}}_{i}^{s}\right)
=\frac{p\left(\tau\left|\tilde{\boldsymbol{\theta}}_{i}^{s}\right.\right)}
{p\left(\tau\left|{\boldsymbol{\theta}}_{i,k+1}^{s+1}\right.\right)}. 
\end{equation}
It has been proved in \cite{papini2018stochastic, xu2020improved, xu2020sample} that
\begin{equation}\label{eq10}
\begin{aligned}
&\mathbb{E}_{\tau \sim p\left(\cdot \left| \boldsymbol{\theta}_{i,k+1}^{s+1} \right.\right)}\left[\omega\left(\cdot\left|{\boldsymbol{\theta}}_{i,k+1}^{s+1}\right.,\tilde{\boldsymbol{\theta}}^{s}\right)g_i\left(\cdot \left|\tilde{\boldsymbol{\theta}}_i^{s}\right.\right)\right]\\
=&\mathbb{E}_{\tau \sim p\left(\cdot \left| \tilde{\boldsymbol{\theta}}_{i}^{s} \right.\right)}\left[g_i\left(\cdot \left|\tilde{\boldsymbol{\theta}}_i^{s}\right.\right)\right].
\end{aligned}
\end{equation}
It can remove the inconsistency caused by the dynamic trajectory distribution and ensure that $\boldsymbol{v}_{i,k+1}^{s+1}$ is an unbiased estimator of  {$\nabla J_{i}(\boldsymbol{ \theta }_{i,k+1}^{s+1})$ }\cite{papini2018stochastic}.
The term $\tilde{\mu}_{i}^{s}-\frac{1}{B}\sum_{b=1}^{B}\left(\omega\left(\tau_{i,b}\left|{\boldsymbol{\theta}}_{i,k+1}^{s+1}\right.,\tilde{\boldsymbol{\theta}}_i^{s}\right)g_i\left(\tau_{i,b}\left|\tilde{\boldsymbol{\theta}}_i^{s}\right.\right)\right)$ can be seen as a correction to the sub-sampled gradient estimator $\frac{1}{B}\sum_{b=1}^{B}g_i\left(\tau_{i,b}\left|{\boldsymbol{\theta}}_{i,k+1}^{s+1}\right.\right)$.

In the gradient tacking part,
we incorporate an auxiliary variable $\boldsymbol{y}_{i}$ to track the average of local SVRG gradient estimator $\boldsymbol{v}_i$ across the agents.
After receiving the parameter ${\boldsymbol{\theta}}_{r,k}^{s+1}$ from its neighbors $r \in \mathcal{N}_i$,   agent $i$ updates ${\boldsymbol{\theta}}_{i,k+1}^{s+1}$ via
\begin{equation}\label{eq11}
{\boldsymbol{\theta}}_{i,k+1}^{s+1} =\sum_{r \in \mathcal{N}_i}w_{ir}{\boldsymbol{\theta}}_{r,k}^{s+1}+\alpha \boldsymbol{y}_{i,k}^{s+1}.
\end{equation}
Then each agent refines the mix of all available gradient trackers $\boldsymbol{y}_{r,k}^{s+1}$ with local gradient estimator $\boldsymbol{v}_{i,k+1}^{s+1}$ by
\begin{equation}\label{eq12}
\boldsymbol{y}_{i,k+1}^{s+1} =\sum_{r \in \mathcal{N}_i}w_{ir}\boldsymbol{y}_{r,k}^{s+1}+\boldsymbol{v}_{i,k+1}^{s+1}-\boldsymbol{v}_{i,k}^{s+1}.
\end{equation}
Thus, SVRG and gradient tracking can jointly learn the global gradient estimator at each agent asymptotically.

Finally, we select the $\boldsymbol{\theta}_{i,\text{out}}$ uniformly at random among all the $\boldsymbol{\theta}_{i,k}^{s}$ instead of setting it to the final value.

\begin{algorithm} [H]
\caption{{Distributed policy gradient with variance reduction and gradient tracking at each agent $i$ }}\label{Alg1}
\textbf{Input}: number of epochs $S$, epoch size $K$, weight matrix $\{w_{ir}\}_{r\in \mathcal{N}_i}$ of node $i$, step size $\alpha$, batch size $M$, mini-batch size $B$, policy gradient estimator $g_{i}$, and initial parameter $\tilde{\boldsymbol{\theta}}^0_i=\boldsymbol{\theta}_{i,K}^{0}=\boldsymbol{\theta}_{i}(0)$, $\boldsymbol{y}_{i,0}^{1}=\boldsymbol{v}_{i,0}^1=\nabla J_i\left(\tilde{\boldsymbol{\theta}}_i^0\right)$.
	\begin{enumerate}
		\item[1:] \textbf{for} $s=0,1,\dots to \quad S-1$, \textbf{do}
		\item[2:] \quad Initialize ${\boldsymbol{\theta}}_{i,0}^{s+1} = \tilde{\boldsymbol{\theta}}_{i}^{s}=\boldsymbol{\theta}_{i,K}^{s}$
		\item[3:] \quad Sample $M$ trajectories $\{\tilde{\tau}_{i,j}\}$ from $p\left(\cdot\left|\tilde{\boldsymbol{\theta}}_i^{s}\right.\right)$
		\item[4:] \quad Compute $ \tilde{\mu}_{i}^{s}=\frac{1}{M}\sum_{j=1}^{M}g_i\left(\tilde{\tau}_{i,j}\left|\tilde{\boldsymbol{\theta}}_i^{s}\right.\right)$
		\item[5:] \quad \textbf{for} $k=0,1,\dots to \quad K-1$, \textbf{do}
		\item[6:] \qquad Update  ${\boldsymbol{\theta}}_{i,k+1}^{s+1}$ according to (\ref{eq11})	
		\item[7:] \qquad Sample $B$ trajectories $\{\tau_{i,b}\}$ from $p\left(\cdot\left|{\boldsymbol{\theta}}_{i,k+1}^{s+1}\right.\right)$
		\item[8:] \qquad Compute the gradient estimator
		$\boldsymbol{v}_{i,k+1}^{s+1}$ via (\ref{eq8})		
		\item[9:] \qquad Update $\boldsymbol{y}_{i,k+1}^{s+1}$ according to (\ref{eq12})
		\item[10:] \quad \textbf{end for}
		\item[11:] \quad Set $\boldsymbol{y}_{i,0}^{s+2}=\boldsymbol{y}_{i,K}^{s+1}$, $\boldsymbol{v}_{i,0}^{s+2}=\boldsymbol{v}_{i,K}^{s+1}$
		\item[12:] \textbf{end for}
		\item[13:] \textbf{return} ${\boldsymbol{\theta}}_{i,\text{out}}$ uniformly chosen from $\{{\boldsymbol{\theta}}_{i,k}^{s+1}\}$ for $k=0,\ldots,K-1$; $s=0,\ldots S-1$
	\end{enumerate}
\end{algorithm}

\section{Theoretical results }
In this section, we provide the theoretical analysis of Algorithm \ref{Alg1} for non-concave performance function of policy gradient in MARL.
\subsection{Assumptions}
The convergence results are established under the following assumptions.
\begin{Asu}\label{Asu1}
Let $\pi_{\boldsymbol{\theta}}\left(\boldsymbol{a} \left| \boldsymbol{s}\right.\right)$ be the policy parameterized by $\boldsymbol{\theta}$. For all $\boldsymbol{a} \in \mathcal{A}$ and $\boldsymbol{s} \in \mathcal{S}$, there exist constants $G, F >0$ such that the gradient and Hessian matrix of $\operatorname{log}\pi_{\boldsymbol{\theta}}\left(\boldsymbol{a} \left|\boldsymbol{s}\right.\right)$ satisfy
\begin{equation*}
\|\nabla_{\boldsymbol{\theta}}\operatorname{log}\pi_{\boldsymbol{\theta}}\left(\boldsymbol{a} \left|\boldsymbol{s}\right.\right)\|\leq G,\qquad \|\nabla^2_{\boldsymbol{\theta}}\operatorname{log}\pi_{\boldsymbol{\theta}}\left(\boldsymbol{a} \left|\boldsymbol{s}\right.\right)\| \leq F.
\end{equation*}
\end{Asu}
Since the policy function is often required to be twice differentiable in practice, Assumption \ref{Asu1} holds.
This assumption is also used in the existing works of policy gradient, see
\cite{papini2018stochastic, xu2020improved, xu2020sample}. The following assumption requires the variance of gradient estimator to be bounded, and is widely made in stochastic policy gradient \cite{papini2018stochastic, xu2020improved, xu2020sample}.
\begin{Asu}\label{Asu2}
Let $g\left(\cdot \left| \boldsymbol{\theta}\right.\right)$ be the gradient estimator given by (\ref{eq6}). There is a constant $V > 0$ such that for any policy $\pi_{\boldsymbol{\theta}}$,
\begin{equation*}
\operatorname{Var}\left[g\left(\cdot \left| \boldsymbol{\theta}\right.\right)\right] \leq V, \quad \forall \boldsymbol{\theta} \in \mathbb{R}^{d}.
\end{equation*}
\end{Asu}
Importance weight is applied to correct the distribution shift in Algorithm \ref{Alg1}.
The next assumption guarantees that the variance of the importance weight  {is} bounded, which is also made in \cite{papini2018stochastic, xu2020improved, xu2020sample}.
\begin{Asu}\label{Asu3}
Let $\omega\left(\cdot \left|\boldsymbol{\theta}_1\right.,\boldsymbol{\theta}_2\right)=\frac{p\left(\cdot \left|\boldsymbol{\theta}_2\right.\right)}{p\left(\cdot \left|\boldsymbol{\theta}_1\right.\right)}$. There is a constant $W < \infty$ such that for each policy pairs
\begin{equation*}
\operatorname{Var}\left[\omega\left(\tau \left| \boldsymbol{\theta}_{1}\right., \boldsymbol{\theta}_{2}\right)\right] \leq W, \quad \forall \boldsymbol{\theta}_{1}, \boldsymbol{\theta}_{2} \in \mathbb{R}^{d}, \tau \sim p\left(\cdot \left| \boldsymbol{\theta}_{1}\right.\right).
\end{equation*}
\end{Asu}
Below is the assumption on the communication graph.
\begin{Asu}\label{Asu4}
The communication graph $\mathcal{G}$ is connected and undirected, and the weight matrix $\boldsymbol{W}=\{w_{ir}\} \in \mathbb{R}^{n \times n}$ is doubly stochastic, i.e., $\boldsymbol{W}{\boldsymbol{1}_{n}}={\boldsymbol{1}_{n}}$ and ${\boldsymbol{1}_{n}^{\rm{T}}}\boldsymbol{W}={\boldsymbol{1}_{n}^{\rm{T}}}$.
\end{Asu}
Under Assumption \ref{Asu4}, let $\sigma$ denote the spectral norm of matrix $\boldsymbol{W}-\frac{1}{n}{\boldsymbol{1}_{n}}{\boldsymbol{1}_{n}^{\rm{T}}}$. Then $0 \leq \sigma < 1$.

\subsection{Preliminaries}
Let ${{\boldsymbol{\theta }}_{k}^{s+1}} = {\left[ {{\boldsymbol{\theta }_{1,k}^{s+1}}^{\rm{T}} ,{\boldsymbol{\theta }_{2,k}^{s+1}}^{\rm{T}}, \ldots, {\boldsymbol{\theta }_{n,k}^{s+1}}^{\rm{T}}} \right]^{\rm{T}}}\in \mathbb{R}^{nd}$, ${{\boldsymbol{y}}_{k}^{s+1}} = {\left[ {{\boldsymbol{y}_{1,k}^{s+1}}^{\rm{T}}, {\boldsymbol{y}_{2,k}^{s+1}}^{\rm{T}}, \ldots, {\boldsymbol{y}_{n,k}^{s+1}}^{\rm{T}}} \right]^{\rm{T}}}\in \mathbb{R}^{nd}$, and ${{\boldsymbol{v}}_{k}^{s+1}} = {\left[ {{\boldsymbol{v}_{1,k}^{s+1}}^{\rm{T}},{\boldsymbol{v}_{2,k}^{s+1}}^{\rm{T}}, \ldots, {\boldsymbol{v}_{n,k}^{s+1}}^{\rm{T}}} \right]^{\rm{T}}}\in \mathbb{R}^{nd}$. Then the distributed policy gradient with variance reduction and gradient tracking in (\ref{eq11}) and (\ref{eq12}) can be written in a compact form as follows,
\begin{subequations}\label{eq13}
\begin{align}
{{\boldsymbol{\theta }}_{k+1}^{s+1}} &= \left({\boldsymbol{W}}\otimes{\boldsymbol{I}_d}\right){\boldsymbol{\theta }}_{k}^{s+1} + \alpha {\boldsymbol{y}_{k}^{s+1}}, \label{eq13a} \\
{{\boldsymbol{y}}_{k+1}^{s+1}} &= \left({\boldsymbol{W}}\otimes{\boldsymbol{I}_d}\right){{\boldsymbol{y}}_{k}^{s+1}} + {{\boldsymbol{v}}_{k+1}^{s+1}} - {{\boldsymbol{v}}_{k}^{s+1}}. \label{eq13b}
\end{align}
\end{subequations}
In the following, we define several auxiliary variables that will help the subsequent convergence analysis.
\begin{subequations}\label{eq14}
\begin{align}
{{\bar{\boldsymbol{ \theta }}}_{k}^{s+1}} &= \frac{1}{n}\sum\limits_{i = 1}^n {{\boldsymbol{\theta }}_{i,k}^{s+1}}  = \frac{1}{n}\left({\boldsymbol{1}_{n}^{\rm{T}}}\otimes{\boldsymbol{I}_d}\right){{\boldsymbol{\theta }}_{k}^{s+1}}, \label{eq14a} \\
{\bar {\boldsymbol{y}}_{k}^{s+1}} &= \frac{1}{n}\sum\limits_{i = 1}^n {{\boldsymbol{y}}_{i,k}^{s+1}}  = \frac{1}{n}\left({\boldsymbol{1}_{n}^{\rm{T}}}\otimes{\boldsymbol{I}_d}\right){{\boldsymbol{y}}_{k}^{s+1}}, \label{eq14b}\\
{\bar {\boldsymbol{v}}_{k}^{s+1}} &= \frac{1}{n}\sum\limits_{i = 1}^n {{\boldsymbol{v}}_{i,k}^{s+1}}  = \frac{1}{n}\left({\boldsymbol{1}_{n}^{\rm{T}}}\otimes{\boldsymbol{I}_d}\right){{\boldsymbol{v}}_{k}^{s+1}}, \label{eq14c} \\
\nabla J\left( {{{\boldsymbol{\theta }}_{k}^{s+1}}} \right) &= {\left[ {\nabla {J_1}{{\left( {{\boldsymbol{\theta }}_{1,k}^{s+1}} \right)}^{\rm{T}}},\ldots, \nabla {J_n}{{\left( {{\boldsymbol{\theta }}_{n,k}^{s+1}} \right)}^{\rm{T}}}} \right]^{\rm{T}}} \in \mathbb{R}^{nd}, \label{eq14d}\\
\overline{\nabla J}\left(\boldsymbol{\theta}_{k}^{s+1} \right)&= \frac{1}{n}\sum\limits_{i = 1}^n {\nabla {J_i}\left( {{\boldsymbol{\theta }}_{i,k}^{s+1}} \right)}  = \frac{1}{n}\left({\boldsymbol{1}_{n}^{\rm{T}}}\otimes{\boldsymbol{I}_d}\right)\nabla J\left( \boldsymbol{\theta }_{k}^{s+1}\right). \label{eq14e}
\end{align}
\end{subequations}
In the rest of the paper, we set $d=1$ for the sake of simplicity.

Next, we present several preliminary results related to the policy gradient methods whose proof can be found in \cite{papini2018stochastic, xu2020improved, xu2020sample}.
The following lemma shows that with Assumption \ref{Asu1}, the Hessian matrix $\nabla^2J(\boldsymbol{\theta})$  of the performance function is bounded. This implies that the performance function $J(\boldsymbol{\theta})$ is $L$-smooth, which is important for analyzing the convergence of non-convex optimization (see e.g., \cite{allen2016variance, reddi2016stochastic, li2018simple, ge2019stabilized}).
\begin{Lem}\label{Lem1}
Suppose Assumption 1 hold. Let $g\left(\tau \left| \boldsymbol{\theta}\right.\right)$ be the G(PD)MDP gradient estimator given by (\ref{eq6}). For any trajectory $\tau \in \mathcal{T}$, we have\\
(1) $J(\boldsymbol{\theta})$ is $L$-smooth, that is $\|\nabla_{\boldsymbol{\theta}}^2 J(\boldsymbol{\theta})\| \leq L$;\\
(2) $g\left(\tau \left| \boldsymbol{\theta}\right.\right)$ is $L_{g}$-Lipschitz continuous, that is $\|g\left(\tau \left|\boldsymbol{\theta}_1\right.\right)-g\left(\tau \left|\boldsymbol{\theta}_2\right.\right)\| \leq L_{g}\|\boldsymbol{\theta}_{1}-\boldsymbol{\theta}_{2}\|, \forall \boldsymbol{\theta}_{1}, \boldsymbol{\theta}_{2} \in \mathbb{R}^{d}$;\\
(3) There is a positive constant $C_g$ such that $\|g\left(\tau \left| \boldsymbol{\theta}\right.\right)\| \leq C_g, \forall \boldsymbol{\theta} \in \mathbb{R}^{d}$. 	
\end{Lem}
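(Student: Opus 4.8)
The plan is to derive all three claims directly from the explicit form of the G(PO)MDP estimator in \eqref{eq6}, exploiting the fact that $g(\tau|\boldsymbol{\theta})$ is a finite sum in which every summand splits into a cumulative score factor $\sum_{t=0}^{h}\nabla_{\boldsymbol{\theta}}\log\pi_{\boldsymbol{\theta}}(\boldsymbol{a}^t|\boldsymbol{s}^t)$ and a discounted-reward factor $\gamma^h R-b^h$ that, for a fixed trajectory $\tau$, does not depend on $\boldsymbol{\theta}$. Throughout I assume (as is standard and implicit in the cited single-agent analyses) that the reward is uniformly bounded, $|R(\boldsymbol{s},\boldsymbol{a})|\le R_{\max}$, and that the baseline is chosen so that $|\gamma^h R-b^h|\le C_r\gamma^h$ for some constant $C_r$; the convergent geometric sum $\sum_{h=0}^{\infty}(h+1)\gamma^h=(1-\gamma)^{-2}$ will appear repeatedly. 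For claim (3), applying the triangle inequality to \eqref{eq6} and bounding each score by $G$ via Assumption \ref{Asu1} gives
\begin{equation*}
\|g(\tau|\boldsymbol{\theta})\|\le\sum_{h=0}^{H-1}\Big\|\sum_{t=0}^{h}\nabla_{\boldsymbol{\theta}}\log\pi_{\boldsymbol{\theta}}(\boldsymbol{a}^t|\boldsymbol{s}^t)\Big\|\,|\gamma^h R-b^h|\le\sum_{h=0}^{H-1}(h+1)G\,C_r\gamma^h\le\frac{G\,C_r}{(1-\gamma)^2},
\end{equation*}
which yields $C_g:=GC_r/(1-\gamma)^2$.

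For claim (2), I would bound the Jacobian $\nabla_{\boldsymbol{\theta}}g(\tau|\boldsymbol{\theta})$ and then conclude Lipschitz continuity by the mean value inequality. Because the reward factor is constant in $\boldsymbol{\theta}$ along a fixed $\tau$, differentiating \eqref{eq6} simply replaces each $\nabla_{\boldsymbol{\theta}}\log\pi_{\boldsymbol{\theta}}$ by the Hessian $\nabla^2_{\boldsymbol{\theta}}\log\pi_{\boldsymbol{\theta}}$, whose norm is at most $F$ by Assumption \ref{Asu1}. The same geometric estimate as above gives $\|\nabla_{\boldsymbol{\theta}}g(\tau|\boldsymbol{\theta})\|\le FC_r/(1-\gamma)^2=:L_g$, and integrating this bound along the segment joining $\boldsymbol{\theta}_1$ and $\boldsymbol{\theta}_2$ delivers $\|g(\tau|\boldsymbol{\theta}_1)-g(\tau|\boldsymbol{\theta}_2)\|\le L_g\|\boldsymbol{\theta}_1-\boldsymbol{\theta}_2\|$.

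Finally, claim (1) follows by differentiating the unbiased identity $\nabla J(\boldsymbol{\theta})=\mathbb{E}_{\tau\sim p(\cdot|\boldsymbol{\theta})}[g(\tau|\boldsymbol{\theta})]$ a second time. Using the likelihood-ratio identity $\nabla_{\boldsymbol{\theta}}p(\tau|\boldsymbol{\theta})=p(\tau|\boldsymbol{\theta})\nabla_{\boldsymbol{\theta}}\log p(\tau|\boldsymbol{\theta})$ to differentiate under the expectation gives
\begin{equation*}
\nabla^2 J(\boldsymbol{\theta})=\mathbb{E}_{\tau}\big[\nabla_{\boldsymbol{\theta}}g(\tau|\boldsymbol{\theta})+g(\tau|\boldsymbol{\theta})\,\nabla_{\boldsymbol{\theta}}\log p(\tau|\boldsymbol{\theta})^{\rm{T}}\big].
\end{equation*}
From \eqref{eq1} we have $\nabla_{\boldsymbol{\theta}}\log p(\tau|\boldsymbol{\theta})=\sum_{h=0}^{H-1}\nabla_{\boldsymbol{\theta}}\log\pi_{\boldsymbol{\theta}}(\boldsymbol{a}^h|\boldsymbol{s}^h)$, so $\|\nabla_{\boldsymbol{\theta}}\log p\|\le HG$; combining this with $\|\nabla_{\boldsymbol{\theta}}g\|\le L_g$ and $\|g\|\le C_g$ from the previous steps and passing norms under the expectation yields $\|\nabla^2 J(\boldsymbol{\theta})\|\le L_g+C_g HG=:L$.

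The geometric-series estimates are routine. The step I expect to require the most care is claim (1): one must justify interchanging $\nabla_{\boldsymbol{\theta}}$ with the expectation $\mathbb{E}_{\tau\sim p(\cdot|\boldsymbol{\theta})}$, whose underlying measure itself depends on $\boldsymbol{\theta}$, so the score-function term $g\,\nabla_{\boldsymbol{\theta}}\log p^{\rm{T}}$ must be retained rather than dropped. The uniform bounds on $g$, on $\nabla_{\boldsymbol{\theta}}\log p$, and on the reward are precisely what legitimizes this interchange and keeps the resulting Hessian bound finite. Since these facts are established in the single-agent analyses of \cite{papini2018stochastic, xu2020improved, xu2020sample}, the argument is essentially an adaptation of theirs to the estimator \eqref{eq6}.
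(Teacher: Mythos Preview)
Your proposal is correct and follows exactly the standard route taken in the references the paper invokes. The paper itself does not prove Lemma~\ref{Lem1}; it simply states that ``the proof can be found in \cite{papini2018stochastic, xu2020improved, xu2020sample}'', so your sketch---bounding $\|g\|$ and $\|\nabla_{\boldsymbol{\theta}}g\|$ termwise via Assumption~\ref{Asu1} and a geometric sum, then obtaining the Hessian bound from $\nabla^2 J=\mathbb{E}[\nabla_{\boldsymbol{\theta}}g+g\,(\nabla_{\boldsymbol{\theta}}\log p)^{\rm T}]$---is precisely the argument those references carry out, including the implicit bounded-reward hypothesis you flagged.
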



The subsequent lemma from \cite{xu2020improved, xu2020sample}  shows that the variance of the importance weight ${\omega \left( {\tau \left| {{\boldsymbol{\theta}_{k+1}^{s+1}}} \right.,{{\tilde{ \boldsymbol{\theta}} }^s}} \right)}$  is proportional to the distance between the current policy ${\boldsymbol{\theta}_{k+1}^{s+1}}$ and the reference policy ${{\tilde{\boldsymbol{\theta}}}^s}$.
\begin{Lem}\label{Lem2}
Under Assumptions 1 and 3, the importance sampling weight ${\omega \left( {\tau \left| {{\boldsymbol{\theta}_{k+1}^{s+1}}} \right.,{{\tilde{ \boldsymbol{\theta}} }^s}} \right)}$ in (\ref{eq9}) satisfies
\begin{equation}\label{eq15}
{\rm{Var}}\left[ {\omega \left( {\tau \left| {{\boldsymbol{\theta}_{k+1}^{s+1}}} \right.,{{\tilde{ \boldsymbol{\theta}}}^s}} \right)} \right] \leq {C_\omega }{\left\| {{\boldsymbol{\theta}_{k+1}^{s+1}} - {{\tilde{\boldsymbol{\theta}}}^s}} \right\|^2},
\end{equation}
where ${C_\omega}=H\left(2HG^2+F\right)\left(W+1\right)$.
\end{Lem}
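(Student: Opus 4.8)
The plan is to reduce the variance to a second moment and then control that second moment by a second-order Taylor expansion in the reference parameter. Write $\boldsymbol{\theta}=\boldsymbol{\theta}_{k+1}^{s+1}$ and $\tilde{\boldsymbol{\theta}}=\tilde{\boldsymbol{\theta}}^s$, and recall that for $\tau\sim p(\cdot|\boldsymbol{\theta})$ the weight $\omega(\tau|\boldsymbol{\theta},\tilde{\boldsymbol{\theta}})=p(\tau|\tilde{\boldsymbol{\theta}})/p(\tau|\boldsymbol{\theta})$ satisfies $\mathbb{E}_{\tau\sim p(\cdot|\boldsymbol{\theta})}[\omega]=1$ identically. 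Hence $\operatorname{Var}[\omega]=\mathbb{E}[\omega^2]-1=m(\tilde{\boldsymbol{\theta}})-1$, where I define $m(\tilde{\boldsymbol{\theta}}):=\mathbb{E}_{\tau\sim p(\cdot|\boldsymbol{\theta})}[\omega^2]=\int p(\tau|\tilde{\boldsymbol{\theta}})^2/p(\tau|\boldsymbol{\theta})\,d\tau$. The first two facts I would verify are $m(\boldsymbol{\theta})=1$ and $\nabla_{\tilde{\boldsymbol{\theta}}}m(\tilde{\boldsymbol{\theta}})|_{\tilde{\boldsymbol{\theta}}=\boldsymbol{\theta}}=0$; both follow from differentiating under the integral and using $\int p(\tau|\tilde{\boldsymbol{\theta}})\,d\tau=1$, since $\nabla_{\tilde{\boldsymbol{\theta}}}m|_{\tilde{\boldsymbol{\theta}}=\boldsymbol{\theta}}=2\int p(\tau|\boldsymbol{\theta})\nabla\log p(\tau|\boldsymbol{\theta})\,d\tau=2\nabla\!\int p=0$ (equivalently, $\tilde{\boldsymbol{\theta}}=\boldsymbol{\theta}$ is the global minimizer of the nonnegative variance).

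With these two vanishing conditions, Taylor's theorem with Lagrange remainder gives $m(\tilde{\boldsymbol{\theta}})-1=\tfrac12(\tilde{\boldsymbol{\theta}}-\boldsymbol{\theta})^{\rm T}\nabla^2_{\tilde{\boldsymbol{\theta}}}m(\boldsymbol{\xi})(\tilde{\boldsymbol{\theta}}-\boldsymbol{\theta})$ for some $\boldsymbol{\xi}$ on the segment joining $\boldsymbol{\theta}$ and $\tilde{\boldsymbol{\theta}}$, so $\operatorname{Var}[\omega]\le\tfrac12\|\nabla^2_{\tilde{\boldsymbol{\theta}}}m(\boldsymbol{\xi})\|\,\|\tilde{\boldsymbol{\theta}}-\boldsymbol{\theta}\|^2$. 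Everything then reduces to bounding the Hessian. Using the log-derivative identities $\nabla p=p\,\nabla\log p$ and $\nabla^2 p=p(\nabla\log p\,\nabla\log p^{\rm T}+\nabla^2\log p)$, I would compute $\nabla^2_{\tilde{\boldsymbol{\theta}}}m(\tilde{\boldsymbol{\theta}})=2\int \frac{p(\tau|\tilde{\boldsymbol{\theta}})^2}{p(\tau|\boldsymbol{\theta})}\big(2\nabla\log p(\tau|\tilde{\boldsymbol{\theta}})\nabla\log p(\tau|\tilde{\boldsymbol{\theta}})^{\rm T}+\nabla^2\log p(\tau|\tilde{\boldsymbol{\theta}})\big)\,d\tau$.

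The remaining work is to bound the integrand. From (\ref{eq1}) the initial distribution and transition kernels cancel in $\log p$, leaving $\nabla_{\tilde{\boldsymbol{\theta}}}\log p(\tau|\tilde{\boldsymbol{\theta}})=\sum_{h=0}^{H-1}\nabla\log\pi_{\tilde{\boldsymbol{\theta}}}(\boldsymbol{a}^h|\boldsymbol{s}^h)$ and likewise for the Hessian; Assumption \ref{Asu1} then yields $\|\nabla\log p\|\le HG$ and $\|\nabla^2\log p\|\le HF$, hence $\|2\nabla\log p\,\nabla\log p^{\rm T}+\nabla^2\log p\|\le 2H^2G^2+HF=H(2HG^2+F)$. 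Pulling this uniform bound out of the integral leaves $2\int p(\tau|\tilde{\boldsymbol{\theta}})^2/p(\tau|\boldsymbol{\theta})\,d\tau=2\,m(\tilde{\boldsymbol{\theta}})$, and since $m(\boldsymbol{\xi})=\operatorname{Var}[\omega(\cdot|\boldsymbol{\theta},\boldsymbol{\xi})]+1\le W+1$ by Assumption \ref{Asu3}, I obtain $\|\nabla^2_{\tilde{\boldsymbol{\theta}}}m(\boldsymbol{\xi})\|\le 2H(2HG^2+F)(W+1)$. Substituting into the Taylor bound gives exactly $\operatorname{Var}[\omega]\le H(2HG^2+F)(W+1)\|\tilde{\boldsymbol{\theta}}-\boldsymbol{\theta}\|^2$, i.e. $C_\omega=H(2HG^2+F)(W+1)$.

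The step I expect to be the main obstacle is the rigorous justification of differentiating under the integral sign (interchanging $\nabla_{\tilde{\boldsymbol{\theta}}}$ with $\int\,d\tau$ twice), together with ensuring that the intermediate point $\boldsymbol{\xi}$ produced by the Lagrange remainder still falls within the scope of Assumption \ref{Asu3}, so that the factor $m(\boldsymbol{\xi})\le W+1$ is legitimate for that particular pair $(\boldsymbol{\theta},\boldsymbol{\xi})$. The boundedness supplied by Assumptions \ref{Asu1} and \ref{Asu3} is precisely what makes these interchanges and the uniform Hessian estimate valid, but this should be stated carefully rather than glossed over.
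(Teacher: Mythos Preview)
Your proof is correct and is essentially the same Taylor-expansion argument found in the references the paper cites; note that the paper itself does not supply a proof of this lemma but simply attributes it to \cite{xu2020improved, xu2020sample}. Your derivation reproduces exactly the constant $C_\omega=H(2HG^2+F)(W+1)$ via the second-order remainder bound on $m(\tilde{\boldsymbol{\theta}})=\mathbb{E}[\omega^2]$, and your cautionary remark about needing Assumption~\ref{Asu3} at the intermediate point $\boldsymbol{\xi}$ is well taken (and harmless, since the assumption is stated for all pairs $\boldsymbol{\theta}_1,\boldsymbol{\theta}_2$).
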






\subsection{Convergence Results}
The convergence of the proposed distributed policy gradient with variance reduction is shown in the following theorem, for which the detailed proof is given in the next subsection.
\begin{Thm}\label{Thm1}
Suppose Assumptions 1-4 hold. Consider   Algorithm \ref{Alg1} with the step size $\alpha$ and mini-batch size $B$ satisfying
\begin{align}
0 <  \alpha <& \text{min}\left\{  \frac{\left(1-\sigma^2\right)^2}{24 \sqrt[3]{4 \Psi K^2 L \left(\left(1-\sigma^2\right)^2+24\left(1-\sigma^2\right)\right)}},\right. \notag \\
&\left. \qquad \frac{\left(1-\sigma^2\right)^2}{96\sqrt{3 \Psi K}},\frac{1}{2L} \right\}, \notag \\
B \ge& \text{max} \left\{ B_1(\alpha), B_2(\alpha)\right\}, \label{eq_alpha_B}
\end{align}
where
\begin{align}
&\Psi =2(C_{g}^{2}C_{\omega}+L_{g}^2), \label{eq_psi} \\
&B_1(\alpha) = \frac{54 \alpha \Psi K^2 }{n L }, \label{eq_B1}\\
&B_2(\alpha) = \frac{36 \alpha^3 \Psi^2 K^2 \left(\left(1-\sigma^2\right)^2+24\left(1-\sigma^2\right)\right)}{n L\frac{\left(1-\sigma^2\right)^6}{4608}- \tilde{\Psi}(\alpha)}, \label{eq_B2}
\end{align}
with $\tilde{\Psi}(\alpha)=12\alpha^3 \Psi K^2 n L^2 \left(\left(1-\sigma^2\right)^2+24\left(1-\sigma^2\right)\right)$.

Then the output $\bar{\boldsymbol{\theta}}_{\text{out}}=\frac{1}{n}\sum_{i=1}^{n}{\boldsymbol{\theta}}_{\text{out}}$ satisfies
\begin{align}
\mathbb{E} \left[\left\|\nabla J
\left(\bar{\boldsymbol{\theta}}_{\text{out}}\right)\right\|^{2}\right]
\le& \frac{2\left(J\left ( {\boldsymbol{\theta}}^{*} \right ) -  J\left ( \bar{\boldsymbol{\theta}}_{0}^{1} \right )\right)}{\alpha KS} +\frac{2 V}{Mn} \notag \\
&+ \frac{2 v_1}{ nKS} \mathbb{E} \left[\left\|\boldsymbol{\theta}_{0}^{1}
-{\boldsymbol{1}_{n}}\bar{\boldsymbol{\theta}}_{0}^{1}\right\|^2\right]  \notag \\
& + \frac{2 \alpha^2 v_2}{ n KS }  \mathbb{E} \left[\left\|\boldsymbol{y}_{0}^{1}
-{\boldsymbol{1}_{n}}\bar{\boldsymbol{y}}_{0}^{1}\right\|^2\right],
\end{align}
where $\boldsymbol{\theta}^{*}$ is the maximizer of $J\left(\boldsymbol{\theta}\right)$,
\begin{align}
v_1&=\frac{\frac{1536 \alpha^2 \Psi K \left(L^2+\frac{3 \Psi}
{Bn}\right)}{\left(1-\sigma^2\right)^2}\left(\frac{3}{K}+\frac{17-\sigma^2}{4}-\frac{9216 \alpha^2 \Psi}{\left(1-\sigma^2\right)^4}\right)}{\frac{\left(1-\sigma^2\right)^4}{64}-48\alpha^2\Psi K} \notag \\
&\qquad + \frac{\left(L^2+\frac{3 \Psi}{Bn}\right)\left(1-\sigma^2\right)^2+\frac{3 \alpha K}{B n}}{\frac{\left(1-\sigma^2\right)^4}{64}-48\alpha^2\Psi K}, \notag \\
v_2 &= \frac{\frac{3 \Psi K \left(1-\sigma^2\right)}{B n}+16\left(L^2+\frac{3 \Psi}
{Bn}\right)\left(3+\frac{384\alpha^2\Psi K}{\left(1-\sigma^2\right)^3}\right)}{\frac{\left(1-\sigma^2\right)^4}{64}-48\alpha^2\Psi K}. \notag
\end{align}
\end{Thm}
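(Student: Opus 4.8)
The plan is to track, along the inner loop, three coupled quantities and to fold them into a single Lyapunov-type inequality that telescopes across iterations $k=0,\ldots,K-1$ and epochs $s=0,\ldots,S-1$. The three quantities are the function-value gap $J(\boldsymbol{\theta}^{*})-J(\bar{\boldsymbol{\theta}}_{k}^{s+1})$ at the network-averaged iterate, the iterate consensus error $\|\boldsymbol{\theta}_{k}^{s+1}-\boldsymbol{1}_{n}\bar{\boldsymbol{\theta}}_{k}^{s+1}\|^2$, and the gradient-tracking consensus error $\|\boldsymbol{y}_{k}^{s+1}-\boldsymbol{1}_{n}\bar{\boldsymbol{y}}_{k}^{s+1}\|^2$. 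First I would exploit the double stochasticity of $\boldsymbol{W}$ from Assumption \ref{Asu4}: averaging the compact recursions \eqref{eq13a}--\eqref{eq13b} and using the initialization $\boldsymbol{y}_{i,0}^{1}=\boldsymbol{v}_{i,0}^{1}$ gives, by induction (and consistently across the epoch carry-overs), the identity $\bar{\boldsymbol{y}}_{k}^{s+1}=\bar{\boldsymbol{v}}_{k}^{s+1}$, so the averaged iterate obeys the plain recursion $\bar{\boldsymbol{\theta}}_{k+1}^{s+1}=\bar{\boldsymbol{\theta}}_{k}^{s+1}+\alpha\bar{\boldsymbol{v}}_{k}^{s+1}$. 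Applying the $L$-smoothness descent lemma of Lemma \ref{Lem1}(1), taking conditional expectation, and using the unbiasedness \eqref{eq10} so that $\mathbb{E}[\bar{\boldsymbol{v}}_{k}^{s+1}]=\overline{\nabla J}(\boldsymbol{\theta}_{k}^{s+1})$ of \eqref{eq14e}, I obtain a descent inequality whose dominant negative term is $\tfrac{\alpha}{2}\|\nabla J(\bar{\boldsymbol{\theta}}_{k}^{s+1})\|^2$; here the polarization identity $\langle a,b\rangle=\tfrac12(\|a\|^2+\|b\|^2-\|a-b\|^2)$ splits the inner product, and the resulting cross term $\|\overline{\nabla J}(\boldsymbol{\theta}_{k}^{s+1})-\nabla J(\bar{\boldsymbol{\theta}}_{k}^{s+1})\|^2$ is absorbed by the $L$-Lipschitz gradient into the iterate consensus error.

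Next I would establish the variance-reduction bound for the SVRG estimator \eqref{eq8}. Combining Lemma \ref{Lem2} (its inequality \eqref{eq15}) with the Lipschitz and boundedness properties of Lemma \ref{Lem1}(2)--(3), the per-agent estimator satisfies $\mathbb{E}\|\boldsymbol{v}_{i,k}^{s+1}-\nabla J_{i}(\boldsymbol{\theta}_{i,k}^{s+1})\|^2\le\tfrac{\Psi}{B}\|\boldsymbol{\theta}_{i,k}^{s+1}-\tilde{\boldsymbol{\theta}}_{i}^{s}\|^2+\tfrac{V}{M}$, with $\Psi$ as in \eqref{eq_psi}; this is the origin of both $\Psi$ and the $\tfrac{2V}{Mn}$ floor in the final bound. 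Since the reference point is reset to $\boldsymbol{\theta}_{i,0}^{s+1}$ at the start of each epoch, the distance $\|\boldsymbol{\theta}_{i,k}^{s+1}-\tilde{\boldsymbol{\theta}}_{i}^{s}\|^2$ is expanded telescopically into a sum of consecutive increments $\|\boldsymbol{\theta}_{t+1}^{s+1}-\boldsymbol{\theta}_{t}^{s+1}\|^2$, each tied through \eqref{eq13a} to $\alpha^2\|\boldsymbol{y}_{t}^{s+1}\|^2$ and hence to the tracking error.

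Then I would derive the two contraction recursions for the consensus errors. Subtracting the averaged recursions from \eqref{eq13a}--\eqref{eq13b} and using that the spectral norm of $\boldsymbol{W}-\tfrac1n\boldsymbol{1}_{n}\boldsymbol{1}_{n}^{\rm T}$ equals $\sigma\in(0,1)$, Young's inequality yields a bound of the schematic form $\|\boldsymbol{\theta}_{k+1}^{s+1}-\boldsymbol{1}_{n}\bar{\boldsymbol{\theta}}_{k+1}^{s+1}\|^2\le\tfrac{1+\sigma^2}{2}\|\boldsymbol{\theta}_{k}^{s+1}-\boldsymbol{1}_{n}\bar{\boldsymbol{\theta}}_{k}^{s+1}\|^2+c\,\alpha^2\|\boldsymbol{y}_{k}^{s+1}-\boldsymbol{1}_{n}\bar{\boldsymbol{y}}_{k}^{s+1}\|^2$, together with an analogous bound for the tracking error in which the increment $\boldsymbol{v}_{k+1}^{s+1}-\boldsymbol{v}_{k}^{s+1}$ is controlled by $L_{g}$-Lipschitz continuity in terms of $\|\boldsymbol{\theta}_{k+1}^{s+1}-\boldsymbol{\theta}_{k}^{s+1}\|$ plus a variance term. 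The main obstacle is that the iterate error feeds the tracking error and vice versa: one must form a weighted combination $\|\boldsymbol{\theta}_{k}-\boldsymbol{1}_{n}\bar{\boldsymbol{\theta}}_{k}\|^2+c_{1}\alpha^2\|\boldsymbol{y}_{k}-\boldsymbol{1}_{n}\bar{\boldsymbol{y}}_{k}\|^2$ and choose $\alpha$ small enough and $B$ large enough that the induced $2\times2$ linear system is jointly contractive with spectral radius strictly below one, while keeping every coefficient nonnegative. The intricate step-size bound \eqref{eq_alpha_B} with $B\ge\max\{B_{1}(\alpha),B_{2}(\alpha)\}$ (from \eqref{eq_B1}--\eqref{eq_B2}), and the explicit constants $v_{1}$ and $v_{2}$, are precisely the quantitative price of enforcing this joint contraction.

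Finally I would assemble the Lyapunov function as a nonnegative linear combination of the function-value gap and the two consensus errors, sum the one-step inequality over $k=0,\ldots,K-1$ and $s=0,\ldots,S-1$, and match the epoch-boundary carry-overs $\tilde{\boldsymbol{\theta}}^{s+1}=\boldsymbol{\theta}_{K}^{s+1}$, $\boldsymbol{y}_{0}^{s+2}=\boldsymbol{y}_{K}^{s+1}$, $\boldsymbol{v}_{0}^{s+2}=\boldsymbol{v}_{K}^{s+1}$ so that the consensus-error and distance terms telescope and cancel. Dividing by $\alpha KS$ and using that $\bar{\boldsymbol{\theta}}_{\text{out}}$ is drawn uniformly over all inner iterates turns $\tfrac{1}{KS}\sum_{s,k}\mathbb{E}\|\nabla J(\bar{\boldsymbol{\theta}}_{k}^{s+1})\|^2$ into $\mathbb{E}\|\nabla J(\bar{\boldsymbol{\theta}}_{\text{out}})\|^2$, leaving exactly the optimality gap $\tfrac{2(J(\boldsymbol{\theta}^{*})-J(\bar{\boldsymbol{\theta}}_{0}^{1}))}{\alpha KS}$, the sampling-variance floor $\tfrac{2V}{Mn}$, and the two initialization-error terms weighted by $v_{1}$ and $v_{2}$.
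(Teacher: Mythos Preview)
Your proposal identifies the right building blocks---the identity $\bar{\boldsymbol{y}}_{k}^{s+1}=\bar{\boldsymbol{v}}_{k}^{s+1}$, the $L$-smooth ascent inequality with the polarization split, the SVRG variance bound producing the constant $\Psi$ and the $V/(Mn)$ floor, and coupled consensus/tracking recursions driven by the spectral gap $\sigma$---and these are exactly the ingredients the paper assembles. In that sense the plan is sound.

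The execution in the paper, however, differs from the single-Lyapunov scheme you sketch. The paper does \emph{not} build one weighted combination that decreases step by step. Instead it (i) writes the consensus/tracking errors as a $2\times2$ linear matrix inequality $\boldsymbol{u}_{k+1}\le\boldsymbol{G}\boldsymbol{u}_{k}+\boldsymbol{b}_{k}$ (Proposition~\ref{Pro1}), (ii) bounds $\boldsymbol{G}^{k}$ explicitly via the eigen-decomposition of $\boldsymbol{G}$ (Lemma~\ref{Lem8}) and sums the resulting geometric-type series to control the accumulated errors over the inner loop (Lemma~\ref{Lem9}) and over epochs (Lemma~\ref{Lem10}), and (iii) handles the distance-from-reference not by telescoping into increments, as you propose, but by the triangle split $\|\boldsymbol{\theta}_{k}^{s+1}-\tilde{\boldsymbol{\theta}}^{s}\|^{2}\le 3\|\boldsymbol{\theta}_{k}^{s+1}-\boldsymbol{1}_{n}\bar{\boldsymbol{\theta}}_{k}^{s+1}\|^{2}+3n\|\bar{\boldsymbol{\theta}}_{k}^{s+1}-\bar{\boldsymbol{\theta}}_{0}^{s+1}\|^{2}+3\|\boldsymbol{\theta}_{0}^{s+1}-\boldsymbol{1}_{n}\bar{\boldsymbol{\theta}}_{0}^{s+1}\|^{2}$, treating the middle piece $\|\bar{\boldsymbol{\theta}}_{k}^{s+1}-\bar{\boldsymbol{\theta}}_{0}^{s+1}\|^{2}$ as the forcing term $\boldsymbol{b}_{k}^{s+1}$ of the matrix recursion. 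The crucial device you do not mention is a Young inequality with the \emph{$k$-dependent} parameter $\eta=2k+1$ applied to $\|\bar{\boldsymbol{\theta}}_{k+1}^{s+1}-\bar{\boldsymbol{\theta}}_{k}^{s+1}\|^{2}$, which manufactures the pair $\tfrac{1}{4(k+1)}\|\bar{\boldsymbol{\theta}}_{k+1}^{s+1}-\bar{\boldsymbol{\theta}}_{0}^{s+1}\|^{2}$ and $\tfrac{1}{2(2k+1)}\|\bar{\boldsymbol{\theta}}_{k}^{s+1}-\bar{\boldsymbol{\theta}}_{0}^{s+1}\|^{2}$; these almost cancel when summed over $k$, and the step-size/batch conditions \eqref{eq_alpha_B}--\eqref{eq_B2} are precisely what makes the residual coefficients in \eqref{eq77}--\eqref{eq78} nonnegative so that those terms can be dropped. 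The very specific forms of $v_{1},v_{2}$ are the output of substituting Lemmas~\ref{Lem9}--\ref{Lem11} into the summed descent inequality and then simplifying $\rho_{4},\rho_{5}$; a pure Lyapunov-contraction argument would typically give different (often cleaner) constants but not the exact ones stated. Your approach is a legitimate alternative in the gradient-tracking literature, but to recover the theorem \emph{as stated} you will need these paper-specific steps rather than a generic contractive combination.
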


\begin{Rem}\label{Rem1}
Theorem \ref{Thm1} shows that the proposed distributed policy gradient algorithm can converge to an approximate first-order stationary point. Note that $S$ is the number of epochs and $K$ is the epoch length, so $KS$ is the total number of iterations of Algorithm \ref{Alg1}. The first term $O\left(\frac{1}{KS}\right)$ characterizes the linear convergence rate, which is coherent with results on SVRG for non-convex optimization problems \cite{allen2016variance, reddi2016stochastic}.
The second term $O\left(\frac{1}{M}\right)$ comes from the stochastic gradient estimator using $M$ trajectories in the outer loop, which matches the results in \cite{papini2018stochastic, xu2020sample}. It is possible to select $M$ large enough to make the second term insignificant. That is, the variance introduced by G(PO)MDP gradient estimator can be neglected. The last two terms are related to the initial values and are analogous to the results of DSGT in \cite{zhang2019decentralized}.
\end{Rem}

In the following remark, we show the effectiveness of the parameters appearing in Theorem \ref{Thm1}.
\begin{Rem}\label{Rem2}
(1) The denominator of (\ref{eq_B2}) is positive, i.e., $n L\frac{\left(1-\sigma^2\right)^6}{4608}- \tilde{\Psi}(\alpha) > 0$ since the step size satisfy $0 < \alpha < \frac{\left(1-\sigma^2\right)^2}{24 \sqrt[3]{4 \Psi K^2 L \left(\left(1-\sigma^2\right)^2+24\left(1-\sigma^2\right)\right)}}$.
(2) Since $ 0 < \alpha <\frac{\left(1-\sigma^2\right)^2}{96\sqrt{3 \Psi K}}$, we conclude that $v_1, v_2 > 0$.
(3) Set the mini-batch size as $\tilde{B} = \frac{54 \alpha \Psi K^2 }{n L - \frac{55296 \alpha^3 \Psi K^2 L^2 n\left(\left(1-\sigma^2\right)^2+24\left(1-\sigma^2\right)\right)}{\left(1-\sigma^2\right)^6 }}$. It is easily seen that $\tilde{B} > B_1(\alpha)$. If $0 < \alpha \le \frac{\left(1-\sigma^2\right)^3}{32 \sqrt{ \Psi   \left(\left(1-\sigma^2\right)^2+24\left(1-\sigma^2\right)\right)}}$, we have $\tilde{B} \ge B_2(\alpha) $. Therefore, we can obtain the refined mini-batch size $B \ge \tilde{B}$.
\end{Rem}

Based on the convergence established in Theorem \ref{Thm1}, we are able to establish the complexity bounds to find an $\epsilon$-approximate stationary point satisfying $\mathbb{E}[\left\|\nabla J \left(\bar{\boldsymbol{\theta}}_{\text{out}}\right)\right\|^2] \le \epsilon$.

\begin{Col}\label{Col1}
Suppose Assumptions 1-4 hold. Let $\epsilon > 0$.  Set the step size  $\alpha$ as (\ref{eq_alpha_B}), the batch size as $M=O\left(\frac{1}{\epsilon}\right)$,  the mini-batch size as $B=O\left(\frac{1}{\epsilon^{\frac{1}{2}}}\right)$,  {the  epoch size as $S=O\left(\frac{1}{\epsilon^{\frac{1}{2}}}\right)$,} and  the  epoch length as $K=O\left(\frac{1}{\epsilon^{\frac{1}{2}}}\right)$. Then Algorithm \ref{Alg1} requires $O\left(\frac{1}{\epsilon^{\frac{3}{2}}}\right)$ trajectories to achieve $\mathbb{E}\left[\left\|\nabla J \left(\bar{\boldsymbol{\theta}}_{\text{out}}\right)\right\|^2\right] \le \epsilon$.
\end{Col}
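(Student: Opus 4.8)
The plan is to substitute the prescribed parameter scalings into the mean-squared stationarity bound of Theorem~\ref{Thm1} and show that each of its four terms is $O(\epsilon)$, after which the advertised trajectory count follows by a direct bookkeeping of the sampling performed in Algorithm~\ref{Alg1}. Writing the bound schematically as
\begin{align}
\mathbb{E}\left[\left\|\nabla J(\bar{\boldsymbol{\theta}}_{\text{out}})\right\|^2\right] \le T_1 + T_2 + T_3 + T_4, \notag
\end{align}
with $T_1 = \frac{2(J(\boldsymbol{\theta}^*) - J(\bar{\boldsymbol{\theta}}_0^1))}{\alpha KS}$, $T_2 = \frac{2V}{Mn}$, and $T_3, T_4$ the two initial-error terms, I would bound the terms one at a time and then add them, choosing the hidden constants in $M, B, K, S$ so that the sum does not exceed $\epsilon$.

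For $T_2$ the choice $M = O(1/\epsilon)$ gives $T_2 = \frac{2V}{Mn} = O(\epsilon)$ immediately, since $V$ and $n$ are fixed. For $T_3$ and $T_4$ I would invoke the initialization of Algorithm~\ref{Alg1}: taking the common start $\tilde{\boldsymbol{\theta}}_i^0 = \boldsymbol{\theta}_i(0)$ equal across agents makes $\boldsymbol{\theta}_0^1 - \boldsymbol{1}_n\bar{\boldsymbol{\theta}}_0^1 = 0$ and bounds the initial tracker disagreement $\boldsymbol{y}_0^1 - \boldsymbol{1}_n\bar{\boldsymbol{y}}_0^1$ by a constant, so that after division by $nKS$ with $KS = O(1/\epsilon)$ both $T_3$ and $T_4$ are $O(\epsilon)$, provided $v_1, v_2$ and $\alpha$ stay bounded. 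The term $T_1$ is handled by observing $KS = O(1/\epsilon)$ together with the fact that the numerator is a fixed suboptimality gap; the remaining work is to confirm that the step size admitted by \eqref{eq_alpha_B} keeps $\alpha KS = \Omega(1/\epsilon)$, i.e.\ $T_1 = O(\epsilon)$. Summing the four $O(\epsilon)$ contributions and absorbing constants then yields $\mathbb{E}[\|\nabla J(\bar{\boldsymbol{\theta}}_{\text{out}})\|^2] \le \epsilon$.

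It remains to count trajectories. In each epoch, Line~3 draws $M$ trajectories for the reference gradient $\tilde{\mu}_i^s$, and each of the $K$ inner iterations draws $B$ trajectories in Line~7, giving $M + KB$ samples per epoch and hence
\begin{align}
S(M + KB) = O\!\left(\epsilon^{-1/2}\right)\!\left(O\!\left(\epsilon^{-1}\right) + O\!\left(\epsilon^{-1/2}\right)O\!\left(\epsilon^{-1/2}\right)\right) = O\!\left(\epsilon^{-3/2}\right) \notag
\end{align}
trajectories in total, which is the claimed bound. The main obstacle is the interaction between the conditions in \eqref{eq_alpha_B} and the scalings $K = S = B = O(\epsilon^{-1/2})$: because the admissible $\alpha$ depends on $K$ through the cube-root and square-root factors, and because $v_1, v_2, B_2$ in Theorem~\ref{Thm1} carry denominators such as $\frac{(1-\sigma^2)^4}{64} - 48\alpha^2\Psi K$ and $nL\frac{(1-\sigma^2)^6}{4608} - \tilde{\Psi}(\alpha)$ that must remain strictly positive, one must verify that the prescribed choices keep these denominators bounded away from zero while simultaneously keeping $T_1 = O(\epsilon)$. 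Establishing this compatibility, rather than the trajectory bookkeeping, is the delicate step.
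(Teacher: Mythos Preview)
Your proposal is correct and follows essentially the same approach as the paper: plug the prescribed scalings into the bound of Theorem~\ref{Thm1} to make each term $O(\epsilon)$, then count trajectories as $SM + SKB = O(\epsilon^{-3/2})$. The paper's own proof is in fact terser than yours---it asserts the $\epsilon$-bound in one sentence without splitting into $T_1,\dots,T_4$---so your term-by-term treatment and the explicit flag about the $\alpha$-$K$ interaction in \eqref{eq_alpha_B} are more careful than what the paper provides, not a deviation from its route.
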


\begin{proof}
Based on the convergence results in Theorem \ref{Thm1}, we conclude that $\mathbb{E} \left[\left\|\nabla J\left(\bar{\boldsymbol{\theta}}_{\text{out}}\right)\right\|^{2}\right]  \le \epsilon$ when  $M=O\left(\frac{1}{\epsilon}\right)$,  $S=O\left(\frac{1}{\epsilon^{\frac{1}{2}}}\right)$, and  $K=O\left(\tfrac{1}{\epsilon^{\frac{1}{2}}}\right)$.
Hence $KS= O\left(\frac{1}{\epsilon}\right)$, and  by $B=O\left(\tfrac{1}{\epsilon^{\frac{1}{2}}}\right)$ it follows that
the total number of stochastic gradient evaluations   required  is bounded by
\begin{equation*}
	SM + SKB =   O\left(\frac{1}{\epsilon ^{\frac{3}{2}}}\right).
\end{equation*}
\end{proof}

\begin{Rem}\label{Rem3}
We use the notation $\left|\mathcal{N}_{i}\right|$ to represent the number of neighbors which can directly send message to agent $i$, and use the notation $\left|\mathcal{E}\right|$ to represent the number of edges for communication graph $\mathcal{G}$. For Algorithm \ref{Alg1}, agent $i$ needs $KS\left|\mathcal{N}_{i}\right|$ communication rounds to receive its neighbors' message ${\boldsymbol{\theta}}_{r,k}^{s+1}$ and ${\boldsymbol{y}}_{r,k}^{s+1}$. Therefore, the total number of communication rounds for agent $i$ and all agents over the graph are $O\left(\left|\mathcal{N}_{i}\right|\frac{1}{\epsilon}\right)$ and $O\left(\left|\mathcal{E}\right|\frac{1}{\epsilon}\right)$, respectively.
\end{Rem}

\subsection{Convergence Analysis}
The following lemma shows the upper bound of consensus error in the inner loop.
\begin{Lem}\label{Lem9}
Let Assumptions \ref{Asu1}, \ref{Asu3}, and \ref{Asu4} hold.
Consider Algorithm \ref{Alg1} with $0 < \alpha < \frac{\left(1-\sigma^2\right)^2}{24\sqrt{2\Psi}}$. Define $\lambda=\frac{3+\sigma^2}{4}+\frac{6 \alpha\sqrt{2 \Psi}}{1-\sigma^2}$. Then we have the following inequality for any $k \ge 0,s \ge 0$
\begin{align}
&\sum_{k=0}^{K-1} \mathbb{E}\left[\left\|\boldsymbol{\theta}_{k}^{s+1}-\boldsymbol{1}_{n}\bar{\boldsymbol{\theta}}_{k}^{s+1}\right\|^{2}\right] \notag\\
\le& c_0(s) + \frac{1}{\left(1-\lambda\right)^2}\sum_{k=0}^{K-1}\alpha^2 \left(c_1\mathbb{E}\left[  \left\| \bar{\boldsymbol{ \theta }}_{k+1}^{s+1} - \bar{\boldsymbol{ \theta }}_{0}^{s+1}\right\|^2\right]\right. \notag \\
+ &\left. c_1\mathbb{E}\left[  \left\| \bar{\boldsymbol{ \theta }}_{k}^{s+1} - \bar{\boldsymbol{ \theta }}_{0}^{s+1}\right\|^2\right] + c_2 \mathbb{E}\left[\left\|\boldsymbol{\theta}_{0}^{s+1}-{\boldsymbol{1}_{n}}\bar{\boldsymbol{\theta}}_{0}^{s+1}\right\|^2\right]\right), \label{eq42}
\end{align}
where
\begin{subequations}
\begin{align}		c_0(s)=&\frac{1}{1-\lambda}\mathbb{E}\left[\left\|\boldsymbol{\theta}_{0}^{s+1}-{\boldsymbol{1}_{n}}\bar{\boldsymbol{\theta}}_{0}^{s+1}\right\|^2\right] \notag\\		&+\frac{2\alpha^2}{\left(1-\sigma^2\right)\left(1-\lambda\right)^2}\mathbb{E}\left[\left\|\boldsymbol{y}_{0}^{s+1}-{\boldsymbol{1}_{n}}\bar{\boldsymbol{y}}_{0}^{s+1}\right\|^2\right], \label{addeq1}\\
c_1=&\frac{24\Psi n}{\left(1-\sigma^2\right)^2}, \text{ and } c_2=\frac{48\Psi }{\left(1-\sigma^2\right)^2}. \label{addeq2}
\end{align}
\end{subequations}
\end{Lem}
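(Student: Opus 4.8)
The plan is to control the disagreement vector $\boldsymbol{\theta}_{k}^{s+1}-\boldsymbol{1}_{n}\bar{\boldsymbol{\theta}}_{k}^{s+1}$ by projecting the compact update \eqref{eq13a} onto the subspace orthogonal to $\boldsymbol{1}_{n}$. Writing $\boldsymbol{M}=\boldsymbol{W}-\tfrac{1}{n}\boldsymbol{1}_{n}\boldsymbol{1}_{n}^{\rm{T}}$ (using $d=1$), which annihilates $\boldsymbol{1}_{n}$ and has $\|\boldsymbol{M}\|=\sigma<1$ by Assumption \ref{Asu4}, the doubly stochastic property yields the exact recursions
\begin{align*}
\boldsymbol{\theta}_{k+1}^{s+1}-\boldsymbol{1}_{n}\bar{\boldsymbol{\theta}}_{k+1}^{s+1} &= \boldsymbol{M}\bigl(\boldsymbol{\theta}_{k}^{s+1}-\boldsymbol{1}_{n}\bar{\boldsymbol{\theta}}_{k}^{s+1}\bigr)+\alpha\bigl(\boldsymbol{y}_{k}^{s+1}-\boldsymbol{1}_{n}\bar{\boldsymbol{y}}_{k}^{s+1}\bigr),\\
\boldsymbol{y}_{k+1}^{s+1}-\boldsymbol{1}_{n}\bar{\boldsymbol{y}}_{k+1}^{s+1} &= \boldsymbol{M}\bigl(\boldsymbol{y}_{k}^{s+1}-\boldsymbol{1}_{n}\bar{\boldsymbol{y}}_{k}^{s+1}\bigr)+\bigl(\boldsymbol{I}-\tfrac{1}{n}\boldsymbol{1}_{n}\boldsymbol{1}_{n}^{\rm{T}}\bigr)\bigl(\boldsymbol{v}_{k+1}^{s+1}-\boldsymbol{v}_{k}^{s+1}\bigr),
\end{align*}
so the consensus error contracts at rate $\sigma$ but is forced by the gradient-tracking error, which itself contracts at rate $\sigma$ but is forced by the successive difference $\boldsymbol{v}_{k+1}^{s+1}-\boldsymbol{v}_{k}^{s+1}$. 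I will also exploit the gradient-tracking invariant $\bar{\boldsymbol{y}}_{k}^{s+1}=\bar{\boldsymbol{v}}_{k}^{s+1}$, obtained by averaging \eqref{eq13b} from the initialization $\boldsymbol{y}_{i,0}^{1}=\boldsymbol{v}_{i,0}^{1}$, together with the mean update $\bar{\boldsymbol{\theta}}_{k+1}^{s+1}-\bar{\boldsymbol{\theta}}_{k}^{s+1}=\alpha\bar{\boldsymbol{v}}_{k}^{s+1}$.

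The second step is to control $\boldsymbol{v}_{k+1}^{s+1}-\boldsymbol{v}_{k}^{s+1}$ in mean square, the key point being that the reference policy equals the epoch start, $\tilde{\boldsymbol{\theta}}_{i}^{s}=\boldsymbol{\theta}_{i,0}^{s+1}$. Taking conditional expectations and applying the unbiasedness identity \eqref{eq10} cancels the reference terms $\tilde{\mu}_{i}^{s}$ and $\nabla J_{i}(\tilde{\boldsymbol{\theta}}_{i}^{s})$, so the mean part reduces to $\nabla J_{i}(\boldsymbol{\theta}_{i,k+1}^{s+1})-\nabla J_{i}(\boldsymbol{\theta}_{i,k}^{s+1})$, bounded by $L$-smoothness (Lemma \ref{Lem1}(1)), while the fluctuation part is handled through the $L_{g}$-Lipschitz continuity of $g_{i}$ (Lemma \ref{Lem1}(2)), the bound $\|g\|\le C_{g}$ (Lemma \ref{Lem1}(3)), and the importance-weight variance estimate of Lemma \ref{Lem2}. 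This is exactly where $\Psi=2(C_{g}^{2}C_{\omega}+L_{g}^{2})$ enters as an effective squared Lipschitz constant, with the $C_{g}^{2}C_{\omega}$ piece scaling as the distance to the reference $\|\boldsymbol{\theta}_{k+1}^{s+1}-\boldsymbol{\theta}_{0}^{s+1}\|^{2}$. I then decompose $\boldsymbol{\theta}_{k+1}^{s+1}-\boldsymbol{\theta}_{0}^{s+1}$ into (i) the current disagreement, (ii) the mean displacement $\boldsymbol{1}_{n}(\bar{\boldsymbol{\theta}}_{k+1}^{s+1}-\bar{\boldsymbol{\theta}}_{0}^{s+1})$, whose squared norm carries a factor $n$ and produces the $c_{1}$ terms, and (iii) the initial disagreement, producing the $c_{2}$ term; part (i) is reabsorbed into the quantity being bounded.

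With these pieces the third step collapses the two coupled $\sigma$-contractions into a single scalar recursion $a_{k+1}\le\lambda a_{k}+b_{k}$ for $a_{k}=\mathbb{E}[\|\boldsymbol{\theta}_{k}^{s+1}-\boldsymbol{1}_{n}\bar{\boldsymbol{\theta}}_{k}^{s+1}\|^{2}]$. Applying Young's inequality to the disagreement recursion splits the $\sigma$-contraction into the base rate $\tfrac{3+\sigma^{2}}{4}$, chosen so that the gap to $1$ equals the clean value $\tfrac{1-\sigma^{2}}{4}$, and feeding back the tracking-error forcing (an $O(\sqrt{2\Psi})$ multiple of the displacement) adds the increment $\tfrac{6\alpha\sqrt{2\Psi}}{1-\sigma^{2}}$, giving $\lambda=\tfrac{3+\sigma^{2}}{4}+\tfrac{6\alpha\sqrt{2\Psi}}{1-\sigma^{2}}$. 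The hypothesis $0<\alpha<\tfrac{(1-\sigma^{2})^{2}}{24\sqrt{2\Psi}}$ forces the increment below $\tfrac{1-\sigma^{2}}{4}$ and hence $\lambda<1$ exactly, so the recursion may be unrolled and summed over $k=0,\dots,K-1$: the $k=0$ contributions assemble into $c_{0}(s)$ (the initial tracking error entering with the extra $\tfrac{1}{1-\sigma^{2}}$ and $\tfrac{1}{(1-\lambda)^{2}}$ factors), and the running forcing produces the $c_{1},c_{2}$ sum with $c_{1}=\tfrac{24\Psi n}{(1-\sigma^{2})^{2}}$, $c_{2}=\tfrac{48\Psi}{(1-\sigma^{2})^{2}}$.

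I expect the main obstacle to be the double summation that generates the factor $\tfrac{1}{(1-\lambda)^{2}}$. Because the gradient-tracking error $\boldsymbol{y}_{k}^{s+1}-\boldsymbol{1}_{n}\bar{\boldsymbol{y}}_{k}^{s+1}$ is itself a geometric accumulation of past $\boldsymbol{v}$-differences, substituting it into the disagreement recursion and then summing over the inner loop creates a convolution of two geometric sequences, so one $\tfrac{1}{1-\lambda}$ arises from unrolling and a second from re-summing the forcing. Keeping the Young's-inequality weights consistent across both levels, so that the contraction stays at $\tfrac{3+\sigma^{2}}{4}$ while the mean-displacement and initial-disagreement terms land with precisely the constants $c_{1},c_{2}$, is the delicate bookkeeping that pins down the numerical factors $24,48$ and the admissible range of $\alpha$.
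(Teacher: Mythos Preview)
Your ingredients are right—projecting onto the consensus complement, bounding $\|\boldsymbol{v}_{k+1}^{s+1}-\boldsymbol{v}_k^{s+1}\|^2$ by $\Psi$ times displacements, and splitting $\boldsymbol{\theta}_{k}^{s+1}-\boldsymbol{\theta}_0^{s+1}$ into current disagreement, mean displacement (the $c_1$ piece with its factor $n$), and initial disagreement (the $c_2$ piece). But the reduction in your third step to a \emph{scalar} recursion $a_{k+1}\le\lambda a_k+b_k$ is not how the rate $\lambda$ or the factor $(1-\lambda)^{-2}$ arise, and such a scalar recursion would only produce $(1-\lambda)^{-1}$ after summing. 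Your explanation that Young's inequality on the disagreement recursion gives base rate $\tfrac{3+\sigma^2}{4}$ is also off: Young on the consensus recursion gives $\tfrac{1+\sigma^2}{2}$ (Lemma~\ref{Lem6}), while $\tfrac{3+\sigma^2}{4}$ is the diagonal of the \emph{tracking} recursion (Lemma~\ref{Lem7}).

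The paper keeps both errors as a coupled $2\times2$ linear matrix inequality $\boldsymbol{u}_{k+1}^{s+1}\le\boldsymbol{G}\boldsymbol{u}_k^{s+1}+\boldsymbol{b}_k^{s+1}$ with $\boldsymbol{u}_k^{s+1}$ stacking the consensus and tracking errors and
\[
\boldsymbol{G}=\begin{bmatrix}\tfrac{1+\sigma^2}{2}&\tfrac{2\alpha^2}{1-\sigma^2}\\[2pt]\tfrac{36\Psi}{1-\sigma^2}&\tfrac{3+\sigma^2}{4}\end{bmatrix}
\]
(Proposition~\ref{Pro1}). The number $\lambda$ is an upper bound on the \emph{larger eigenvalue} of $\boldsymbol{G}$, obtained from the explicit formula $\lambda_2=\tfrac{5+3\sigma^2}{8}+\tfrac{1}{8(1-\sigma^2)}\sqrt{(1-\sigma^2)^4+4608\alpha^2\Psi}$ via $\sqrt{a^2+b^2}\le a+b$ (Lemma~\ref{Lem8}). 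That lemma also gives the entrywise bound $\boldsymbol{G}^k_{12}\le\boldsymbol{G}_{12}\,k\lambda^{k-1}$, and it is precisely this $k\lambda^{k-1}$ off-diagonal that, after unrolling $\boldsymbol{u}_k^{s+1}\le\boldsymbol{G}^k\boldsymbol{u}_0^{s+1}+\sum_{r<k}\boldsymbol{G}^{k-r-1}\boldsymbol{b}_r^{s+1}$, extracting the first row, and summing over $k$ via the convolution estimate $\sum_k\sum_r s_r(k-r)\theta^{k-r-1}\le(1-\theta)^{-2}\sum_k s_k$ (Lemma~\ref{Lem5}), produces both the $(1-\lambda)^{-2}$ in front of the $c_1,c_2$ sum and the $(1-\lambda)^{-2}$ on the initial tracking error in $c_0(s)$. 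Your final paragraph anticipates a double accumulation, but the clean mechanism is the $2\times2$ eigendecomposition plus Lemma~\ref{Lem5}, not substitution of one unrolled scalar recursion into the other. A smaller point: the $\tilde\mu_i^s$ terms in $\boldsymbol{v}_{i,k+1}^{s+1}-\boldsymbol{v}_{i,k}^{s+1}$ cancel \emph{algebraically}, so the paper bounds this difference directly via $C_g,L_g,C_\omega$ without the mean/fluctuation split or $L$-smoothness you describe.
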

The proof of Lemma \ref{Lem9} can be found in Appendix B-A.

We provide the accumulated consensus error and gradient tracking error in the outer loop in the subsequent lemma.
\begin{Lem}\label{Lem10}
Suppose the same conditions in Lemma \ref{Lem9} hold. 
If $0 < \alpha < \frac{\left(1-\sigma^2\right)^2}{32\sqrt{3 \Psi K}}$, we have
\begin{align}
& \sum_{s=0}^{S-1}\mathbb{E}\left[\left\| {{\boldsymbol{\theta }}_{0}^{s+1} - \boldsymbol{1}_{n}\bar{\boldsymbol{ \theta }}_{0}^{s+1}} \right\|^{2}\right] \notag \\
\le &
\frac{\left(1-\lambda\right)\left(1-\sigma^2\right)^2}{\Xi}\mathbb{E} \left[\left\|\boldsymbol{\theta}_{0}^{1}-{\boldsymbol{1}_{n}}\bar{\boldsymbol{\theta}}_{0}^{1}\right\|^2\right] \notag \\
&+ \frac{2 \alpha^2 \left(1-\sigma^2\right)}{\Xi}\mathbb{E} \left[\left\|\boldsymbol{y}_{0}^{1}-{\boldsymbol{1}_{n}}\bar{\boldsymbol{y}}_{0}^{1}\right\|^2\right] \notag \\
&+ \frac{24\alpha^2\Psi n}{\Xi}\sum_{s=0}^{S-1}\sum_{k=0}^{K-1}\mathbb{E}\left[  \left\| \bar{\boldsymbol{ \theta }}_{k+1}^{s+1} - \bar{\boldsymbol{ \theta }}_{0}^{s+1}\right\|^2\right] \notag \\
&+ \frac{24\alpha^2\Psi n}{\Xi}\sum_{s=0}^{S-1}\sum_{k=0}^{K-1}\mathbb{E}\left[  \left\| \bar{\boldsymbol{ \theta }}_{k}^{s+1} - \bar{\boldsymbol{ \theta }}_{0}^{s+1}\right\|^2\right], \label{eq48} \\
& \sum_{s=0}^{S-1}\mathbb{E}\left[\left\| {{\boldsymbol{y }}_{0}^{s+1}
- \boldsymbol{1}_{n}\bar{\boldsymbol{y}}_{0}^{s+1}} \right\|^{2}\right] \notag \\
\le & \left(\frac{36\Psi}{\left(1-\lambda\right)^2\left(1-\sigma^2\right)}
+\frac{3 \Psi K \left(1-\sigma^2\right)\Phi}{\left(1-\lambda\right)\Xi}\right)\mathbb{E} \left[\left\|\boldsymbol{\theta}_{0}^{1}-{\boldsymbol{1}_{n}}\bar{\boldsymbol{\theta}}_{0}^{1}\right\|^2\right] \notag \\
&+ \frac{\left(1-\sigma^2\right)^2\Phi}{8\Xi}\mathbb{E}
\left[\left\|\boldsymbol{y}_{0}^{1}-{\boldsymbol{1}_{n}}\bar{\boldsymbol{y}}_{0}^{1}\right\|^2\right] \notag \\
& + \frac{3\Psi n \left(1-\sigma^2\right)\Phi }
{ 2\Xi}\sum_{s=0}^{S-1}\sum_{k=0}^{K-1}\mathbb{E}\left[  \left\| \bar{\boldsymbol{ \theta }}_{k+1}^{s+1} - \bar{\boldsymbol{ \theta }}_{0}^{s+1}\right\|^2\right] \notag \\
& + \frac{3\Psi n \left(1-\sigma^2\right)\Phi }
{ 2\Xi}\sum_{s=0}^{S-1}\sum_{k=0}^{K-1}\mathbb{E}\left[  \left\| \bar{\boldsymbol{ \theta }}_{k}^{s+1} - \bar{\boldsymbol{ \theta }}_{0}^{s+1}\right\|^2\right], \label{eq49}
\end{align}
where
\begin{subequations}
\begin{align}
\Xi &= \left(1-\lambda\right)^2\left(1-\sigma^2\right)^2-48\alpha^2\Psi K , \label{eq50a}\\
\Phi &= 8 \left(1-\lambda\right)+\left(1-\sigma^2\right).  \label{eq50b}
\end{align}
\end{subequations}
\end{Lem}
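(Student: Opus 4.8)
The plan is to exploit the epoch-boundary continuity built into Algorithm~\ref{Alg1}, namely $\boldsymbol{\theta}_{0}^{s+1}=\boldsymbol{\theta}_{K}^{s}$ and $\boldsymbol{y}_{0}^{s+2}=\boldsymbol{y}_{K}^{s+1}$, so that the initial errors of each epoch coincide with the terminal errors of the previous one. First I would cast the per-iteration consensus and tracking dynamics in deviation form: subtracting the averaged iterates (\ref{eq14}) from (\ref{eq13a})--(\ref{eq13b}) and using double stochasticity of $\boldsymbol{W}$ gives
\begin{align}
\boldsymbol{\theta}_{k+1}^{s+1}-\boldsymbol{1}_{n}\bar{\boldsymbol{\theta}}_{k+1}^{s+1}
&=\left(\boldsymbol{W}-\tfrac{1}{n}\boldsymbol{1}_{n}\boldsymbol{1}_{n}^{\rm{T}}\right)\left(\boldsymbol{\theta}_{k}^{s+1}-\boldsymbol{1}_{n}\bar{\boldsymbol{\theta}}_{k}^{s+1}\right)+\alpha\left(\boldsymbol{y}_{k}^{s+1}-\boldsymbol{1}_{n}\bar{\boldsymbol{y}}_{k}^{s+1}\right), \notag \\
\boldsymbol{y}_{k+1}^{s+1}-\boldsymbol{1}_{n}\bar{\boldsymbol{y}}_{k+1}^{s+1}
&=\left(\boldsymbol{W}-\tfrac{1}{n}\boldsymbol{1}_{n}\boldsymbol{1}_{n}^{\rm{T}}\right)\left(\boldsymbol{y}_{k}^{s+1}-\boldsymbol{1}_{n}\bar{\boldsymbol{y}}_{k}^{s+1}\right)+\left(\boldsymbol{I}_{n}-\tfrac{1}{n}\boldsymbol{1}_{n}\boldsymbol{1}_{n}^{\rm{T}}\right)\left(\boldsymbol{v}_{k+1}^{s+1}-\boldsymbol{v}_{k}^{s+1}\right), \notag
\end{align}
both of which contract the consensus direction by $\sigma=\|\boldsymbol{W}-\tfrac{1}{n}\boldsymbol{1}_{n}\boldsymbol{1}_{n}^{\rm{T}}\|<1$.

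To obtain (\ref{eq48}) I would unroll the first recursion across an epoch, writing $\boldsymbol{\theta}_{K}^{s+1}-\boldsymbol{1}_{n}\bar{\boldsymbol{\theta}}_{K}^{s+1}$ as $(\boldsymbol{W}-\tfrac{1}{n}\boldsymbol{1}_{n}\boldsymbol{1}_{n}^{\rm{T}})^{K}(\boldsymbol{\theta}_{0}^{s+1}-\boldsymbol{1}_{n}\bar{\boldsymbol{\theta}}_{0}^{s+1})$ plus an $\alpha$-weighted convolution of the inner-loop tracking errors. Since $\boldsymbol{\theta}_{0}^{s+2}=\boldsymbol{\theta}_{K}^{s+1}$, taking expected squared norms and applying Young's inequality turns this into a geometric recursion in $s$ for $\mathbb{E}[\|\boldsymbol{\theta}_{0}^{s+1}-\boldsymbol{1}_{n}\bar{\boldsymbol{\theta}}_{0}^{s+1}\|^2]$ whose forcing terms are the accumulated inner-loop errors. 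Summing over $s=0,\dots,S-1$ and substituting the inner-loop consensus bound of Lemma~\ref{Lem9}, together with the explicit $c_0(s)$ expressed through the epoch-start consensus and tracking errors, collapses the telescoping chain and leaves only the epoch-$0$ initial errors and the drift terms $\|\bar{\boldsymbol{\theta}}_{k+1}^{s+1}-\bar{\boldsymbol{\theta}}_{0}^{s+1}\|^2$. The denominator $\Xi=(1-\lambda)^2(1-\sigma^2)^2-48\alpha^2\Psi K$ appears when the consensus sum is isolated on one side, and the hypothesis $\alpha<\tfrac{(1-\sigma^2)^2}{32\sqrt{3\Psi K}}$ is precisely what keeps $\Xi>0$.

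For (\ref{eq49}) the extra ingredient is a bound on the tracker increment $\mathbb{E}[\|\boldsymbol{v}_{k+1}^{s+1}-\boldsymbol{v}_{k}^{s+1}\|^2]$. Because the full-batch anchor $\tilde{\mu}_{i}^{s}$ cancels between consecutive inner iterations, the increment reduces to the difference of the sub-sampled SVRG corrections, which I would control via the $L_g$-Lipschitz continuity of $g_i$ from Lemma~\ref{Lem1} and the importance-weight variance bound from Lemma~\ref{Lem2}; this yields the factor $\Psi=2(C_g^2 C_\omega+L_g^2)$ multiplying $\mathbb{E}[\|\boldsymbol{\theta}_{k+1}^{s+1}-\boldsymbol{\theta}_{k}^{s+1}\|^2]$. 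Unrolling the $\boldsymbol{y}$-recursion across the epoch, squaring, and summing over $s$ then expresses $\sum_{s}\mathbb{E}[\|\boldsymbol{y}_{0}^{s+1}-\boldsymbol{1}_{n}\bar{\boldsymbol{y}}_{0}^{s+1}\|^2]$ through the consensus sum just bounded and the same drift terms; the auxiliary quantity $\Phi=8(1-\lambda)+(1-\sigma^2)$ emerges when these contributions are combined over the common denominator $\Xi$.

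The main obstacle is the two-way coupling between the sequences: the consensus error is driven by the tracking error through the $\alpha\boldsymbol{y}$ term, while the tracking error is driven by the consensus error through $\boldsymbol{v}_{k+1}^{s+1}-\boldsymbol{v}_{k}^{s+1}$, which depends on $\boldsymbol{\theta}_{k+1}^{s+1}-\boldsymbol{\theta}_{k}^{s+1}$ and hence on the consensus gap. Closing this loop into the self-contained inequalities (\ref{eq48})--(\ref{eq49}) forces the $\alpha^2\Psi K$ cross-terms to remain small relative to the contraction gap $(1-\lambda)^2(1-\sigma^2)^2$, which is exactly the role of the stated step-size ceiling; deliberately leaving the drift terms $\|\bar{\boldsymbol{\theta}}_{k+1}^{s+1}-\bar{\boldsymbol{\theta}}_{0}^{s+1}\|^2$ unexpanded keeps both recursions linear and defers their control to the subsequent descent estimate.
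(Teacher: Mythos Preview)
Your high-level architecture—using the epoch continuity $\boldsymbol{\theta}_0^{s+2}=\boldsymbol{\theta}_K^{s+1}$, $\boldsymbol{y}_0^{s+2}=\boldsymbol{y}_K^{s+1}$, unrolling across iterations, summing the resulting geometric series, and isolating the target sums via the positivity of $\Xi$—is correct and matches the paper. The mechanism you propose for handling the coupling, however, differs from the paper's and has two soft spots.

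The paper does \emph{not} treat the $\boldsymbol{\theta}$- and $\boldsymbol{y}$-recursions separately, and it does not invoke Lemma~\ref{Lem9} here. Instead it stacks the two errors into the vector $\boldsymbol{u}_k^{s+1}$ and uses the $2\times 2$ linear matrix inequality $\boldsymbol{u}_{k+1}^{s+1}\le\boldsymbol{G}\,\boldsymbol{u}_k^{s+1}+\boldsymbol{b}_k^{s+1}$ of Proposition~\ref{Pro1}, together with the entrywise bound on $\boldsymbol{G}^k$ from Lemma~\ref{Lem8}. Unrolling this single matrix recursion across all $sK$ iterations yields $\boldsymbol{u}_0^{s+1}\le\boldsymbol{G}^{sK}\boldsymbol{u}_0^{1}+\sum_{l,r}\boldsymbol{G}^{(s-l)K-1-r}\boldsymbol{b}_r^{l+1}$; reading off the first and second rows and summing over $s$ with Lemma~\ref{Lem5} gives (\ref{eq48}) and (\ref{eq49}) after one self-referential rearrangement (the coefficient $\alpha^2 c_2 K/(1-\lambda)^2$ on the right is moved to the left, producing $\Xi$). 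The rate $\lambda$ is precisely the eigenvalue-based contraction of $\boldsymbol{G}$, which is how the coupling is absorbed in one stroke.

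In your plan, unrolling only the $\boldsymbol{\theta}$-equation leaves the inner-loop \emph{tracking} errors $\sum_k\mathbb{E}\|\boldsymbol{y}_k^{s+1}-\boldsymbol{1}_n\bar{\boldsymbol{y}}_k^{s+1}\|^2$ as forcing, but Lemma~\ref{Lem9} bounds only the inner-loop \emph{consensus} sum; you would still need a companion bound for the tracking sum, which sends you back to the joint system anyway. Second, the increment $\boldsymbol{v}_{k+1}^{s+1}-\boldsymbol{v}_k^{s+1}$ is not controlled by $\Psi\,\mathbb{E}\|\boldsymbol{\theta}_{k+1}^{s+1}-\boldsymbol{\theta}_k^{s+1}\|^2$: the two terms are built from \emph{independent} mini-batches sampled under different policies, so Lipschitz continuity of $g_i$ does not link them directly. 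The paper routes through the common reference $\tilde{\boldsymbol{\theta}}^s$, obtaining $2\Psi\bigl(\mathbb{E}\|\boldsymbol{\theta}_{k+1}^{s+1}-\tilde{\boldsymbol{\theta}}^s\|^2+\mathbb{E}\|\boldsymbol{\theta}_k^{s+1}-\tilde{\boldsymbol{\theta}}^s\|^2\bigr)$ (cf.\ (\ref{eq33})--(\ref{eq35-2})); the subsequent split $\boldsymbol{\theta}_k^{s+1}-\tilde{\boldsymbol{\theta}}^s=(\boldsymbol{\theta}_k^{s+1}-\boldsymbol{1}_n\bar{\boldsymbol{\theta}}_k^{s+1})+\boldsymbol{1}_n(\bar{\boldsymbol{\theta}}_k^{s+1}-\bar{\boldsymbol{\theta}}_0^{s+1})+(\boldsymbol{1}_n\bar{\boldsymbol{\theta}}_0^{s+1}-\boldsymbol{\theta}_0^{s+1})$ is exactly what generates the drift terms $\|\bar{\boldsymbol{\theta}}_k^{s+1}-\bar{\boldsymbol{\theta}}_0^{s+1}\|^2$ and the extra epoch-start consensus contribution sitting in $\boldsymbol{b}_k^{s+1}$.
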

The proof of Lemma \ref{Lem10} can be found in Appendix B-B.

Next, we provide the bound of the gradient estimator error.
\begin{Lem}\label{Lem11}
Let Assumptions \ref{Asu1} to \ref{Asu4} hold. Considering Algorithm 1, we have the following inequality,
\begin{align}
&\mathbb{E}\left[\left\|\bar{\boldsymbol{v}}_{k}^{s+1}-\overline{\nabla J}\left(\boldsymbol{\theta}_{k}^{s+1}\right)\right\|^{2}\right] \notag \\
\leq& \frac{3 \Psi }{ B n^2 } \mathbb{E}\left[\left\| \boldsymbol{\theta}_{k}^{s+1} - {\boldsymbol{1}_{n}}\bar{\boldsymbol{\theta}}_{k}^{s+1} \right\|^2 \right]  + \frac{3 \Psi }{ B n } \mathbb{E}\left[\left\| \bar{\boldsymbol{\theta}}_{k}^{s+1}-\bar{\boldsymbol{\theta}}_{0}^{s+1}\right\|^2\right] \notag\\
&+\frac{3 \Psi }{ B n^2} \mathbb{E}\left[\left\| \boldsymbol{\theta}_{0}^{s+1} - {\boldsymbol{1}_{n}}\bar{\boldsymbol{\theta}}_{0}^{s+1} \right\|^2\right]+ \frac{V}{Mn}. \label{eq61}
\end{align}		
\end{Lem}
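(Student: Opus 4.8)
The plan is to decompose the per-agent estimator error into an \emph{outer} (batch) part and an \emph{inner} (mini-batch) part, argue that these are uncorrelated, and bound each separately. Recalling from Line~2 of Algorithm~\ref{Alg1} that $\tilde{\boldsymbol{\theta}}_i^s = \boldsymbol{\theta}_{i,0}^{s+1}$, and writing $X_{i,b} = g_i(\tau_{i,b}|\boldsymbol{\theta}_{i,k}^{s+1}) - \omega(\tau_{i,b}|\boldsymbol{\theta}_{i,k}^{s+1},\tilde{\boldsymbol{\theta}}_i^s)\,g_i(\tau_{i,b}|\tilde{\boldsymbol{\theta}}_i^s)$, I would decompose
\begin{equation*}
\boldsymbol{v}_{i,k}^{s+1} - \nabla J_i(\boldsymbol{\theta}_{i,k}^{s+1}) = A_i + B_i,
\end{equation*}
where $A_i = \tilde{\mu}_i^s - \nabla J_i(\tilde{\boldsymbol{\theta}}_i^s)$ is the error of the batch estimator at the reference point, and $B_i = \frac{1}{B}\sum_{b=1}^B X_{i,b} - (\nabla J_i(\boldsymbol{\theta}_{i,k}^{s+1}) - \nabla J_i(\tilde{\boldsymbol{\theta}}_i^s))$ is the fluctuation of the importance-weighted SVRG correction. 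By the unbiasedness identity (\ref{eq10}), $\mathbb{E}[X_{i,b}\,|\,\mathcal{F}] = \nabla J_i(\boldsymbol{\theta}_{i,k}^{s+1}) - \nabla J_i(\tilde{\boldsymbol{\theta}}_i^s)$, where $\mathcal{F}$ is the $\sigma$-algebra generated by all randomness prior to drawing the fresh mini-batch; hence $B_i$ has conditional mean zero, while $A_i$ is $\mathcal{F}$-measurable.

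Next I would average over agents and use that trajectories are sampled independently across agents. Since $A_i$ is $\mathcal{F}$-measurable with $\mathbb{E}[A_i]=0$ and $\mathbb{E}[B_i\,|\,\mathcal{F}]=0$, both the cross terms $\langle A_i, B_j\rangle$ and all off-diagonal agent terms vanish in expectation, leaving
\begin{equation*}
\mathbb{E}\!\left[\left\|\bar{\boldsymbol{v}}_k^{s+1} - \overline{\nabla J}(\boldsymbol{\theta}_k^{s+1})\right\|^2\right] = \frac{1}{n^2}\sum_{i=1}^n\mathbb{E}[\|A_i\|^2] + \frac{1}{n^2}\sum_{i=1}^n\mathbb{E}[\|B_i\|^2].
\end{equation*}
For the first sum, $\tilde{\mu}_i^s$ averages $M$ i.i.d.\ estimators, so Assumption~\ref{Asu2} gives $\mathbb{E}[\|A_i\|^2]\le V/M$, producing the term $V/(Mn)$.

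The crux is the inner term. Conditioning on the parameters, $B_i$ is an average of $B$ i.i.d.\ centered terms, so $\mathbb{E}[\|B_i\|^2] = \frac{1}{B}\mathrm{Var}[X_{i,b}] \le \frac{1}{B}\mathbb{E}[\|X_{i,b}\|^2]$. Splitting $X_{i,b} = (g_i(\tau|\boldsymbol{\theta}_{i,k}^{s+1}) - g_i(\tau|\tilde{\boldsymbol{\theta}}_i^s)) + (1-\omega)\,g_i(\tau|\tilde{\boldsymbol{\theta}}_i^s)$ and applying $\|a+b\|^2\le 2\|a\|^2+2\|b\|^2$ together with the $L_g$-Lipschitz property and the bound $\|g_i\|\le C_g$ from Lemma~\ref{Lem1}, and then using $\mathbb{E}[(1-\omega)^2]=\mathrm{Var}[\omega]\le C_\omega\|\boldsymbol{\theta}_{i,k}^{s+1}-\tilde{\boldsymbol{\theta}}_i^s\|^2$ from Lemma~\ref{Lem2} (since $\mathbb{E}[\omega]=1$), I would obtain $\mathbb{E}[\|X_{i,b}\|^2]\le 2(L_g^2+C_g^2 C_\omega)\|\boldsymbol{\theta}_{i,k}^{s+1}-\tilde{\boldsymbol{\theta}}_i^s\|^2 = \Psi\|\boldsymbol{\theta}_{i,k}^{s+1}-\boldsymbol{\theta}_{i,0}^{s+1}\|^2$. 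This identification of $\Psi = 2(C_g^2 C_\omega + L_g^2)$ is the main obstacle, as it is precisely what links the importance-weighted correction to the policy drift $\|\boldsymbol{\theta}_{i,k}^{s+1}-\boldsymbol{\theta}_{i,0}^{s+1}\|$. Finally, summing over $i$ and decomposing $\boldsymbol{\theta}_{i,k}^{s+1}-\boldsymbol{\theta}_{i,0}^{s+1}$ into the three pieces $(\boldsymbol{\theta}_{i,k}^{s+1}-\bar{\boldsymbol{\theta}}_k^{s+1})$, $(\bar{\boldsymbol{\theta}}_k^{s+1}-\bar{\boldsymbol{\theta}}_0^{s+1})$, and $(\bar{\boldsymbol{\theta}}_0^{s+1}-\boldsymbol{\theta}_{i,0}^{s+1})$ via $\|a+b+c\|^2\le 3(\|a\|^2+\|b\|^2+\|c\|^2)$ yields exactly the consensus, drift, and initial-consensus terms with coefficients $\frac{3\Psi}{Bn^2}$, $\frac{3\Psi}{Bn}$, and $\frac{3\Psi}{Bn^2}$, which together with $V/(Mn)$ give the claimed bound (\ref{eq61}).
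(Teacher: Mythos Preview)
Your proposal is correct and follows essentially the same approach as the paper: the decomposition $A_i+B_i$ coincides with the paper's $-\mathbf{e}_2+\mathbf{e}_1$, the independence/conditional-mean argument is used in the same way to kill cross terms (both between $A$ and $B$ and across agents), and the bounds on each piece---$V/M$ from Assumption~\ref{Asu2}, and the split of $X_{i,b}$ via Lemma~\ref{Lem1} and Lemma~\ref{Lem2} to get the factor $\Psi$---match exactly. The final three-term decomposition of $\|\boldsymbol{\theta}_{i,k}^{s+1}-\boldsymbol{\theta}_{i,0}^{s+1}\|^2$ is the per-agent version of the paper's stacked-vector bound~(\ref{eq34}), which after summing over $i$ and dividing by $n^2$ gives the stated coefficients.
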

The proof of Lemma \ref{Lem11} can be found in Appendix B-C.

With the above auxiliary results, we are ready to prove Theorem \ref{Thm1}.
\begin{proof}
	Note that for $L$-smooth $f\left (  {\bf{x}}\right )$, the following quadratic upper bound holds
	\begin{equation*}
	\begin{aligned}
	-\frac{L}{2} \left \| {\bf{x}}-{\bf{y}} \right \|^2 \le  f\left (  {\bf{x}}\right ) - f\left ( {\bf{y}} \right )
    -\left \langle \nabla f\left ( {\bf{y}} \right ),{\bf{x}}-{\bf{y}}  \right \rangle.
	\end{aligned}
	\end{equation*}	
	From Lemma \ref{Lem1}, $J(\boldsymbol{\theta})$ is $L$-smooth. Setting ${\bf{x}}=\bar{\boldsymbol{\theta}}_{k+1}^{s+1}$ and ${\bf{y}}=\bar{\boldsymbol{\theta}}_{k}^{s+1}$ leads to
	\begin{equation*}
	\begin{aligned}
	J\left ( \bar{\boldsymbol{\theta}}_{k+1}^{s+1} \right ) \ge& J\left ( \bar{\boldsymbol{\theta}}_{k}^{s+1} \right ) + \left \langle \nabla J\left ( \bar{\boldsymbol{\theta}}_{k}^{s+1} \right ),\bar{\boldsymbol{\theta}}_{k+1}^{s+1}-\bar{\boldsymbol{\theta}}_{k}^{s+1}  \right \rangle \\
	&-\frac{L}{2} \left \| \bar{\boldsymbol{\theta}}_{k+1}^{s+1}-\bar{\boldsymbol{\theta}}_{k}^{s+1} \right \|^2.
	\end{aligned}
	\end{equation*}
	By using ${\bar {\boldsymbol{v}}_{k}^{s+1}}={\bar {\boldsymbol{y}}_{k}^{s+1}}$ from \cite[Lemma 7]{qu2017harnessing}, we obtain that $\bar{\boldsymbol{\theta}}_{k+1}^{s+1}-\bar{\boldsymbol{\theta}}_{k}^{s+1}=\alpha \bar{\boldsymbol{y}}_{k}^{s+1}=\alpha \bar{\boldsymbol{v}}_{k}^{s+1}$. This combined with the above inequality produces that
	\begin{align}
	J\left(\bar{\boldsymbol{\theta}}_{k+1}^{s+1}\right) & \ge J\left(\bar{\boldsymbol{\theta}}_{k}^{s+1}\right)
    + \left \langle \nabla J\left(\bar{\boldsymbol{\theta}}_{k}^{s+1}\right),\alpha \bar{\boldsymbol{v}}_{k}^{s+1} \right \rangle \notag \\
	&\quad -\frac{L}{2}{\left\|\bar{\boldsymbol{\theta}}_{k+1}^{s+1}-\bar{\boldsymbol{\theta}}_{k}^{s+1}\right\|^2} \notag \\
	\stackrel{(a)}{=}&  J\left ( \bar{\boldsymbol{\theta}}_{k}^{s+1} \right )
    + \frac{\alpha}{2} \left\|\nabla J\left( \bar{\boldsymbol{\theta}}_{k}^{s+1}\right)\right\|^2 +\frac{\alpha}{2}\left \|  \bar{\boldsymbol{v}}_{k}^{s+1}\right \|^2 \notag \\
	& - \frac{\alpha}{2} \left\|\nabla J\left( \bar{\boldsymbol{\theta}}_{k}^{s+1} \right )-\bar{\boldsymbol{v}}_{k}^{s+1}\right\|^2 -\frac{L}{2} \left \| \bar{\boldsymbol{\theta}}_{k+1}^{s+1}-\bar{\boldsymbol{\theta}}_{k}^{s+1} \right \|^2 \notag \\
	= & J\left ( \bar{\boldsymbol{\theta}}_{k}^{s+1} \right ) -\frac{\alpha}{2} \left \|  \nabla J\left ( \bar{\boldsymbol{\theta}}_{k}^{s+1} \right )-\bar{\boldsymbol{v}}_{k}^{s+1}\right \|^2
	\notag \\
	& +\frac{\alpha}{2} \left \|  \nabla J\left ( \bar{\boldsymbol{\theta}}_{k}^{s+1} \right) \right \|^2
    +\left(\frac{1}{2\alpha}-\frac{L}{2}\right) \left \| \bar{\boldsymbol{\theta}}_{k+1}^{s+1}-\bar{\boldsymbol{\theta}}_{k}^{s+1} \right \|^2
	\notag \\
	\ge & J\left(\bar{\boldsymbol{\theta}}_{k}^{s+1}\right)-\alpha\left\|\nabla J\left(\bar{\boldsymbol{\theta}}_{k}^{s+1}\right)-\overline{\nabla J}\left(\boldsymbol{\theta}_{k}^{s+1}\right)\right\|^{2} \notag \\
	&- \alpha \left\|\overline{\nabla J}\left(\boldsymbol{\theta}_{k}^{s+1}\right)-\bar{\boldsymbol{v}}_{k}^{s+1}\right\|^{2} +\frac{\alpha}{2}\left\|\nabla J\left(\bar{\boldsymbol{\theta}}_{k}^{s+1}\right)\right\|^{2} \notag \\
	&+ \left(\frac{1}{2 \alpha}-\frac{L}{2}\right)\left\|\bar{\boldsymbol{\theta}}_{k+1}^{s+1}-\bar{\boldsymbol{\theta}}_{k}^{s+1}\right\|^{2} \notag \\
	\stackrel{(b)}{\geq} & J\left(\bar{\boldsymbol{\theta}}_{k}^{s+1}\right)-\frac{ \alpha L^2}{ n }\left\|\boldsymbol{\theta}_{k}^{s+1}-\boldsymbol{1}_{n}\bar{\boldsymbol{\theta}}_{k}^{s+1}\right\|^{2} \notag \\
	&-\alpha \left\|\overline{\nabla J}\left(\boldsymbol{\theta}_{k}^{s+1}\right)-\bar{\boldsymbol{v}}_{k}^{s+1}\right\|^{2} +\frac{\alpha}{2}\left\|\nabla J\left(\bar{\boldsymbol{\theta}}_{k}^{s+1}\right)\right\|^{2} \notag \\
	&+\left(\frac{1}{2 \alpha}-\frac{L}{2}\right)\left\|\bar{\boldsymbol{\theta}}_{k+1}^{s+1}-\bar{\boldsymbol{\theta}}_{k}^{s+1}\right\|^{2}. \label{eq65}
	\end{align}
	where in (a) we use $\left\langle {\bf{a}}, {\bf{b}} \right\rangle = 0.5 \left(\left\|{\bf{a}}\right\|^2 + \left\|{\bf{b}}\right\|^2 - \left\| {\bf{a}}-{\bf{b}}\right\|^2\right)$, and in (b) we use $\left\| \overline{\nabla J}\left(\boldsymbol{\theta}_{k}^{s+1} \right) - \nabla J\left( {{\bar{\boldsymbol{ \theta }}}_{k}^{s+1}} \right) \right\| \leq \frac{L}{\sqrt{n}}\left\| \boldsymbol{\theta}_{k}^{s+1} - {\boldsymbol{1}_{n}} {{\bar{\boldsymbol{ \theta }}}_{k}^{s+1}}\right\|$ related to the distributed gradient tracking methods (see, e.g., \cite[Lemma 8]{qu2017harnessing}).
	
	We apply the basic Young's inequality that $\left\|\boldsymbol{a}+\boldsymbol{b}\right\|^2
    \leq \left(1+\eta\right)\|\boldsymbol{a}\|^2
    + \left(1+\frac{1}{\eta}\right)\|\boldsymbol{b}\|^2$, $\forall \boldsymbol{a}, \boldsymbol{b} \in \mathbb{R}^{ d}, \forall \eta>0$ to obtain
	\begin{equation*}
	\begin{aligned}
	\left\| \bar{\boldsymbol{\theta}}_{k+1}^{s+1}-\bar{\boldsymbol{\theta}}_{0}^{s+1} \right\|^2
	\leq& \left(1+\eta \right)\left\| \bar{\boldsymbol{\theta}}_{k+1}^{s+1}-\bar{\boldsymbol{\theta}}_{k}^{s+1} \right\|^2\\
	& + \left(1+\frac{1}{\eta} \right)\left\| \bar{\boldsymbol{\theta}}_{k}^{s+1}-\bar{\boldsymbol{\theta}}_{0}^{s+1} \right\|^2.
	\end{aligned}
	\end{equation*}
	Regrouping terms of the above inequality and taking expectations, we have
	\begin{align}
	\mathbb{E} \left[ \left\| \bar{\boldsymbol{\theta}}_{k+1}^{s+1}-\bar{\boldsymbol{\theta}}_{k}^{s+1} \right\|^2 \right]
	\geq&  \frac{1}{1+\eta} \mathbb{E} \left[ \left\| \bar{\boldsymbol{\theta}}_{k+1}^{s+1}-\bar{\boldsymbol{\theta}}_{0}^{s+1} \right\|^2  \right] \notag \\
	&- \frac{1}{\eta} \mathbb{E} \left[ \left\| \bar{\boldsymbol{\theta}}_{k}^{s+1}-\bar{\boldsymbol{\theta}}_{0}^{s+1} \right\|^2\right]. \label{eq67}
	\end{align}
	By taking the expectations on both sides of (\ref{eq65}), and using (\ref{eq61}) and (\ref{eq67}), we  obtain that	
	\begin{align}
	\mathbb{E} \left[J\left ( \bar{\boldsymbol{\theta}}_{k+1}^{s+1} \right )\right] &\ge\mathbb{E} \left[J\left ( \bar{\boldsymbol{\theta}}_{k}^{s+1} \right )\right] -\frac{\alpha V}{Mn}+\frac{\alpha}{2}\mathbb{E} \left[\left\|\nabla J\left(\bar{\boldsymbol{\theta}}_{k}^{s+1}\right)\right\|^{2}\right]  \notag\\
	&-\left(\frac{ \alpha L^2}{ n }  + \frac{3 \alpha \Psi }{B n^2} \right)\mathbb{E}\left[\left\|\boldsymbol{\theta}_{k}^{s+1}-\boldsymbol{1}_{n}\bar{\boldsymbol{\theta}}_{k}^{s+1}\right\|^{2} \right] \notag\\
	&+\frac{\frac{1}{\alpha}-L}{2\left(\eta+1\right)}  \mathbb{E} \left[ \left\| \bar{\boldsymbol{\theta}}_{k+1}^{s+1}-\bar{\boldsymbol{\theta}}_{0}^{s+1} \right\|^2\right] \notag\\
	&- \left(\frac{\frac{1}{\alpha}-L}{2 \eta} + \frac{3 \alpha \Psi }{B n} \right){\mathbb{E}\left[\left\|\boldsymbol{\theta}_{k}^{s+1}-\bar{\boldsymbol{\theta}}_{0}^{s+1}\right\|^{2} \right]} \notag \\
	& -\frac{3 \alpha \Psi}{B n^2} {\mathbb{E}\left[\left\|\boldsymbol{\theta}_{0}^{s+1}-\boldsymbol{1}_{n}\bar{\boldsymbol{\theta}}_{0}^{s+1}\right\|^{2} \right]}. \label{eq68}
	\end{align}
	By setting $\eta=2k+1$ and taking the telescoping sum of the above inequality over $k$ from 0 to $K-1$, we have
	\begin{align}
	\mathbb{E} & \left[J\left ( \bar{\boldsymbol{\theta}}_{K}^{s+1} \right )\right] \ge\mathbb{E} \left[J\left ( \bar{\boldsymbol{\theta}}_{0}^{s+1} \right )\right]+\frac{\alpha}{2}\sum_{k=0}^{K-1}\mathbb{E} \left[\left\|\nabla J\left(\bar{\boldsymbol{\theta}}_{k}^{s+1}\right)\right\|^{2}\right] \notag \\
	&-\frac{\alpha V K}{Mn} +\sum_{k=0}^{K-1} \frac{\frac{1}{\alpha}-L}{4(k+1)}  \mathbb{E} \left[ \left\| \bar{\boldsymbol{\theta}}_{k+1}^{s+1}-\bar{\boldsymbol{\theta}}_{0}^{s+1} \right\|^2\right] \notag \\
	&  -\sum_{k=0}^{K-1} \left( \frac{\frac{1}{\alpha }-L}{2(2k+1)} + \frac{3 \alpha \Psi }{B n}\right)\mathbb{E} \left[ \left\| \bar{\boldsymbol{\theta}}_{k}^{s+1}-\bar{\boldsymbol{\theta}}_{0}^{s+1} \right\|^2  \right] \notag \\
	& -\left(\frac{ \alpha L^2}{ n }  + \frac{3 \alpha \Psi }{B n^2} \right) \sum_{k=0}^{K-1}\mathbb{E}\left[\left\|\boldsymbol{\theta}_{k}^{s+1}-\boldsymbol{1}_{n}\bar{\boldsymbol{\theta}}_{k}^{s+1}\right\|^{2} \right] \notag \\
	& -\frac{3 \alpha \Psi K }{B n^2} {\mathbb{E}\left[\left\|\boldsymbol{\theta}_{0}^{s+1}-\boldsymbol{1}_{n}\bar{\boldsymbol{\theta}}_{0}^{s+1}\right\|^{2} \right]}. \label{eq69}
	\end{align}
	This together with (\ref{eq42}) in Lemma \ref{Lem9} produces
	\begin{align}\label{eq70}
	&\mathbb{E} \left[J\left ( \bar{\boldsymbol{\theta}}_{K}^{s+1} \right )\right] \ge \mathbb{E} \left[J\left ( \bar{\boldsymbol{\theta}}_{0}^{s+1} \right )\right]+\frac{\alpha}{2}\mathbb{E} \sum_{k=0}^{K-1}\left[\left\|\nabla J\left(\bar{\boldsymbol{\theta}}_{k}^{s+1}\right)\right\|^{2}\right] \notag \\
	&\quad  -\frac{\alpha VK}{Mn}- \left(\frac{ \alpha L^2}{ n}+\frac{3\alpha\Psi}{B n^2}\right)c_0(s) \notag \\
	&\quad - \left(\frac{\alpha^3 c_2 K \left(L^2+\frac{3\Psi}{Bn}\right) } {n\left(1-\lambda\right)^2}+ \frac{3 \alpha \Psi K}{B n^2 } \right) \mathbb{E}\left[\left\|\boldsymbol{\theta}_{0}^{s+1}-\boldsymbol{1}_{n}\bar{\boldsymbol{\theta}}_{0}^{s+1}\right\|^{2} \right] \notag \\
	& \quad +\sum_{k=0}^{K-1}\left(\frac{\frac{1}{\alpha}-L}{4(k+1)}- {\frac{\alpha^3  c_1 \left(L^2+\frac{3\Psi}{B n}\right)}{ n \left(1-\lambda\right)^2}}\right)\notag\\
	& \qquad \times \mathbb{E}\left[ \left\| \bar{\boldsymbol{\theta}}_{k+1}^{s+1}-\bar{\boldsymbol{\theta}}_{0}^{s+1} \right\|^2\right] \notag\\
	&\quad - \sum_{k=0}^{K-1}\left( {\frac{3\alpha \Psi}{B n}} + \frac{\frac{1}{\alpha}-L}{2\left(2k+1\right)} +  {\frac{\alpha^3  c_1 \left(L^2+\frac{3\Psi}{B n}\right)}{ n \left(1-\lambda\right)^2}}\right)\notag\\
	& \qquad \times \mathbb{E}\left[ \left\| \bar{\boldsymbol{\theta}}_{k}^{s+1}-\bar{\boldsymbol{\theta}}_{0}^{s+1} \right\|^2\right],
	\end{align}
	 By summing the above inequality over $s$ from 0 to $S-1$ and regrouping the terms, we have
	\begin{align}
	&\mathbb{E} \left[J\left ( \bar{\boldsymbol{\theta}}_{K}^{S} \right )\right] \ge \mathbb{E} \left[J\left ( \bar{\boldsymbol{\theta}}_{0}^{1} \right )\right] -\frac{\alpha VKS}{Mn}  \notag\\
	&+\frac{\alpha}{2}\sum_{s=0}^{S-1}\sum_{k=0}^{K-1}\mathbb{E} \left[\left\|\nabla J\left(\bar{\boldsymbol{\theta}}_{k}^{s+1}\right)\right\|^{2}\right]  \notag\\
	&- \left(\frac{ \alpha \left(L^2+\frac{3\Psi}{Bn}\right) \Gamma}{n \left(1-\lambda\right)^2 \left(1-\sigma^2\right)^2} + \frac{3 \alpha \Psi K}{B n^2} \right)\notag \\
	& \quad \times \sum_{s=0}^{S-1}\mathbb{E}\left[\left\| {{\boldsymbol{\theta }}_{0}^{s+1} - \boldsymbol{1}_{n}\bar{\boldsymbol{ \theta }}_{0}^{s+1}} \right\|^{2}\right]  \notag\\
	& - {\frac{2 \alpha^3 \left(L^2+\frac{3\Psi}{Bn}\right)}{ n\left(1-\lambda\right)^2\left(1-\sigma^2\right)} }\sum_{s=0}^{S-1}\mathbb{E}\left[\left\| {{\boldsymbol{y}}_{0}^{s+1} - \boldsymbol{1}_{n}\bar{\boldsymbol{y}}_{0}^{s+1}} \right\|^{2}\right] \notag\\
	&+\sum_{s=0}^{S-1}\sum_{k=0}^{K-1}\left(\frac{\frac{1}{\alpha}-L}{4(k+1)}- {\frac{\alpha^3  c_1 \left(L^2+\frac{3\Psi}{Bn}\right)}{ n \left(1-\lambda\right)^2}  }\right)\notag \\
	& \quad \times\mathbb{E}\left[ \left\| \bar{\boldsymbol{\theta}}_{k+1}^{s+1}-\bar{\boldsymbol{\theta}}_{0}^{s+1} \right\|^2\right] \notag \\
	&- \sum_{s=0}^{S-1}\sum_{k=0}^{K-1}\left( {\frac{3\alpha \Psi}{B n}} + \frac{\frac{1}{\alpha}-L}{2\left(2k+1\right)} +  {\frac{\alpha^3  c_1 \left(L^2+\frac{3\Psi}{Bn}\right) }{ n \left(1-\lambda\right)^2} }\right) \notag \\
	& \quad \times \mathbb{E}\left[ \left\| \bar{\boldsymbol{\theta}}_{k}^{s+1}-\bar{\boldsymbol{\theta}}_{0}^{s+1} \right\|^2\right],\label{eq72}
	\end{align}
	where $\Gamma= \left(1-\lambda\right)\left(1-\sigma^2\right)^2 + 48 \alpha^2 \Psi K $.

	Substituting (\ref{eq48}) and (\ref{eq49}) into (\ref{eq72}), we have
	\begin{align}
	&\mathbb{E} \left[J\left ( \bar{\boldsymbol{\theta}}_{K}^{S} \right )\right] \ge \mathbb{E} \left[J\left ( \bar{\boldsymbol{\theta}}_{0}^{1} \right )\right]-\frac{\alpha VKS}{Mn} \notag \\
	&+\frac{\alpha}{2}\sum_{s=0}^{S-1}\sum_{k=0}^{K-1}\mathbb{E} \left[\left\|\nabla J\left(\bar{\boldsymbol{\theta}}_{k}^{s+1}\right)\right\|^{2}\right] \notag \\
	&- \left( \frac{\alpha \left(L^2+\frac{3\Psi}{Bn}\right) \Gamma}{ n \left(1-\lambda\right) \Xi} + \frac{3 \alpha \Psi K \left(1-\lambda\right)\left(1-\sigma^2\right)^2}{B n^2 \Xi }  \right)  \notag \\
	& \quad \times \mathbb{E}\left[\left\|\boldsymbol{\theta}_{0}^{1}
-{\boldsymbol{1}_{n}}\bar{\boldsymbol{\theta}}_{0}^{1}\right\|^2\right] \notag \\
	& - \left(\frac{2 \alpha^3   \left(L^2+\frac{3\Psi}{Bn}\right) \Gamma}{ n \left(1-\sigma^2\right) \left(1-\lambda\right)^2 \Xi } + \frac{6 \alpha^3 \Psi K \left(1-\sigma^2\right)}{B n^2 \Xi } \right) \notag \\
	& \quad \times \mathbb{E} \left[\left\|\boldsymbol{y}_{0}^{1}
-{\boldsymbol{1}_{n}}\bar{\boldsymbol{y}}_{0}^{1}\right\|^2\right] \notag \\
	& -  \left(\frac{24 \alpha^3 \left(L^2+\frac{3\Psi}{Bn}\right) \Psi  \Gamma}{  \left(1-\sigma^2\right)^2 \left(1-\lambda\right)^2 \Xi}  + \frac{72 \alpha^3 \Psi^2 K}{B n \Xi}\right) \notag \\
	& \quad \times \sum_{s=0}^{S-1}\sum_{k=0}^{K-1}\mathbb{E}\left[  \left\| \bar{\boldsymbol{ \theta }}_{k+1}^{s+1} - \bar{\boldsymbol{ \theta }}_{0}^{s+1}\right\|^2\right] \notag\\
	&  -   \left(\frac{24 \alpha^3 \left(L^2+\frac{3\Psi}{Bn}\right) \Psi  \Gamma}{  \left(1-\sigma^2\right)^2 \left(1-\lambda\right)^2 \Xi}  + \frac{72 \alpha^3 \Psi^2 K}{B n \Xi}\right) \notag \\
	& \quad \times   \sum_{s=0}^{S-1}\sum_{k=0}^{K-1}\mathbb{E}\left[  \left\| \bar{\boldsymbol{ \theta }}_{k}^{s+1} - \bar{\boldsymbol{ \theta }}_{0}^{s+1}\right\|^2\right] \notag\\
	& -  \frac{2\alpha^3 \left(L^2+\frac{3\Psi}{Bn}\right)}{ n \left(1-\lambda\right)^4\left(1-\sigma^2\right)^2}
	\left(36\Psi+\frac{3 \Psi K \left(1-\lambda\right)\left(1-\sigma^2\right)^2\Phi}{\Xi}\right)\notag\\
	& \quad \times \mathbb{E} \left[\left\|\boldsymbol{\theta}_{0}^{1}
-{\boldsymbol{1}_{n}}\bar{\boldsymbol{\theta}}_{0}^{1}\right\|^2\right]\notag\\
	& - {\frac{\alpha^3 \left(1-\sigma^2\right)  \Phi \left(L^2+\frac{3\Psi}{Bn}\right)}{ 4 n\left(1-\lambda\right)^2 \Xi}}\mathbb{E} \left[\left\|\boldsymbol{y}_{0}^{1}
-{\boldsymbol{1}_{n}}\bar{\boldsymbol{y}}_{0}^{1}\right\|^2\right]\notag\\
	& - \frac{3 \alpha^3  \Psi \Phi \left(L^2+\frac{3\Psi}{Bn}\right) }{\left(1-\lambda\right)^2 \Xi} \sum_{s=0}^{S-1}\sum_{k=0}^{K-1}\mathbb{E}\left[  \left\| \bar{\boldsymbol{ \theta }}_{k+1}^{s+1} - \bar{\boldsymbol{ \theta }}_{0}^{s+1}\right\|^2\right]\notag\\
	&  - \frac{3 \alpha^3  \Psi \Phi \left(L^2+\frac{3\Psi}{Bn}\right) }{ \left(1-\lambda\right)^2 \Xi}\sum_{s=0}^{S-1}\sum_{k=0}^{K-1}\mathbb{E}\left[  \left\| \bar{\boldsymbol{ \theta }}_{k}^{s+1} - \bar{\boldsymbol{ \theta }}_{0}^{s+1}\right\|^2\right]\notag\\
	&+\sum_{s=0}^{S-1}\sum_{k=0}^{K-1}\left(\frac{\frac{1}{\alpha}-L}{4(k+1)}- {\frac{ \alpha^3  c_1 \left(L^2+\frac{3\Psi}{Bn}\right) }{ n \left(1-\lambda\right)^2}  }\right)\notag \\
	& \quad \times\mathbb{E}\left[ \left\| \bar{\boldsymbol{\theta}}_{k+1}^{s+1}-\bar{\boldsymbol{\theta}}_{0}^{s+1} \right\|^2\right] \notag \\
	&- \sum_{s=0}^{S-1}\sum_{k=0}^{K-1}\left({\frac{3\alpha \Psi}{B n}} + \frac{\frac{1}{ \alpha}-L}{2\left(2k+1\right)} +  {\frac{\alpha^3  c_1 \left(L^2+\frac{3\Psi}{Bn}\right) }{ n \left(1-\lambda\right)^2}}\right) \notag \\
	& \quad \times \mathbb{E}\left[ \left\| \bar{\boldsymbol{\theta}}_{k}^{s+1}-\bar{\boldsymbol{\theta}}_{0}^{s+1} \right\|^2\right].\notag
	\end{align}
	Regrouping the terms of the above inequality leads to
	\begin{align}
	&\mathbb{E} \left[J\left ( \bar{\boldsymbol{\theta}}_{K}^{S} \right )\right] \ge \mathbb{E} \left[J\left ( \bar{\boldsymbol{\theta}}_{0}^{1} \right )\right]-\frac{\alpha VKS}{Mn}\notag\\
	& +\frac{\alpha}{2}\sum_{s=0}^{S-1}\sum_{k=0}^{K-1}\mathbb{E} \left[\left\|\nabla J\left(\bar{\boldsymbol{\theta}}_{k}^{s+1}\right)\right\|^{2}\right]\notag\\
	& -\frac{\alpha}{n}{\rho}_4\mathbb{E} \left[\left\|\boldsymbol{\theta}_{0}^{1}-{\boldsymbol{1}_{n}}\bar{\boldsymbol{\theta}}_{0}^{1}\right\|^2\right] -\frac{ \alpha^3}{n} {\rho}_5 \mathbb{E} \left[\left\|\boldsymbol{y}_{0}^{1}-{\boldsymbol{1}_{n}}\bar{\boldsymbol{y}}_{0}^{1}\right\|^2\right]\notag\\
	& +\sum_{s=0}^{S-1}\sum_{k=1}^{K-1}\left(\frac{\frac{1}{2\alpha}-\frac{L}{2}}{2k(2k+1)}- \frac{ 2 \alpha^3 \left(L^2+\frac{3\Psi}{Bn}\right) c_1}{ n \left(1-\lambda\right)^2}\right.\notag\\
&\left. \quad -\frac{3 \alpha \Psi}{B n} - 2 {\rho}_2-2 {\rho}_3\right) \mathbb{E}\left[ \left\| \bar{\boldsymbol{\theta}}_{k}^{s+1}-\bar{\boldsymbol{\theta}}_{0}^{s+1} \right\|^2\right]\notag\\
	&  + \left(\frac{\frac{1}{\alpha}-L}{4K} - \frac{ \alpha^3 \left(L^2+\frac{3\Psi}{Bn}\right) c_1}{ n \left(1-\lambda\right)^2} - {\rho}_2-{\rho}_3\right)\notag\\
	& \quad \times
	\sum_{s=0}^{S-1}\mathbb{E}\left[ \left\| \bar{\boldsymbol{\theta}}_{K}^{s+1}-\bar{\boldsymbol{\theta}}_{0}^{s+1} \right\|^2\right],\label{eq75}
	\end{align}
	where
	\begin{subequations}\label{eq76}
	\begin{align}
	{\rho}_1=&36\Psi+\cfrac{3\Psi K \left(1-\lambda\right)\left(1-\sigma^2\right)^2 \Phi}{\Xi},\\
	{\rho}_2=&\cfrac{  24 \alpha^3  \left(L^2+\frac{3\Psi}{Bn}\right) \Psi \Gamma}{ \left(1-\sigma^2\right)^2\left(1-\lambda\right)^2 \Xi}+ \frac{72 \alpha^3 \Psi^2 K}{B n \Xi}, \label{eqy32} \\
{\rho}_3=&\cfrac{ 3 \alpha^3  \left(L^2+\frac{3\Psi}{Bn}\right) \Psi \Phi}{\left(1-\lambda\right)^2 \Xi}, \label{eqy3} \\
	{\rho}_4=&\frac{\Gamma \left(L^2+\frac{3\Psi}{Bn}\right)} { \left(1-\lambda\right) \Xi}
+\frac{ {2\alpha^2 \left(L^2+\frac{3\Psi}{Bn}\right)} {\rho}_1}{\left(1-\lambda\right)^4\left(1-\sigma^2\right)^2}+\frac{3 \Psi K}{B n \Xi}, \label{eq76a}\\
	{\rho}_5=&\frac{ 2 \Gamma\left(L^2+\frac{3\Psi}{Bn}\right) }{\left(1-\lambda\right)^{2}\left(1-\sigma^2\right) \Xi } +\frac{\left(1-\sigma^2\right) \left(L^2+\frac{3\Psi}{Bn}\right)\Phi}{ {4}\left(1-\lambda\right)^2  \Xi}\notag\\
& +\frac{3 \Psi K \left(1-\sigma^2\right)}{B n \Xi}. \label{eq76b}
	\end{align}
	\end{subequations}

	Then by the definitions of $\rho_2, \rho_3$ in \eqref{eqy32}  and \eqref{eqy3}, the coefficients of the last two terms of  \eqref{eq75} satisfy
	\begin{align}
	&\frac{\frac{1}{2\alpha}-\frac{L}{2}}{2k(2k+1)}
	-\frac{3 \alpha \Psi}{B n} -\frac{ 2 \alpha^3 \left(L^2+\frac{3\Psi}{Bn}\right) c_1}{ n \left(1-\lambda\right)^2}-2 {\rho}_2-2 {\rho}_3 \notag \\
	\ge& \frac{\frac{1}{2\alpha}-\frac{L}{2}}{4K^{2}} - \frac{3 \alpha \Psi}{B n} -\frac{144 \alpha^3 \Psi^2 K}{B n \left(\left(1-\lambda\right)^2\left(1-\sigma^2\right)^2-48\alpha^2\Psi K \right)}
\notag \\
	 &-\frac{6 \alpha^3  \Psi \left(L^2+\frac{3\Psi}{Bn}\right) \left( 8\left(1-\lambda\right)\left(3-\lambda\right)+ \left(1-\sigma^2\right)\right)}{  \left(1-\lambda\right)^2\left(\left(1-\lambda\right)^2\left(1-\sigma^2\right)^2-48\alpha^2\Psi K\right)}, \label{eq77} \\
&\frac{\frac{1}{\alpha}-L}{4K}-\frac{\alpha^3  {\left(L^2+\frac{3\Psi}{Bn}\right)} c_1}{ { n } \left(1-\lambda\right)^2}  - {\rho}_2-{\rho}_3 \notag \\
	=& \frac{\frac{1}{\alpha}-L}{4K} -\frac{72 \alpha^3 \Psi^2 K}{B n \left(\left(1-\lambda\right)^2\left(1-\sigma^2\right)^2-48\alpha^2\Psi K\right)}\notag \\
&-\frac{ 3 \alpha^3  \Psi \left(L^2+\frac{3\Psi}{Bn}\right) \left( 8\left(1-\lambda\right)\left(3-\lambda\right)+ \left(1-\sigma^2\right)\right)}{  \left(1-\lambda\right)^2\left(\left(1-\lambda\right)^2\left(1-\sigma^2\right)^2-48\alpha^2\Psi K\right)}.\label{eq78}
	\end{align}

    If $0 < \alpha \le \frac{1}{2L}$, we have
	\begin{align}\label{eq79}
	\frac{1}{ \alpha}-L \ge L.
	\end{align}
   By recalling that $\lambda \le \frac{7+\sigma^2}{8}$  and   $0 < \alpha \le \frac{\left(1-\sigma^2\right)^2}{96\sqrt{3 \Psi K}}$, we get
    \begin{align}\label{eq79b}
    \left(1-\lambda\right)^2\left(1-\sigma^2\right)^2-48\alpha^2\Psi K > \frac{\left(1-\sigma^2\right)^4}{72}.
    \end{align}
    By utilizing (\ref{eq79b}) and the definition of $B_1(\alpha)$ in (\ref{eq_B1}), we derive
    \begin{align}\label{B_1}
    B_1(\alpha) \ge \frac{48 \alpha \Psi K^2 \left(1-\lambda\right)^2\left(1-\sigma^2\right)^2}{n L \left(\left(1-\lambda\right)^2\left(1-\sigma^2\right)^2 - 48 \alpha^2\Psi K\right)}.
    \end{align}
    Since $B \ge B_1(\alpha)$, we have
    \begin{equation}\label{part_77_1}
    \frac{3 \alpha \Psi}{B n} + \frac{144 \alpha^3 \Psi^2 K}{B n \left(\left(1-\lambda\right)^2\left(1-\sigma^2\right)^2-48\alpha^2\Psi K\right)} \le \frac{L}{16 K^2}.
    \end{equation}
    We use (\ref{eq79b}) and $B_2(\alpha)$ in (\ref{eq_B2}) to obtain,
    \begin{equation}\label{B_2}
    B_2(\alpha) \ge \frac{288 \alpha^3 \Psi^2 K^2 \left( 8\left(1-\lambda\right)\left(3-\lambda\right)+ \left(1-\sigma^2\right)\right)}{b_1(\alpha)},
    \end{equation}
    where $b_1(\alpha)=n L (1-\lambda)^2\left((1-\lambda)^2(1-\sigma^2)^2-48\alpha^2\Psi K\right)-96 \alpha^3 \Psi K^2 L^2 n \left( 8\left(1-\lambda\right)\left(3-\lambda\right)+ \left(1-\sigma^2\right)\right)>0$ with $0 < \alpha < \frac{\left(1-\sigma^2\right)^2}{24 \sqrt[3]{4 \Psi K^2 L \left(\left(1-\sigma^2\right)^2+24\left(1-\sigma^2\right)\right)}}$.
    Recalling $B \ge B_2(\alpha)$, we have
    \begin{equation}\label{part_77_2}
    \frac{ 6 \alpha^3  \Psi \left(L^2+\frac{3\Psi}{B n}\right) \left( 8\left(1-\lambda\right)\left(3-\lambda\right)+ \left(1-\sigma^2\right)\right)}{  \left(1-\lambda\right)^2\left(\left(1-\lambda\right)^2\left(1-\sigma^2\right)^2-48\alpha^2\Psi K\right)} \le \frac{L}{16 K^2}.
    \end{equation}
    Combining (\ref{part_77_1}), (\ref{part_77_2}) and (\ref{eq79}) yields (\ref{eq77}) $ \ge 0$.

    Similarly, applying (\ref{eq79b}) and $B_1(\alpha)$ in (\ref{eq_B1}) leads to
    \begin{equation}\label{B_3}
    B_1(\alpha) \ge \frac{576 \alpha^3 \Psi^2 K^2}{n L \left(\left(1-\lambda\right)^2\left(1-\sigma^2\right)^2-48\alpha^2\Psi K\right)},
    \end{equation}
    with $0 < \alpha < \frac{\left(1-\sigma^2\right)^2}{16 \sqrt{3 \Psi}}$.
    Since $B \ge B_1(\alpha)$, we have
    \begin{equation}\label{part_78_1}
    \frac{72 \alpha^3 \Psi^2 K}{B n \left(\left(1-\lambda\right)^2\left(1-\sigma^2\right)^2-48\alpha^2\Psi K\right)} \le \frac{L}{8K}.
    \end{equation}
    By using (\ref{eq79b}) and $B_2(\alpha)$ in (\ref{eq_B2}), we obtain that
    \begin{equation}\label{B_4}
    B_2(\alpha) \ge \frac{72 \alpha^3 \Psi^2 K \left( 8\left(1-\lambda\right)\left(3-\lambda\right)+ \left(1-\sigma^2\right)\right)}{b_2(\alpha)},
    \end{equation}
    where $b_2(\alpha) =n L (1-\lambda)^2\left((1-\lambda)^2(1-\sigma^2)^2-48\alpha^2\Psi K\right)-24 \alpha^3 \Psi K L^2 n \left( 8\left(1-\lambda\right)\left(3-\lambda\right)+ \left(1-\sigma^2\right)\right) >0 $ with $0 < \alpha < \frac{\left(1-\sigma^2\right)^2}{24 \sqrt[3]{ \Psi K L \left(\left(1-\sigma^2\right)^2+24\left(1-\sigma^2\right)\right)}}$. Since $B \ge B_2(\alpha)$, we have
    \begin{equation}\label{part_78_2}
    \frac{  3 \alpha^3  \Psi \left(L^2+\frac{3\Psi}{Bn}\right) \left( 8\left(1-\lambda\right)\left(3-\lambda\right)+ \left(1-\sigma^2\right)\right)}{  \left(1-\lambda\right)^2\left(\left(1-\lambda\right)^2\left(1-\sigma^2\right)^2-48\alpha^2\Psi K\right)} \le \frac{L}{8K}.
    \end{equation}
    By combining (\ref{part_78_1}), (\ref{part_78_2}) and (\ref{eq79}), we observe that (\ref{eq78}) $ \ge 0$.

    Therefore, we can drop the last two terms of (\ref{eq75}) to obtain
	\begin{equation*}
	\begin{aligned}
	\frac{\alpha}{2}\sum_{s=0}^{S-1}\sum_{k=0}^{K-1} & \mathbb{E} \left[\left\|\nabla J\left(\bar{\boldsymbol{\theta}}_{k}^{s+1}\right)\right\|^{2}\right]\le \mathbb{E} \left[J\left ( \bar{\boldsymbol{\theta}}_{K}^{S} \right )\right] - \mathbb{E} \left[J\left ( \bar{\boldsymbol{\theta}}_{0}^{1} \right )\right]\\
	&+\frac{ \alpha VKS}{Mn} +\frac{\alpha}{n}{\rho}_4\mathbb{E} \left[\left\|\boldsymbol{\theta}_{0}^{1}-{\boldsymbol{1}_{n}}\bar{\boldsymbol{\theta}}_{0}^{1}\right\|^2\right]\\
	&+\frac{\alpha^3}{n}{\rho}_5\mathbb{E} \left[\left\|\boldsymbol{y}_{0}^{1}-{\boldsymbol{1}_{n}}\bar{\boldsymbol{y}}_{0}^{1}\right\|^2\right].
	\end{aligned}
	\end{equation*}
	Multiplying both sides of the above inequality with $\frac{2}{\alpha}$ yields
	\begin{equation}\label{eq82}
	\begin{aligned}
	\mathbb{E} \left[\left\|\nabla J\left(\bar{\boldsymbol{\theta}}_{\text{out}}\right)\right\|^{2}\right] & \le \frac{2\left(J\left ( {\boldsymbol{\theta}}^{*} \right ) -  J\left ( \bar{\boldsymbol{\theta}}_{0}^{1} \right )\right)}{\alpha KS} +\frac{2 V}{Mn}\\
	& + \frac{2}{n}{\rho}_4\mathbb{E} \left[\left\|\boldsymbol{\theta}_{0}^{1}-{\boldsymbol{1}_{n}}\bar{\boldsymbol{\theta}}_{0}^{1}\right\|^2\right]\\
	& + \frac{2 \alpha^2}{n}{\rho}_5 \mathbb{E} \left[\left\|\boldsymbol{y}_{0}^{1}-{\boldsymbol{1}_{n}}\bar{\boldsymbol{y}}_{0}^{1}\right\|^2\right].
	\end{aligned}
	\end{equation}
	
	Using ${\rho}_4$, ${\rho}_5$ in (\ref{eq76}) and $\lambda \le \frac{7+\sigma^2}{8}$ gives
	\begin{equation}\label{eq83}
	\begin{aligned}
	{\rho}_4
	&= \frac{6 \alpha^2 \Psi K \left(L^2+\frac{3\Psi}{Bn}\right) \left(\frac{12}{K}+\frac{1-\sigma^2}{1-\lambda}+8-\frac{576\alpha^2 \Psi}{\left(1-\lambda\right)^2\left(1-\sigma^2\right)^2}\right)}{\left(1-\lambda\right)^2\left(\left(1-\lambda\right)^2\left(1-\sigma^2\right)^2-48\alpha^2\Psi K\right)}\\
	& + \frac{\left(L^2 B n +3\Psi\right)\left(\left(1-\sigma^2\right)^2+\frac{48 \alpha^2 \Psi K}{1-\lambda}\right)+3\Psi K}{B n\left(\left(1-\lambda\right)^2\left(1-\sigma^2\right)^2-48\alpha^2\Psi K\right)} \le v_{1},
	\end{aligned}
	\end{equation}
	\begin{equation}\label{eq84}
	\begin{aligned}
	{\rho}_5
	&=\frac{\left(L^2+\frac{3\Psi}{Bn}\right)\left(16 \left(1-\lambda\right)+\frac{384\alpha^2 \Psi K}{\left(1-\sigma^2\right)^2}+ \left(1-\sigma^2\right)\right)}{\frac{4\left(1-\lambda\right)^2}{1-\sigma^2}\left(\left(1-\lambda\right)^2\left(1-\sigma^2\right)^2-48\alpha^2\Psi K\right)}\\
    & +\frac{3 \Psi K \left(1-\sigma^2\right)}{B n \left(\left(1-\lambda\right)^2\left(1-\sigma^2\right)^2-48\alpha^2\Psi K\right)} \le v_{2}.
	\end{aligned}
	\end{equation}
	Substituting the above two inequalities into (\ref{eq82}) proves Theorem \ref{Thm1}.
\end{proof}

\section{Numerical results}
In this section, we evaluate the performance of Algorithm \ref{Alg1} on the classical reinforcement learning environment in OpenAI Gym. 
Consider a network of $N$ agents ($N$ cars) distributed at the bottom of a valley, where agents want to reach their target horizontal positions (flags) on top of the mountain. Agents are able to observe the global state, which is the position of target flag, the position and velocity of other agents. Here we consider the continuous action space, and the possible actions for each agent is the power coefficient ranging from $\left[-1.0, 1.0\right]$. A reward of 100 is awarded when each agent reaches to the target flags. Otherwise the reward is plus $-1$ based on the amount of energy consumed each step and the conflict with other agents.
Agents are connected through a communication network with other agents. 

We compare the proposed Algorithm \ref{Alg1}, abbreviated as DGT-SVRPG, with distributed G(PO)MDP in the multi-agent settings (D-GPOMDP), and distributed G(PO)MDP with gradient tracking (DGT-GPOMDP).
Our implementation is based on the implementation of SVRPG provided by rllab library \cite{papini2018stochastic}.
For the parameterized policy function of each agent $i$, we utilize the following Gaussian policy with a fixed standard deviation $\tilde{\sigma}$.
\begin{equation*}
\pi_{i,\boldsymbol{\theta}}(\boldsymbol{a}_{i}|\boldsymbol{s}) = \frac{1}{\sqrt{2\pi}\tilde{\sigma}}\exp\left(-\frac{\left(\mu_{i,\boldsymbol{\theta }}(\boldsymbol{s})-\boldsymbol{a}_{i}\right)^{2}}{2\tilde{\sigma}^2}\right),
\end{equation*}
where the mean $\mu_{i, \boldsymbol{\theta}}$ is modeled by a neural network with Tanh as the activation function. Under the Gaussian policy, it is easy to verify that Assumption \ref{Asu1} is satisfied.

According to the practical suggestions in SVRPG \cite{papini2018stochastic}, we use the adaptive step size by incorporating Adam to improve the performance. Recall that the proposed DGT-SVRPG algorithm performs the update of policy parameter ${\boldsymbol{\theta}}$ through the auxiliary variable ${\boldsymbol{y}}$ in the sub-iteration.
For each agent $i$, the update using Adam is  given by
\begin{equation*}
{\boldsymbol{\theta}}_{i,k+1}^{s+1} = \sum_{r \in \mathcal{N}_i}w_{ir}{\boldsymbol{\theta}}_{r,k}^{s+1} +\alpha_{k}^{s+1}\left({\boldsymbol{y}}_{i,k}^{s+1} \right),
\end{equation*}
where $\alpha_{k}^{s+1}$ is the adaptive step size associated with the sub-iterations. We also use Adam in the practical implementations of D-GPOMDP and DGT-GPOMDP for a fair comparison.

For the parameters used in the experiments, the task horizon is set as $1000$ and the number of total trajectories is 4000. We set $M=10$, $B=5$, $K=2$, $\gamma = 0.999$ by considering the values suggested in SVRPG \cite{papini2018stochastic} and theoretical analysis in Corollary \ref{Col1}.
We utilize the star graph as the communication network for multiple agents. We run each experiment repeatedly for 20 times with a random policy initialization, and present the averaged global returns with standard deviation. Fig. \ref{fig:different algorithms} demonstrates the performance comparison of different algorithms. It is observed that the proposed DGT-SVRPG algorithm converges to a global return relatively higher than the other two algorithms without variance reduction.

\begin{figure}[htbp]
\centering
\subfloat[3-agent MountainCar environment]{ \includegraphics[width=0.45\textwidth]{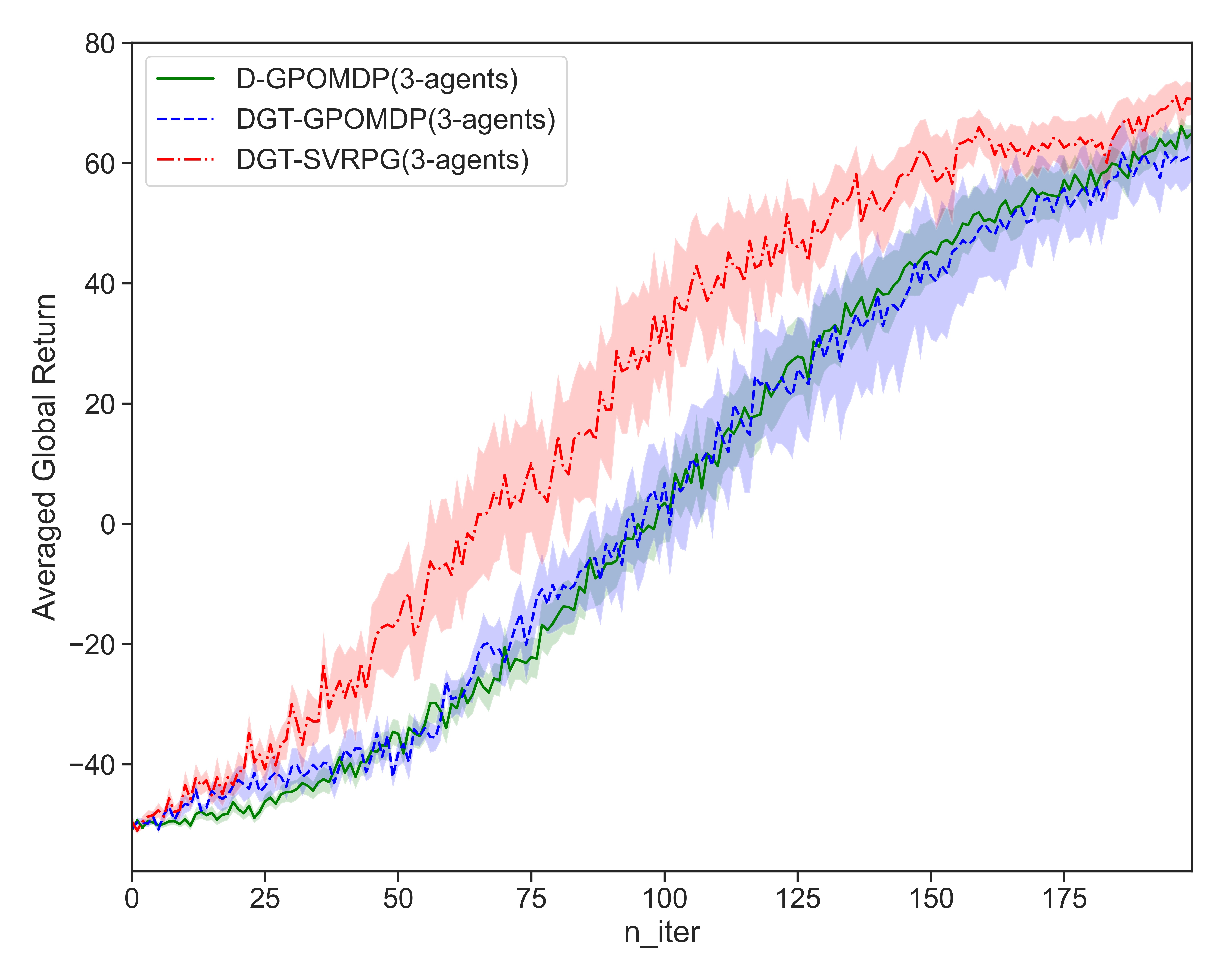}
\label{fig:1}
}
\hfill
\subfloat[5-agent MountainCar environment]{ \includegraphics[width=0.45\textwidth]{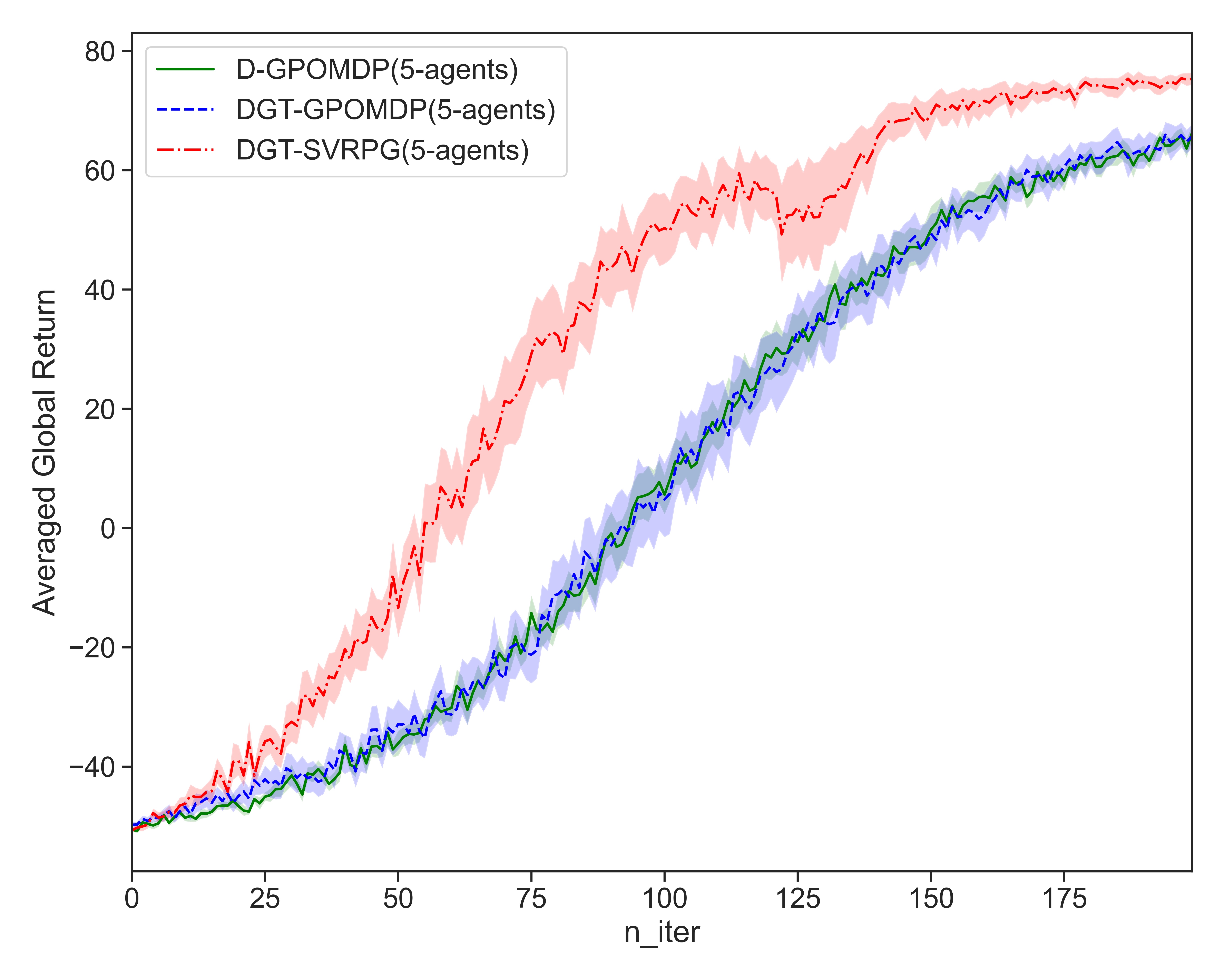}
\label{fig:2}
}
\caption{The averaged global returns of different algorithms}
\label{fig:different algorithms}
\end{figure}

{\textbf{Influence of mini-batch size $B$ on performance}}. We run DGT-SVRPG algorithm with different mini-batch sizes within each epoch to analyze its influence on the performance. Batch size (number of sampled trajectories in the outer loop) is fixed as $M=10$ and mini-batch size (number of sampled trajectories in the inner loop) is selected as $B=[3, 5, 7]$. As the mini-batch size increases, we also scale the step size proportionally such that $\alpha = [0.0015,0.0025,0.0035]$.
The effect of mini-batch size on the performance of DGT-SVRPG is displayed in Fig. \ref{fig:different batchsizes}, which clearly show that the mini-batch size $B=7$ achieves the best performance. While larger mini-batch size requires more sampled trajectories and increases the sampling and computational expenses.
Therefore, mini-batch size $B$ should be properly chosen to balance the performance and computation cost in practice.

\begin{figure}[htbp]
\centering
\subfloat[3-agent MountainCar environment]{ \includegraphics[width=0.45\textwidth]{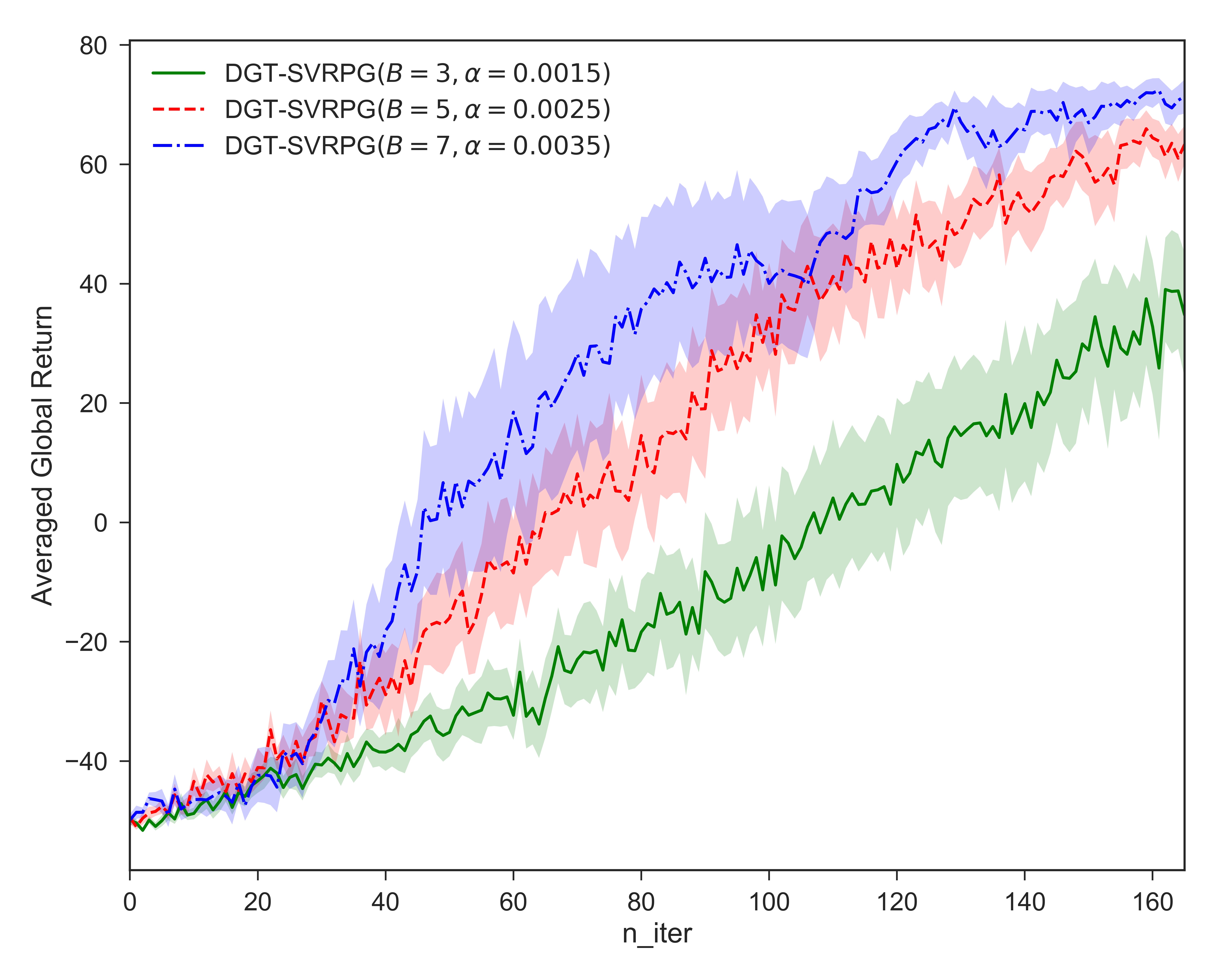}
\label{fig:3}
}
\hfill
\subfloat[5-agent MountainCar environment]{ \includegraphics[width=0.45\textwidth]{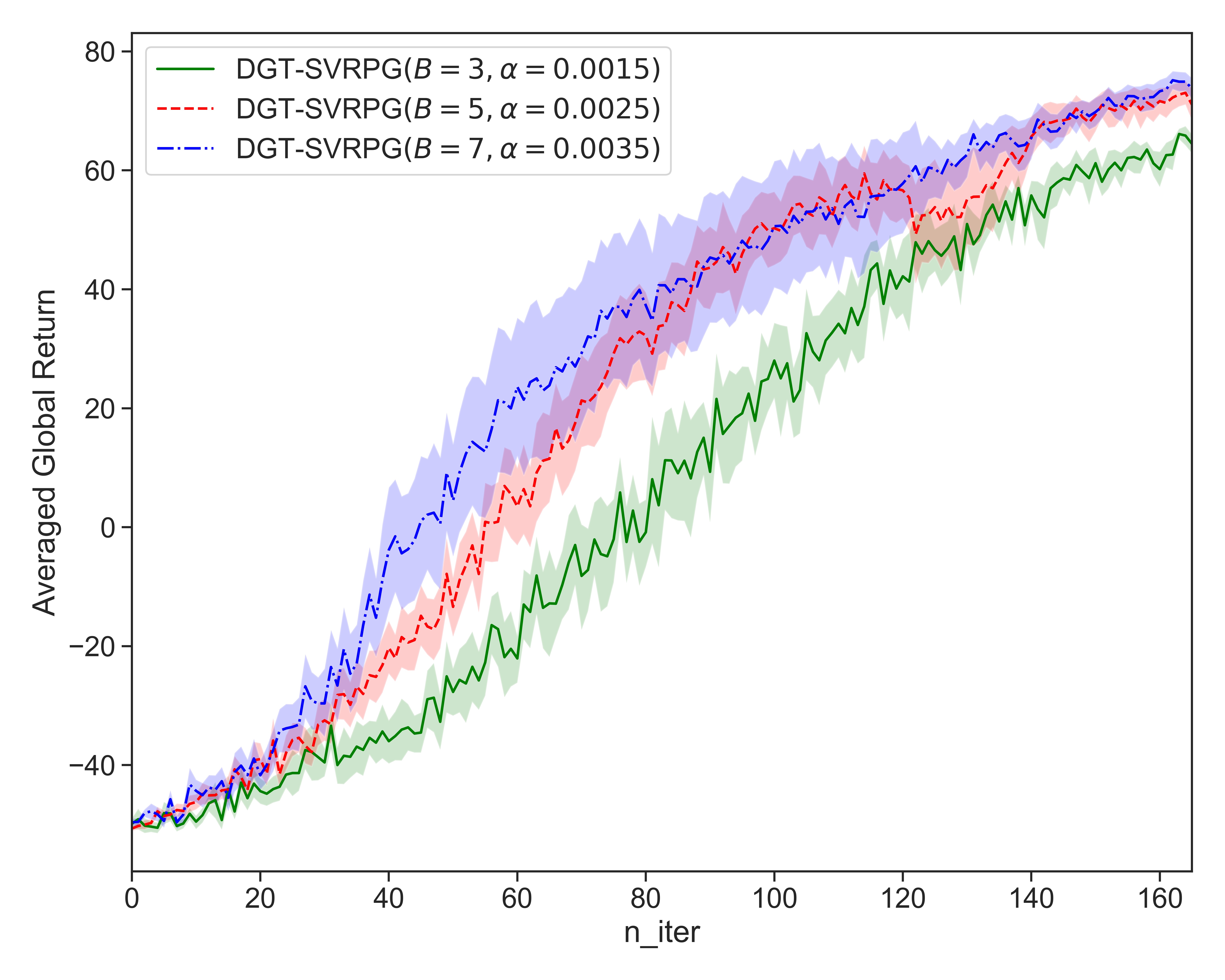}
\label{fig:4}
}
\caption{Performance of DGT-SVRPG with different mini-batch sizes}
\label{fig:different batchsizes}
\end{figure}

{\textbf{Influence of graph on performance}}. Theorem \ref{Thm1} shows the convergence of Algorithm \ref{Alg1} is related to the spectral norm $\sigma$ of matrix $\boldsymbol{W}-\frac{1}{n}{\boldsymbol{1}_{n}}{\boldsymbol{1}_{n}^{\rm{T}}}$, which implies the influence of communication network topology on the convergence. We compare the performance of DGT-SVRPG algorithm with different communication network graphs in multiple-agent MountainCar environment.  The averaged global returns of different communication graphs are shown in Fig. \ref{fig:different graphs}.
As can be seen, DGT-SVRPG in the complete graph is more superior than ring graph and star graph.

\begin{figure}[htbp]
\centering
\subfloat[3-agent MountainCar environment]{ \includegraphics[width=0.45\textwidth]{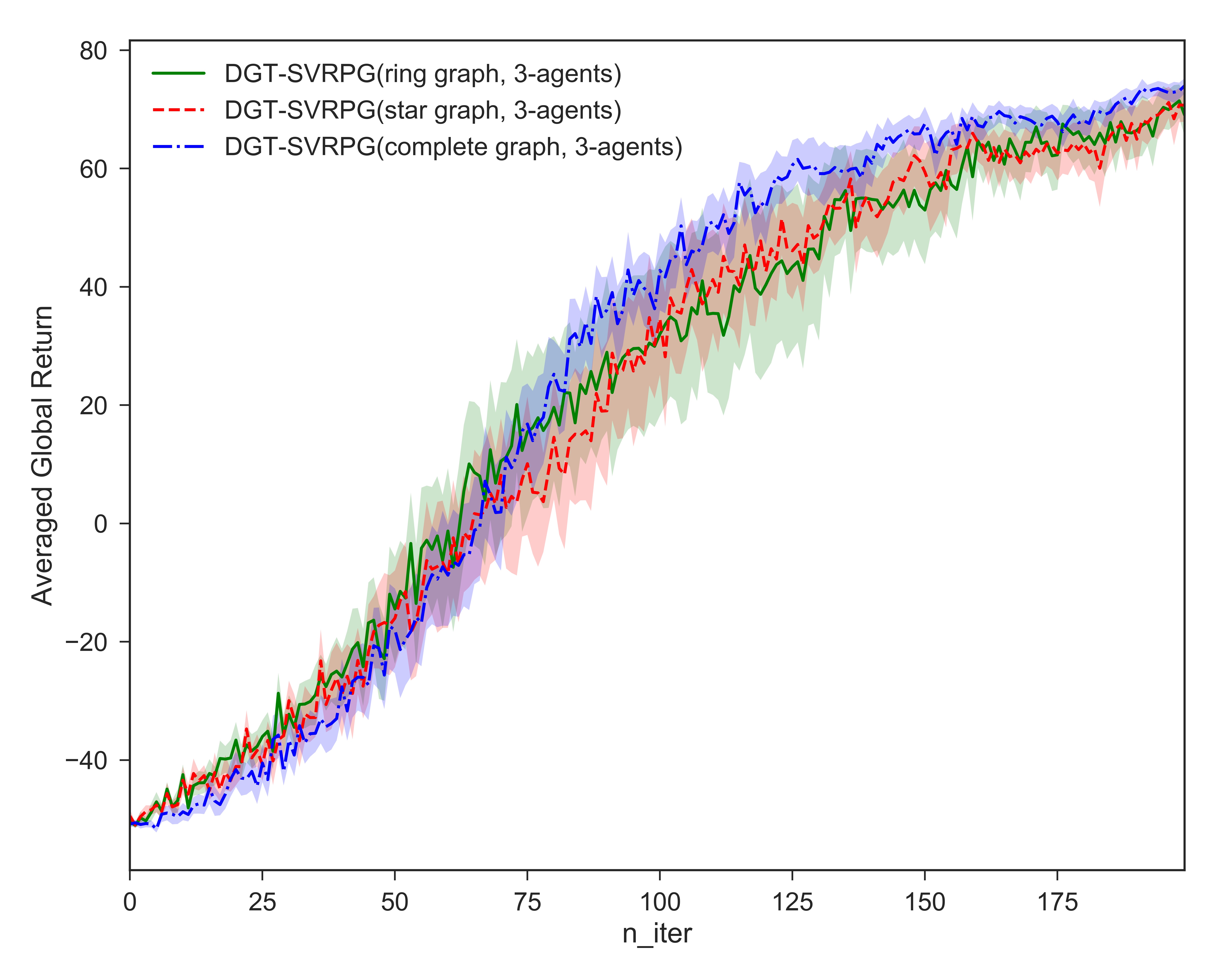}
\label{fig:5}
}
\hfill
\subfloat[5-agent MountainCar environment]{ \includegraphics[width=0.45\textwidth]{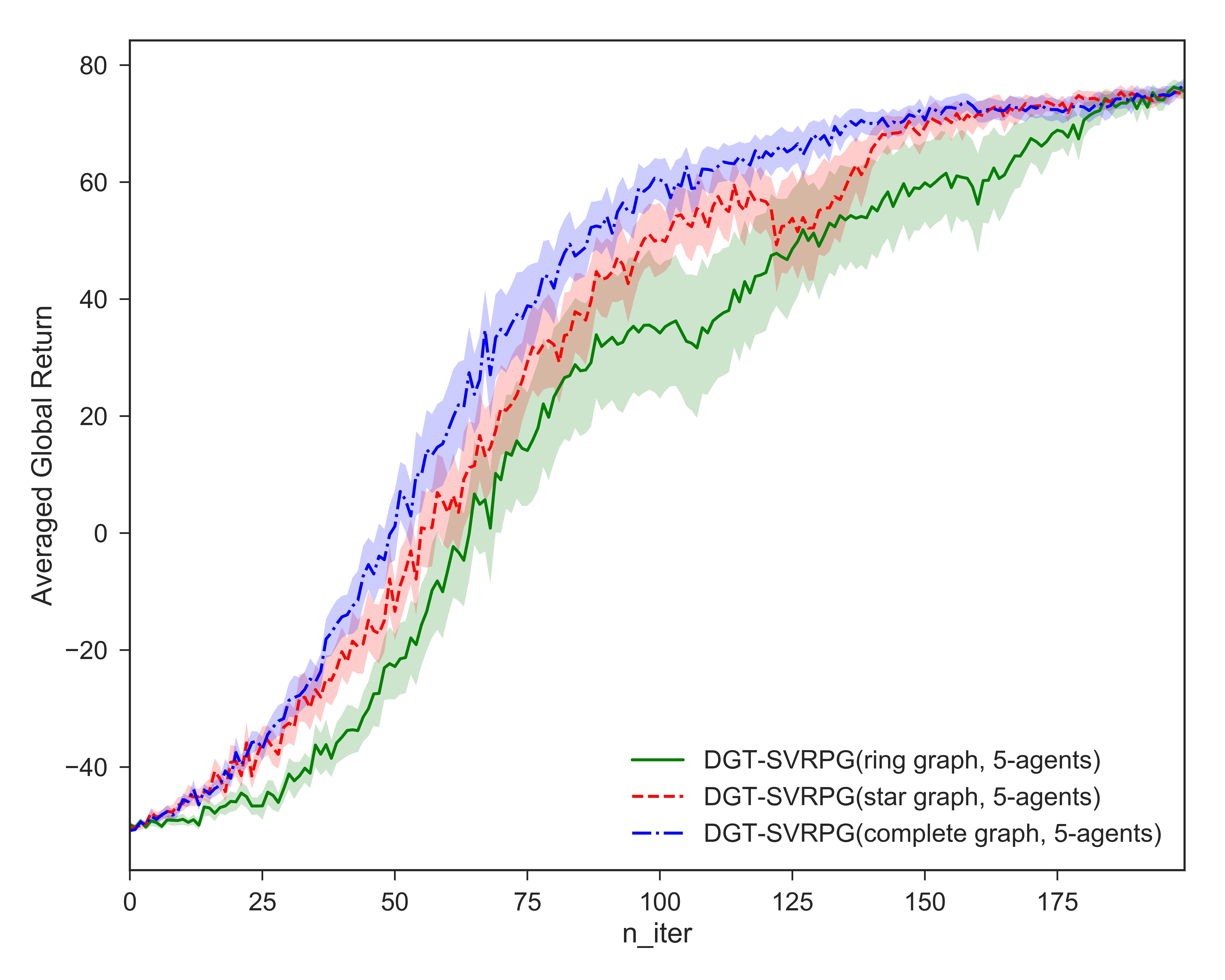}
\label{fig:6}
}
\caption{Performance of DGT-SVRPG with different communication graphs.}
\label{fig:different graphs}
\end{figure}   

\section{Conclusion}
In this paper, we propose a distributed stochastic policy gradient algorithm with variance reduction and gradient tracking for cooperative MARL over networks. The importance weight is also incorporated to address the distribution shift problem in the sampling. We prove that with appropriate step size and mini-batch size, the proposed algorithm converges to an $\epsilon$-approximate stationary point of the non-concave performance function. Experiments validate the advantages of our proposed algorithm. It is of interest to establish the finite sample analysis and global convergence of distributed policy gradient in MARL.


%

\appendices

\section{Auxiliary Results}
In this section, we provide some auxiliary results for analyzing the convergence of Algorithm \ref{Alg1}.
We first establish the interrelationships between the consensus error $\mathbb{E}\left[\left\|\boldsymbol{\theta}_{k}^{s+1}-\boldsymbol{1}_{n}\bar{\boldsymbol{\theta}}_{k}^{s+1}\right\|^2\right]$ and gradient tracking error $\mathbb{E}\left[\left\|\boldsymbol{y}_{k}^{s+1}-\boldsymbol{1}_{n}\bar{\boldsymbol{y}}_{k}^{s+1}\right\|^2\right]$.

\begin{Lem}\label{Lem6}
	Let Assumption 4 hold. Then we have the following inequalities  for any $ k,s \ge 0$.
	\begin{equation}\label{eq22}
	\begin{aligned}
	&\mathbb{E}\left[\left\| {{{\boldsymbol{\theta }}_{k+1}^{s+1}} - \boldsymbol{1}_{n}{{ \bar{\boldsymbol{ \theta }}_{k+1}^{s+1} }}} \right\|^2\right]
	\leq  2\mathbb{E}\left[  \left\| {{\boldsymbol{\theta }}_{k}^{s+1} - \boldsymbol{1}_{n}\bar{\boldsymbol{ \theta }}_{k}^{s+1}} \right\|^2\right] \\
	&\qquad+ 2{\alpha}^2\mathbb{E}\left[\left\| {{{\boldsymbol{y}}_{k}^{s+1}} -\boldsymbol{1}_{n}\bar{\boldsymbol{y}}_{k}^{s+1}} \right\|^2\right],
	\end{aligned}
	\end{equation}
	\begin{equation}\label{eq23}
	\begin{aligned}
	&\mathbb{E}\left[\left\| {{{\boldsymbol{\theta }}_{k+1}^{s+1}} - \boldsymbol{1}_{n}{{ \bar{\boldsymbol{ \theta }}_{k+1}^{s+1} }}} \right\|^2\right]
	\leq \frac{1+\sigma^2}{2}\mathbb{E}\left[  \left\| {{\boldsymbol{\theta }}_{k}^{s+1} - \boldsymbol{1}_{n}\bar{\boldsymbol{ \theta }}_{k}^{s+1}} \right\|^2\right]\\
	&\qquad+ \frac{2{\alpha}^2}{1-\sigma^2}\mathbb{E}\left[\left\| {{{\boldsymbol{y}}_{k}^{s+1}} -\boldsymbol{1}_{n}\bar{\boldsymbol{y}}_{k}^{s+1}} \right\|^2\right].
	\end{aligned}
	\end{equation}
\end{Lem}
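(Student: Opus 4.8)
The plan is to derive an exact linear recursion for the consensus residual and then apply two versions of Young's inequality. Recall that we have set $d=1$, so $\boldsymbol{W}\otimes\boldsymbol{I}_d = \boldsymbol{W}$ and (\ref{eq13a}) reads $\boldsymbol{\theta}_{k+1}^{s+1} = \boldsymbol{W}\boldsymbol{\theta}_k^{s+1} + \alpha\boldsymbol{y}_k^{s+1}$. Introduce the matrix $\boldsymbol{M} := \boldsymbol{W} - \tfrac{1}{n}\boldsymbol{1}_n\boldsymbol{1}_n^{\rm T}$, whose spectral norm equals $\sigma$ by the remark following Assumption \ref{Asu4}. Since $\boldsymbol{W}$ is doubly stochastic we have $\boldsymbol{1}_n^{\rm T}\boldsymbol{W} = \boldsymbol{1}_n^{\rm T}$, hence $\bar{\boldsymbol{\theta}}_{k+1}^{s+1} = \bar{\boldsymbol{\theta}}_k^{s+1} + \alpha\bar{\boldsymbol{y}}_k^{s+1}$, and subtracting $\boldsymbol{1}_n\bar{\boldsymbol{\theta}}_{k+1}^{s+1}$ from the update gives
\begin{equation*}
\boldsymbol{\theta}_{k+1}^{s+1} - \boldsymbol{1}_n\bar{\boldsymbol{\theta}}_{k+1}^{s+1} = \boldsymbol{M}\boldsymbol{\theta}_k^{s+1} + \alpha\left(\boldsymbol{I}_n - \tfrac{1}{n}\boldsymbol{1}_n\boldsymbol{1}_n^{\rm T}\right)\boldsymbol{y}_k^{s+1}.
\end{equation*}
The key simplification is that $\boldsymbol{M}\boldsymbol{1}_n = \boldsymbol{0}$, so $\boldsymbol{M}\boldsymbol{\theta}_k^{s+1} = \boldsymbol{M}(\boldsymbol{\theta}_k^{s+1} - \boldsymbol{1}_n\bar{\boldsymbol{\theta}}_k^{s+1})$, while $(\boldsymbol{I}_n - \tfrac{1}{n}\boldsymbol{1}_n\boldsymbol{1}_n^{\rm T})\boldsymbol{y}_k^{s+1} = \boldsymbol{y}_k^{s+1} - \boldsymbol{1}_n\bar{\boldsymbol{y}}_k^{s+1}$ is the orthogonal projection of $\boldsymbol{y}_k^{s+1}$ onto $\boldsymbol{1}_n^{\perp}$. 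This produces the exact residual recursion
\begin{equation*}
\boldsymbol{\theta}_{k+1}^{s+1} - \boldsymbol{1}_n\bar{\boldsymbol{\theta}}_{k+1}^{s+1} = \boldsymbol{M}\left(\boldsymbol{\theta}_k^{s+1} - \boldsymbol{1}_n\bar{\boldsymbol{\theta}}_k^{s+1}\right) + \alpha\left(\boldsymbol{y}_k^{s+1} - \boldsymbol{1}_n\bar{\boldsymbol{y}}_k^{s+1}\right),
\end{equation*}
on which both claimed bounds rest, after using $\|\boldsymbol{M}(\cdot)\| \le \sigma\|\cdot\|$.

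For (\ref{eq22}) I would apply the crude inequality $\|\boldsymbol{a}+\boldsymbol{b}\|^2 \le 2\|\boldsymbol{a}\|^2 + 2\|\boldsymbol{b}\|^2$ to the recursion, obtaining the right-hand side $2\sigma^2\|\boldsymbol{\theta}_k^{s+1}-\boldsymbol{1}_n\bar{\boldsymbol{\theta}}_k^{s+1}\|^2 + 2\alpha^2\|\boldsymbol{y}_k^{s+1}-\boldsymbol{1}_n\bar{\boldsymbol{y}}_k^{s+1}\|^2$, and then bounding $\sigma^2 \le 1$ before taking expectations. For the sharper bound (\ref{eq23}) I would instead use the parametrized Young inequality $\|\boldsymbol{a}+\boldsymbol{b}\|^2 \le (1+\beta)\|\boldsymbol{a}\|^2 + (1+\tfrac{1}{\beta})\|\boldsymbol{b}\|^2$ with $\beta = \frac{1-\sigma^2}{2\sigma^2}$, chosen precisely so that $(1+\beta)\sigma^2 = \frac{1+\sigma^2}{2}$; the companion coefficient then becomes $1+\tfrac{1}{\beta} = \frac{1+\sigma^2}{1-\sigma^2} \le \frac{2}{1-\sigma^2}$, which yields exactly the stated constants once expectations are taken.

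The argument is essentially deterministic: each displayed inequality holds pathwise for the iterates, so the expectation in (\ref{eq22})--(\ref{eq23}) is simply applied at the end. There is no genuine obstacle here; the only points requiring care are the algebraic reduction $\boldsymbol{M}\boldsymbol{\theta}_k^{s+1} = \boldsymbol{M}(\boldsymbol{\theta}_k^{s+1} - \boldsymbol{1}_n\bar{\boldsymbol{\theta}}_k^{s+1})$, which relies on $\boldsymbol{M}\boldsymbol{1}_n = \boldsymbol{0}$, and the exact tuning of the Young parameter $\beta$ that converts the raw factor $\sigma^2$ into the contraction factor $\frac{1+\sigma^2}{2}<1$ needed for the subsequent consensus analysis in Lemma \ref{Lem9}.
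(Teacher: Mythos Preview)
Your proposal is correct and follows essentially the same approach as the paper's proof: both derive the residual recursion from the update rule and the doubly stochastic property of $\boldsymbol{W}$, apply Young's inequality with parameter $\eta=1$ for (\ref{eq22}) and $\eta=\frac{1-\sigma^2}{2\sigma^2}$ for (\ref{eq23}), and invoke the spectral bound $\|\boldsymbol{W}\boldsymbol{\theta}_k^{s+1}-\boldsymbol{1}_n\bar{\boldsymbol{\theta}}_k^{s+1}\|\le\sigma\|\boldsymbol{\theta}_k^{s+1}-\boldsymbol{1}_n\bar{\boldsymbol{\theta}}_k^{s+1}\|$. The only cosmetic difference is that you phrase the contraction via the matrix $\boldsymbol{M}=\boldsymbol{W}-\tfrac{1}{n}\boldsymbol{1}_n\boldsymbol{1}_n^{\rm T}$ and its spectral norm, whereas the paper cites this bound as a standalone inequality.
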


\begin{proof}
	With the definition in  (\ref{eq13a}), we have
	\begin{align}
	&\left\| {{{\boldsymbol{\theta }}_{k+1}^{s+1}} - \boldsymbol{1}_{n}{ \bar{\boldsymbol{ \theta }}_{k+1}^{s+1} }} \right\|^2 \notag \\
	\stackrel{(a)}{=}& \left\| {\boldsymbol{W}{{\boldsymbol{\theta }}_{k}^{s+1}} - \boldsymbol{1}_{n}{{\bar{\boldsymbol{ \theta }}}_{k}^{s+1}} + \alpha {{\boldsymbol{y}}_{k}^{s+1}}  - \alpha \boldsymbol{1}_{n}{\bar{\boldsymbol{ y}}_{k}^{s+1}}} \right\|^2 \notag \\
	\stackrel{(b)}{\leq}& \left(1+\eta\right)\left\| \boldsymbol{W}{{\boldsymbol{\theta }}_{k}^{s+1}} - \boldsymbol{1}_{n}{{\bar{\boldsymbol{ \theta }}}_{k}^{s+1}}\right\|^2 \notag \\
	&+ \left(1+\frac{1}{\eta}\right) {\alpha^{2}} \left\|{{{\boldsymbol{y}}_{k}^{s+1}} -\boldsymbol{1}_{n} \bar{\boldsymbol{y}}_{k}^{s+1}}\right\|^2  \notag \\
	\stackrel{(c)}{\leq}& \left(1+\eta\right){\sigma}^2\left\|{{\boldsymbol{\theta }}_{k}^{s+1} - \boldsymbol{1}_{n}\bar{\boldsymbol{ \theta }}_{k}^{s+1}}\right\|^2  \notag\\
	&+ \left(1+\frac{1}{\eta}\right) {\alpha^{2}} \left\|{{{\boldsymbol{y}}_{k}^{s+1}} -\boldsymbol{1}_{n} \bar{\boldsymbol{y}}_{k}^{s+1}}\right\|^2, \label{eq24}
	\end{align}
	where in (a) we use  (\ref{eq14a}), (\ref{eq14b})  and the doubly stochastic $\boldsymbol{W}$  {from} Assumption \ref{Asu4}, in (b) we apply the basic Young's inequality and in (c) we use the following lemma from \cite{xin2020variance}
    \begin{align}
    \left\|\boldsymbol{W}\boldsymbol{\theta}_{k}^{s+1}
    -\boldsymbol{1}_{n}\bar{\boldsymbol{\theta}}_{k}^{s+1}\right\|
    \le \sigma \left\|\boldsymbol{\theta}_{k}^{s+1}
    -\boldsymbol{1}_{n}\bar{\boldsymbol{\theta}}_{k}^{s+1}\right\|. \label{ineq_sigma}
    \end{align}	
	By setting $\eta=$1 and $\eta=\frac{1-\sigma^2}{2\sigma^2}$ in (\ref{eq24}) respectively, we obtain
	\begin{align}
	\left\| {{{\boldsymbol{\theta }}_{k+1}^{s+1}} - \boldsymbol{1}_{n}{{ \bar{\boldsymbol{ \theta }}_{k+1}^{s+1} }}} \right\|^2
	&\leq  2 \left\| {{\boldsymbol{\theta }}_{k}^{s+1} - \boldsymbol{1}_{n}\bar{\boldsymbol{ \theta }}_{k}^{s+1}} \right\|^2 \notag \\
	& + 2{\alpha}^2\left\| {{{\boldsymbol{y}}_{k}^{s+1}} -\boldsymbol{1}_{n}\bar{\boldsymbol{y}}_{k}^{s+1}} \right\|^2, \notag \\
	\left\| {{{\boldsymbol{\theta }}_{k+1}^{s+1}} - \boldsymbol{1}_{n}{{ \bar{\boldsymbol{ \theta }}_{k+1}^{s+1} }}} \right\|^2 &\leq \frac{1+\sigma^2}{2} \left\| {{\boldsymbol{\theta }}_{k}^{s+1} - \boldsymbol{1}_{n}\bar{\boldsymbol{ \theta }}_{k}^{s+1}} \right\|^2 \notag \\
	& + \frac{2{\alpha}^2}{1-\sigma^2}\left\| {{{\boldsymbol{y}}_{k}^{s+1}} -\boldsymbol{1}_{n}\bar{\boldsymbol{y}}_{k}^{s+1}} \right\|^2. \notag
	\end{align}
	Taking the expectations of the above two inequalities leads to (\ref{eq22}) and (\ref{eq23}).
\end{proof}

\begin{Lem}\label{Lem7}
	Suppose Assumptions \ref{Asu1}, \ref{Asu3}, and \ref{Asu4} hold. Define $\Psi=2(C_g^2C_\omega+L_g^2)$. If $\alpha$ satisfies $0 < \alpha < \frac{1-\sigma^2}{4\sqrt{6 \Psi}}$, then
	\begin{align}
	\mathbb{E}\left[\left\| {\boldsymbol{y}_{k+1}^{s+1}} - \boldsymbol{1}_{n}{ \bar{\boldsymbol{y}}_{k+1}^{s+1} } \right\|^2\right]
	\leq& \frac{36\Psi }{1-\sigma^2}\mathbb{E}\left[  \left\| {{\boldsymbol{\theta }}_{k}^{s+1} - \boldsymbol{1}_{n}\bar{\boldsymbol{ \theta }}_{k}^{s+1}} \right\|^2\right] \notag\\
	&+ \frac{3+\sigma^2}{4}\mathbb{E}\left[\left\| {\boldsymbol{y}}_{k}^{s+1} -\boldsymbol{1}_{n}\bar{\boldsymbol{y}}_{k}^{s+1} \right\|^2\right] \notag\\
	&+ \frac{12\Psi n}{1-\sigma^2}\mathbb{E}\left[  \left\| \bar{\boldsymbol{ \theta }}_{k+1}^{s+1} - \bar{\boldsymbol{ \theta }}_{0}^{s+1}\right\|^2\right]\notag\\
	&+ \frac{12\Psi n}{1-\sigma^2}\mathbb{E}\left[  \left\| \bar{\boldsymbol{ \theta }}_{k}^{s+1} - \bar{\boldsymbol{ \theta }}_{0}^{s+1}\right\|^2\right] \notag\\
	&+ \frac{24\Psi }{1-\sigma^2}\mathbb{E}\left[  \left\| {{\boldsymbol{\theta }}_{0}^{s+1} - \boldsymbol{1}_{n}\bar{\boldsymbol{ \theta }}_{0}^{s+1}} \right\|^2\right]. \label{eq26}
	\end{align}
\end{Lem}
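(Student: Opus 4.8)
The plan is to convert the stacked gradient-tracking recursion (\ref{eq13b}) into a contraction for the tracking error and then bound its driving term $\mathbb{E}\|\boldsymbol{v}_{k+1}^{s+1}-\boldsymbol{v}_k^{s+1}\|^2$ through the variance-reduced estimator structure.

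\emph{Step 1 (contraction of the tracking recursion).} Since $\boldsymbol{W}$ is doubly stochastic (Assumption \ref{Asu4}), averaging (\ref{eq13b}) gives $\bar{\boldsymbol{y}}_{k+1}^{s+1}=\bar{\boldsymbol{y}}_k^{s+1}+\bar{\boldsymbol{v}}_{k+1}^{s+1}-\bar{\boldsymbol{v}}_k^{s+1}$, so that $\boldsymbol{y}_{k+1}^{s+1}-\boldsymbol{1}_n\bar{\boldsymbol{y}}_{k+1}^{s+1}=(\boldsymbol{W}\boldsymbol{y}_k^{s+1}-\boldsymbol{1}_n\bar{\boldsymbol{y}}_k^{s+1})+(\boldsymbol{I}-\tfrac{1}{n}\boldsymbol{1}_n\boldsymbol{1}_n^{\rm T})(\boldsymbol{v}_{k+1}^{s+1}-\boldsymbol{v}_k^{s+1})$. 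I would apply Young's inequality with a free parameter $\beta$, the contraction (\ref{ineq_sigma}), and the fact that $\boldsymbol{I}-\tfrac1n\boldsymbol{1}_n\boldsymbol{1}_n^{\rm T}$ is an orthogonal projection (so $\|(\boldsymbol{I}-\tfrac1n\boldsymbol{1}_n\boldsymbol{1}_n^{\rm T})\boldsymbol{x}\|\le\|\boldsymbol{x}\|$) to obtain $\mathbb{E}\|\boldsymbol{y}_{k+1}^{s+1}-\boldsymbol{1}_n\bar{\boldsymbol{y}}_{k+1}^{s+1}\|^2\le(1+\beta)\sigma^2\,\mathbb{E}\|\boldsymbol{y}_k^{s+1}-\boldsymbol{1}_n\bar{\boldsymbol{y}}_k^{s+1}\|^2+(1+\tfrac1\beta)\,\mathbb{E}\|\boldsymbol{v}_{k+1}^{s+1}-\boldsymbol{v}_k^{s+1}\|^2$. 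I would set $\beta=\tfrac{1-\sigma^2}{2\sigma^2}$, so that $(1+\beta)\sigma^2=\tfrac{1+\sigma^2}{2}$ and $1+\tfrac1\beta=\tfrac{1+\sigma^2}{1-\sigma^2}$.

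\emph{Step 2 (estimator increment).} The reference term $\tilde{\mu}_i^s$ in (\ref{eq8}) does not depend on $k$, so it cancels in $\boldsymbol{v}_{i,k+1}^{s+1}-\boldsymbol{v}_{i,k}^{s+1}=A_{k+1}-A_k$, where $A_{k+1}=\tfrac1B\sum_b h_i(\tau_{i,b})$ and $h_i(\tau)=g_i(\tau|\boldsymbol{\theta}_{i,k+1}^{s+1})-\omega(\tau|\boldsymbol{\theta}_{i,k+1}^{s+1},\tilde{\boldsymbol{\theta}}_i^s)g_i(\tau|\tilde{\boldsymbol{\theta}}_i^s)$. Using $\|A_{k+1}-A_k\|^2\le2\|A_{k+1}\|^2+2\|A_k\|^2$ and Jensen's inequality, it suffices to bound the single-trajectory term. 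Splitting $h_i(\tau)=[g_i(\tau|\boldsymbol{\theta}_{i,k+1}^{s+1})-g_i(\tau|\tilde{\boldsymbol{\theta}}_i^s)]+(1-\omega)g_i(\tau|\tilde{\boldsymbol{\theta}}_i^s)$, the $L_g$-Lipschitz property (Lemma \ref{Lem1}(2)) controls the first bracket, while $\|g_i\|\le C_g$ (Lemma \ref{Lem1}(3)), the normalization $\mathbb{E}[\omega]=1$, and the importance-weight bound (Lemma \ref{Lem2}) give $\mathbb{E}[(1-\omega)^2]=\mathrm{Var}[\omega]\le C_\omega\|\boldsymbol{\theta}_{i,k+1}^{s+1}-\tilde{\boldsymbol{\theta}}_i^s\|^2$. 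This yields $\mathbb{E}\|A_{k+1}\|^2\le\Psi\,\mathbb{E}\|\boldsymbol{\theta}_{i,k+1}^{s+1}-\tilde{\boldsymbol{\theta}}_i^s\|^2$ with $\Psi=2(C_g^2C_\omega+L_g^2)$; summing over $i$ and using $\tilde{\boldsymbol{\theta}}_i^s=\boldsymbol{\theta}_{i,0}^{s+1}$ gives $\mathbb{E}\|\boldsymbol{v}_{k+1}^{s+1}-\boldsymbol{v}_k^{s+1}\|^2\le2\Psi\,\mathbb{E}\|\boldsymbol{\theta}_{k+1}^{s+1}-\boldsymbol{\theta}_0^{s+1}\|^2+2\Psi\,\mathbb{E}\|\boldsymbol{\theta}_{k}^{s+1}-\boldsymbol{\theta}_0^{s+1}\|^2$.

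\emph{Step 3 (telescoping split and closing).} Inserting $\pm\boldsymbol{1}_n\bar{\boldsymbol{\theta}}$ and applying the bound $\|a+b+c\|^2\le3\|a\|^2+3\|b\|^2+3\|c\|^2$, I would write $\|\boldsymbol{\theta}_{k+1}^{s+1}-\boldsymbol{\theta}_0^{s+1}\|^2\le3\|\boldsymbol{\theta}_{k+1}^{s+1}-\boldsymbol{1}_n\bar{\boldsymbol{\theta}}_{k+1}^{s+1}\|^2+3n\|\bar{\boldsymbol{\theta}}_{k+1}^{s+1}-\bar{\boldsymbol{\theta}}_0^{s+1}\|^2+3\|\boldsymbol{\theta}_0^{s+1}-\boldsymbol{1}_n\bar{\boldsymbol{\theta}}_0^{s+1}\|^2$, and likewise at index $k$. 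The unwanted index-$(k+1)$ consensus error is then eliminated with (\ref{eq22}) of Lemma \ref{Lem6}, trading it for index-$k$ consensus and tracking errors. Substituting back into Step 1 and using $\alpha<\tfrac{1-\sigma^2}{4\sqrt{6\Psi}}$ (which yields $12\Psi\alpha^2<\tfrac{(1-\sigma^2)^2}{8}$), the coefficient of $\mathbb{E}\|\boldsymbol{y}_k^{s+1}-\boldsymbol{1}_n\bar{\boldsymbol{y}}_k^{s+1}\|^2$ is at most $\tfrac{1+\sigma^2}{2}+\tfrac{1-\sigma^4}{8}\le\tfrac{3+\sigma^2}{4}$ (equivalently $(1-\sigma^2)^2\ge0$), while $1+\sigma^2<2$ relaxes the remaining coefficients to $\tfrac{36\Psi}{1-\sigma^2}$, $\tfrac{12\Psi n}{1-\sigma^2}$, and $\tfrac{24\Psi}{1-\sigma^2}$, exactly matching (\ref{eq26}). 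The main obstacle is Step 2: making the importance-weighted SVRG increment bound rigorous, in particular the cancellation of $\tilde{\mu}_i^s$, the reduction to a single trajectory, and the use of $\mathbb{E}[\omega]=1$ to convert $\mathbb{E}[(1-\omega)^2]$ into the variance controlled by Lemma \ref{Lem2}; the coefficient bookkeeping in Step 3 is then mechanical.
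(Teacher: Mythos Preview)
Your proposal is correct and follows essentially the same route as the paper: the paper likewise derives the contraction (\ref{eq28}) via Young's inequality with $\eta=\tfrac{1-\sigma^2}{2\sigma^2}$, bounds $\mathbb{E}\|\boldsymbol{v}_{k+1}^{s+1}-\boldsymbol{v}_k^{s+1}\|^2$ by the same $(\omega-1)g+g-g$ split (yielding (\ref{eq33})), applies the three-term decomposition (\ref{eq34}) together with (\ref{eq22}) to eliminate the index-$(k{+}1)$ consensus error, and uses the step-size condition to absorb the extra tracking term into $\tfrac{3+\sigma^2}{4}$. The only cosmetic difference is that the paper relaxes $1+\tfrac1\beta=\tfrac{1+\sigma^2}{1-\sigma^2}$ to $\tfrac{2}{1-\sigma^2}$ at the outset, whereas you carry the sharper factor and relax via $1+\sigma^2\le2$ at the end.
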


\begin{proof}
	Using $\boldsymbol{y}_{k+1}^{s+1}$ in (\ref{eq13b}) and $\bar{\boldsymbol{y}}_{k+1}^{s+1}$ in (\ref{eq14b}), we have
	\begin{equation}\label{eq27}
	\begin{aligned}
	&{\left\| {\boldsymbol{y}}_{k+1}^{s+1} - {\boldsymbol{1}_{n}}{ {\bar{\boldsymbol{y}}}_{k+1}^{s+1} } \right\|^2}\\
	=& {\left\|  \left( {{\boldsymbol{I}_n} - \frac{1}{n}{\boldsymbol{1}_{n}}{\boldsymbol{1}_{n}^{\rm{T}}}} \right)\left({\boldsymbol{W}}{{\boldsymbol{y}}_{k}^{s+1}} + {{\boldsymbol{v}}_{k+1}^{s+1}}-{{\boldsymbol{v}}_{k}^{s+1}}\right)\right\|^2}\\
	\stackrel{(a)}{=}& {\left\| {{\boldsymbol{W}}{{\boldsymbol{y}}_{k}^{s+1}} - {\boldsymbol{1}_{n}} {{{\bar{\boldsymbol{ y}}}_{k}^{s+1}}}  + \left( {{\boldsymbol{I}_n} - \frac{1}{n}{\boldsymbol{1}_{n}}{\boldsymbol{1}_{n}^{\rm{T}}}} \right)\left( {{{\boldsymbol{v}}_{k+1}^{s+1}} - {{\boldsymbol{v}}_{k}^{s+1}}} \right)} \right\|^2}\\
	\stackrel{(b)}{\le}& \left(1+\frac{1-\sigma^2}{2\sigma^2}\right){\left\| {{\boldsymbol{W}}{{\boldsymbol{y}}_{k}^{s+1}} - {\boldsymbol{1}_{n}}\bar{\boldsymbol{ y}}_{k}^{s+1}} \right\|^2}\\
	&  + \left(1+\frac{2\sigma^2}{1-\sigma^2}\right){\left\| {{{\boldsymbol{v}}_{k+1}^{s+1}} - {{\boldsymbol{v}}_{k}^{s+1}}} \right\|^2}\\
	\stackrel{(c)}{\le}& \frac{1+\sigma^2}{2}{\left\| {{{\boldsymbol{y}}_{k}^{s+1}} - {\boldsymbol{1}_{n}}\bar{\boldsymbol{ y}}_{k}^{s+1}} \right\|^2}  + \frac{2}{1-\sigma^2}{\left\| {{{\boldsymbol{v}}_{k+1}^{s+1}} - {{\boldsymbol{v}}_{k}^{s+1}}} \right\|^2},\\
	\end{aligned}
	\end{equation}
	where in (a) we use  the doubly stochastic ${\boldsymbol{W}}$ from Assumption \ref{Asu4}, in (b) we use $\left\|\boldsymbol{I}_n-\frac{\boldsymbol{1}_{n}{\boldsymbol{1}_{n}^{\rm{T}}}}{n}\right\| =1$ and the Young's inequality with $\eta=\frac{1-\sigma^2}{2\sigma^2}$, and in (c) we use (\ref{ineq_sigma}).
	Taking the expectations of both sides of (\ref{eq27}) yields
	\begin{equation}\label{eq28}
	\begin{aligned}
	&\mathbb{E}\left[ {{{\left\| {{{\boldsymbol{y}}_{k+1}^{s+1}} - {\boldsymbol{1}_{n}}{\bar{\boldsymbol{y}}}_{k+1}^{s+1}} \right\|}^2}} \right]
	\leq \frac{1+\sigma^2}{2}\mathbb{E}\left[ {{{\left\| {{{\boldsymbol{y}}_{k}^{s+1}} - {\boldsymbol{1}_{n}}\bar{\boldsymbol{ y}}_{k}^{s+1}} \right\|}^2}} \right]\\
	&\qquad +\frac{2}{1-\sigma^2}\mathbb{E}\left[ {{{\left\| {{{\boldsymbol{v}}_{k+1}^{s+1}} - {{\boldsymbol{v}}_{k}^{s+1}}} \right\|}^2}} \right].
	\end{aligned}
	\end{equation}
	
	Now, we derive the upper bound for $\mathbb{E}\left[ {{{\left\| {{{\boldsymbol{v}}_{k+1}^{s+1}} - {{\boldsymbol{v}}_{k}^{s+1}}} \right\|}^2}} \right]$. Using $\boldsymbol{v}_{i,k}^{s+1}$ in (\ref{eq8}) leads to
	\begin{align}
	&\left\|\boldsymbol{v}_{i,k+1}^{s+1}-\boldsymbol{v}_{i,k}^{s+1}\right\|^2 \notag\\
	=&\left\|\frac{1}{B}\sum_{b=1}^{B}\left(g_i(\tau_{i,b} \left|{\boldsymbol{\theta}}_{i,k+1}^{s+1}\right.)-\omega(\tau_{i,b} \left|{\boldsymbol{\theta}}_{i,k+1}^{s+1}\right.,\tilde{\boldsymbol{\theta}}_i^s)g_i(\tau_{i,b}\left|\tilde{\boldsymbol{\theta}}_i^s\right.)\right)\right.\notag\\
	-&\left.\frac{1}{B}\sum_{b^{\prime}=1}^{B}\left(g_i(\tau_{i,b^{\prime}}\left|{\boldsymbol{\theta}}_{i,k}^{s+1}\right.)-\omega(\tau_{i,b^{\prime}}\left|{\boldsymbol{\theta}}_{i,k}^{s+1}\right.,\tilde{\boldsymbol{\theta}}_i^s)g_i(\tau_{i,b^{\prime}}\left|\tilde{\boldsymbol{\theta}}_i^s\right.)\right)\right\|^2  \notag\\
	\leq & \frac{2}{B^2}\left\|\sum_{b=1}^{B}\left(g_i(\tau_{i,b}\left|{\boldsymbol{\theta}}_{i,k+1}^{s+1}\right.)-\omega(\tau_{i,b}\left|{\boldsymbol{\theta}}_{i,k+1}^{s+1}\right.,\tilde{\boldsymbol{\theta}}_i^s)g_i(\tau_{i,b}\left|\tilde{\boldsymbol{\theta}}_i^s\right.)\right)\right\|^2
	\notag \\
	+&\frac{2}{B^2}\left\|\sum_{b^{\prime}=1}^{B}\left(g_i(\tau_{i,b^{\prime}}\left|{\boldsymbol{\theta}}_{i,k}^{s+1}\right.)-\omega(\tau_{i,b^{\prime}}\left|{\boldsymbol{\theta}}_{i,k}^{s+1}\right.,\tilde{\boldsymbol{\theta}}_i^s)g_i(\tau_{i,b^{\prime}}\left|\tilde{\boldsymbol{\theta}}_i^s\right.)\right)\right\|^2.  \notag
	\end{align}
	Let $\mathbb{E}_{M,B}$ denote the expectations over the randomness of the sampling trajectories $\{\tilde{\tau}_{i,j}\}_{j=1}^{M}$ in outer loop and $\{\tau_{i,b}\}_{b=1}^{B}$ in inner loop.
    Taking the expectations of the above inequality, we obtain that
	\begin{align} &\mathbb{E}_{M,B}\left[\left\|\boldsymbol{v}_{i,k+1}^{s+1}-\boldsymbol{v}_{i,k}^{s+1}\right\|^2\right] \notag \\
	\leq& \frac{2}{B}\sum_{b=1}^{B}\mathbb{E}_{M,B}\left[\left\|\omega\left(\tau_{i,b}\left|{\boldsymbol{\theta}}_{i,k+1}^{s+1}\right.,\tilde{\boldsymbol{\theta}}_i^s\right)g_i\left(\tau_{i,b}\left|\tilde{\boldsymbol{\theta}}_i^s\right.\right)\right.\right. \notag \\	&\left.\left.-g_i\left(\tau_{i,b}\left|{\boldsymbol{\theta}}_{i,k+1}^{s+1}\right.\right)\right\|^2\right] \notag \\	&+\frac{2}{B}\sum_{b^{\prime}=1}^{B}\mathbb{E}_{M,B}\left[\left\|\omega\left(\tau_{i,b^{\prime}}\left|{\boldsymbol{\theta}}_{i,k}^{s+1}\right.,\tilde{\boldsymbol{\theta}}_i^s\right)g_i\left(\tau_{i,b^{\prime}}\left|\tilde{\boldsymbol{\theta}}_i^s\right.\right)\right.\right. \notag \\	&\left.\left.-g_i\left(\tau_{i,b^{\prime}}\left|{\boldsymbol{\theta}}_{i,k}^{s+1}\right.\right)\right\|^2\right],\label{eq29}
	\end{align}
	where we have used the inequality that $\left \|  {\bf{x}}_1 + {\bf{x}}_2 \cdots   {\bf{x}}_B\right \|^2 \le B \left(\left \| {\bf{x}}_1\right \|^2+\left \| {\bf{x}}_2\right \|^2+\cdots \left \| {\bf{x}}_B\right \|^2\right)$. Note that for any $k, s \ge 0$, we have
	\begin{align}
	&\mathbb{E}_{M,B}\left[ \left\| \omega \left( {\tau_{i,b}\left| {\boldsymbol{\theta}_{i,k+1}^{s+1}} \right.,\tilde{\boldsymbol{\theta}}_i^s} \right){g_i}\left( {\tau_{i,b}\left| {\tilde{\boldsymbol{\theta}}_i^s} \right.} \right) \right.\right. \notag \\
&\left.\left.- {g_i}\left( {\tau_{i,b}\left| {\boldsymbol{\theta}_{i,k+1}^{s+1}} \right.} \right) \right\|^2 \right] \notag \\
	=& \mathbb{E}_{M,B}\left[ \left\| \omega \left( {\tau_{i,b}\left| {\boldsymbol{\theta}_{i,k+1}^{s+1}} \right.,\tilde{\boldsymbol{\theta}}_i^s} \right){g_i}\left( {\tau_{i,b}\left| {\tilde{\boldsymbol{\theta}}_i^s} \right.} \right) - {g_i}\left( {\tau_{i,b}\left| {\tilde{\boldsymbol{\theta}}_i^s} \right.} \right)\right.\right. \notag \\
	&\left.\left.+ {g_i}\left( {\tau_{i,b}\left| {\tilde{\boldsymbol{\theta}}_i^s} \right.} \right) - {g_i}\left( {\tau_{i,b}\left| {\boldsymbol{\theta}_{i,k+1}^{s+1}} \right.} \right) \right\|^2 \right] \notag \\
	\leq& 2\mathbb{E}_{M,B}\left[ {{{\left\| {\left( {\omega \left( {\tau_{i,b}\left| {\boldsymbol{\theta}_{i,k+1}^{s+1}} \right.,\tilde{\boldsymbol{\theta}}_i^s} \right) - 1} \right){g_i}\left( {\tau_{i,b}\left| {\tilde{\boldsymbol{\theta}}_i^s} \right.} \right)} \right\|}^2}} \right] \notag \\
	&+ 2\mathbb{E}_{M,B}\left[ {{{\left\| {{g_i}\left( {\tau_{i,b}\left| {\tilde{\boldsymbol{\theta}}_i^s} \right.} \right) - {g_i}\left( {\tau_{i,b}\left| {\boldsymbol{\theta}_{i,k+1}^{s+1}} \right.} \right)} \right\|}^2}} \right] \notag \\
	\stackrel{(a)}{\le}& 2C_g^2\mathbb{E}_{M,B}\left[ {{{\left\| { {\omega \left( {\tau_{i,b}\left| {\boldsymbol{\theta}_{i,k+1}^{s+1}} \right.,\tilde{\boldsymbol{\theta}}_i^s} \right) - 1} } \right\|}^2}} \right] \notag \\
	&+ 2L_g^2{{{\left\| {\tilde{\boldsymbol{\theta}}_i^s - \boldsymbol{\theta}_{i,k+1}^{s+1}} \right\|}^2}}, \label{eq30}
	\end{align}
	where in (a) we use $\left\|g\left(\tau \left| \boldsymbol{\theta}\right.\right)\right\| \leq C_g$ and $g\left(\tau \left|\boldsymbol{\theta}\right.\right)$ is $L_g$-Lipschitz continuous in Lemma \ref{Lem1}.
	
	By using $\mathbb{E}\left[\omega (\cdot)\right]=1$ (see Lemma C.1 in \cite{xu2020sample}) and Lemma \ref{Lem2}, we obtain that
	\begin{equation}\label{eq31}
	\begin{aligned}
	&\mathbb{E}_{M,B}\left[ {{{\left\| { {\omega \left( {\tau_{i,b}\left| {\boldsymbol{\theta}_{i,k+1}^{s+1}} \right.,\tilde{\boldsymbol{\theta}}_i^s} \right) - 1} } \right\|}^2}} \right]\\
	{=}&{\rm{Var}}\left[ {\omega \left( {\tau_{i,b}\left| {\boldsymbol{\theta}_{i,k+1}^{s+1}} \right.,\tilde{\boldsymbol{\theta}}_i^s} \right)} \right]
	{\leq}{C_\omega }{\left\| {\boldsymbol{\theta}_{i,k+1}^{s+1} - \tilde{\boldsymbol{\theta}}_i^s} \right\|^2}.
	\end{aligned}
	\end{equation}
	Substituting (\ref{eq31}) into (\ref{eq30}) leads to
	\begin{equation}\label{eq32}
	\begin{aligned}
	&\mathbb{E}_{M,B}\left[ \left\| \omega \left( {\tau_{i,b}\left| {\boldsymbol{\theta}_{i,k+1}^{s+1}} \right.,\tilde{\boldsymbol{\theta}}_i^s} \right){g_i}\left( {\tau_{i,b}\left| {\tilde{\boldsymbol{\theta}}_i^s} \right.} \right) \right.\right. \notag \\
&\left.\left.- {g_i}\left( {\tau_{i,b}\left| {\boldsymbol{\theta}_{i,k+1}^{s+1}} \right.} \right) \right\|^2 \right] \\
	\leq& 2C_g^2{C_\omega }{\left\| {\boldsymbol{\theta}_{i,k+1}^{s+1} - \tilde{\boldsymbol{\theta}}_i^s} \right\|^2} + 2L_g^2 {{{\left\| {\tilde{\boldsymbol{\theta}}_i^s - \boldsymbol{\theta}_{i,k+1}^{s+1}} \right\|}^2}}\\
	\leq& 2\left(C_g^2{C_\omega }+{L_g^2}\right)\left\|\boldsymbol{\theta}_{i,k+1}^{s+1}-\tilde{\boldsymbol{\theta}}_i^s\right\|^2.
	\end{aligned}
	\end{equation}
	This together with (\ref{eq29}) and the definition  $2\left(C_g^2C_\omega+L_g^2\right)=\Psi$ gives that
    \begin{equation*}
    \begin{aligned}
	&\mathbb{E}_{M,B}\left[ \left\|\boldsymbol{v}_{i,k+1}^{s+1}-\boldsymbol{v}_{i,k}^{s+1}\right\|^2 \right] \\
	\leq& 2\Psi{\left\| {\boldsymbol{\theta}_{i,k+1}^{s+1} - \tilde{\boldsymbol{\theta}}^s} \right\|^2}
	+ 2\Psi{\left\| {\boldsymbol{\theta}_{i,k}^{s+1} - \tilde{\boldsymbol{\theta}}^s} \right\|^2}.
	\end{aligned}
    \end{equation*}
    Summing up the above inequality over $i$ from $1$ to $n$ and taking the expectations, we have
	\begin{equation}\label{eq33}
	\begin{aligned}
	&\mathbb{E}\left[ \left\|\boldsymbol{v}_{k+1}^{s+1}-\boldsymbol{v}_{k}^{s+1}\right\|^2 \right] \\
	\leq& 2\Psi \mathbb{E}\left[ \left\| {\boldsymbol{\theta}_{k+1}^{s+1} - \tilde{\boldsymbol{\theta}}^s} \right\|^2 \right]
	+ 2\Psi \mathbb{E}\left[ \left\| {\boldsymbol{\theta}_{k}^{s+1} - \tilde{\boldsymbol{\theta}}^s} \right\|^2\right].
	\end{aligned}
	\end{equation}
	
	Recalling that $\tilde{\boldsymbol{\theta}}^s=\boldsymbol{\theta}_{0}^{s+1}$ in Line 2 of Algorithm \ref{Alg1},  {for any} $k, s \ge 0$ we have
	\begin{equation}\label{eq34}
	\begin{aligned}
	&\left\| \boldsymbol{\theta}_{k+1}^{s+1} - \tilde{\boldsymbol{\theta}}^s\right\|^2\\
	=&  \left\| \boldsymbol{\theta}_{k+1}^{s+1} - {\boldsymbol{1}_{n}}\bar{\boldsymbol{\theta}}_{k+1}^{s+1}+ {\boldsymbol{1}_{n}}\bar{\boldsymbol{\theta}}_{k+1}^{s+1}- {\boldsymbol{1}_{n}}\bar{\boldsymbol{\theta}}_{0}^{s+1}+ {\boldsymbol{1}_{n}}\bar{\boldsymbol{\theta}}_{0}^{s+1} - \boldsymbol{\theta}_{0}^{s+1}\right\|^2\\
	\leq &
	3\left\| \boldsymbol{\theta}_{k+1}^{s+1} - {\boldsymbol{1}_{n}}\bar{\boldsymbol{\theta}}_{k+1}^{s+1} \right\|^2 + 3n\left\| \bar{\boldsymbol{\theta}}_{k+1}^{s+1}-\bar{\boldsymbol{\theta}}_{0}^{s+1}\right\|^2\\
	&+3\left\| \boldsymbol{\theta}_{0}^{s+1} - {\boldsymbol{1}_{n}}\bar{\boldsymbol{\theta}}_{0}^{s+1} \right\|^2.
	\end{aligned}
	\end{equation}
    This combined with (\ref{eq22}) and (\ref{eq33}) produces
	\begin{equation}\label{eq35-2}
	\begin{aligned}
	&\mathbb{E}\left[ \left\|\boldsymbol{v}_{k+1}^{s+1}-\boldsymbol{v}_{k}^{s+1}\right\|^2 \right] \leq 18\Psi \mathbb{E}\left[\left\| \boldsymbol{\theta}_{k}^{s+1} - {\boldsymbol{1}_{n}}\bar{\boldsymbol{\theta}}_{k}^{s+1} \right\|^2 \right]\\
    &+ 12\alpha^2 \Psi \mathbb{E}\left[ \left\| \boldsymbol{y}_{k}^{s+1} - {\boldsymbol{1}_{n}}\bar{\boldsymbol{y}}_{k}^{s+1} \right\|^2 \right] \\
	&+ 6\Psi n \mathbb{E}\left[\left\|\bar{\boldsymbol{\theta}}_{k+1}^{s+1}-\bar{\boldsymbol{\theta}}_{0}^{s+1} \right\|^2 \right]
	+ 6\Psi n \mathbb{E}\left[\left\|\bar{\boldsymbol{\theta}}_{k}^{s+1}-\bar{\boldsymbol{\theta}}_{0}^{s+1} \right\|^2 \right] \\
	&+ 12\Psi \mathbb{E}\left[\left\| \boldsymbol{\theta}_{0}^{s+1} - {\boldsymbol{1}_{n}}\bar{\boldsymbol{\theta}}_{0}^{s+1} \right\|^2 \right].
	\end{aligned}
	\end{equation}
	By applying (\ref{eq35-2}) to (\ref{eq28}), we obtain the refined bound of gradient tracking error as follows	
	\begin{equation}\label{eq36}
	\begin{aligned}
	&\mathbb{E}\left[\left\| {\boldsymbol{y}_{k+1}^{s+1}} - \boldsymbol{1}_{n}{ \bar{\boldsymbol{y}}_{k+1}^{s+1} } \right\|^2\right]
	\le \frac{36\Psi }{1-\sigma^2}\mathbb{E}\left[  \left\| {{\boldsymbol{\theta }}_{k}^{s+1} - \boldsymbol{1}_{n}\bar{\boldsymbol{ \theta }}_{k}^{s+1}} \right\|^2\right]\\
	&+ \left(\frac{1+\sigma^2}{2} + \frac{24\alpha ^2 \Psi }{1-\sigma^2}\right)\mathbb{E}\left[\left\| {\boldsymbol{y}}_{k}^{s+1} -\boldsymbol{1}_{n}\bar{\boldsymbol{y}}_{k}^{s+1} \right\|^2\right]\\
	&+ \frac{12\Psi n}{1-\sigma^2}\mathbb{E}\left[  \left\| \bar{\boldsymbol{ \theta }}_{k+1}^{s+1} - \bar{\boldsymbol{ \theta }}_{0}^{s+1}\right\|^2\right]
    + \frac{12 \Psi n}{1-\sigma^2}\mathbb{E}\left[  \left\| \bar{\boldsymbol{ \theta }}_{k}^{s+1} - \bar{\boldsymbol{ \theta }}_{0}^{s+1}\right\|^2\right]\\
	&+ \frac{24\Psi }{1-\sigma^2}\mathbb{E}\left[  \left\| {{\boldsymbol{\theta }}_{0}^{s+1} - \boldsymbol{1}_{n}\bar{\boldsymbol{ \theta }}_{0}^{s+1}} \right\|^2\right].
	\end{aligned}
	\end{equation}
	If $ 0 < \alpha \le \frac{1-\sigma^2}{4\sqrt{6\Psi}}$, then we have $\frac{1+\sigma^2}{2} + \frac{\alpha ^2 \Psi }{1-\sigma^2} \le \frac{3+\sigma^2}{4}$. This combined with (\ref{eq36}) proves the lemma.
\end{proof}

In the following proposition, we write (\ref{eq23}) and (\ref{eq26}) jointly as a linear matrix inequality.
\begin{Pro}\label{Pro1}
	Suppose Assumptions \ref{Asu1}, \ref{Asu3}, and \ref{Asu4} hold. Consider Algorithm \ref{Alg1} where the step size $\alpha$ follows $ 0< \alpha < \frac{1-\sigma^2}{4\sqrt{6 \Psi}}$, then the following linear matrix inequality holds  {for any} $k,s \ge 0 $
	\begin{equation}\label{eq37}
	\boldsymbol{u}_{k+1}^{s+1} \leq \boldsymbol{G}\boldsymbol{u}_{k}^{s+1}+\boldsymbol{b}_{k}^{s+1},
	\end{equation}
	where $\boldsymbol{u}_{k}^{s+1}$, $\ \boldsymbol{b}_{k}^{s+1}  \in \mathbb{R}^2$, and $\boldsymbol{G} $ are defined as
	\begin{equation*}
	\boldsymbol{u}_{k}^{s+1}=\begin{bmatrix}
	\mathbb{E}\left[  \left\| {{\boldsymbol{\theta }}_{k}^{s+1} - \boldsymbol{1}_{n}\bar{\boldsymbol{ \theta }}_{k}^{s+1}} \right\|^2\right]\\
	\mathbb{E}\left[  \left\| {{\boldsymbol{y}}_{k}^{s+1} - \boldsymbol{1}_{n}\bar{\boldsymbol{y}}_{k}^{s+1}} \right\|^2\right]
	\end{bmatrix}, 	
	\end{equation*}
	\begin{equation}\label{eq_G_b}
	\boldsymbol{G} =\begin{bmatrix}
	\frac{1+\sigma^2}{2}       & \frac{2\alpha^2}{1-\sigma^2}\\
	\frac{36 \Psi }{1-\sigma^2} & \frac{3+\sigma^2}{4}
	\end{bmatrix},     { {\rm~and ~}
		\boldsymbol{b}_{k}^{s+1} =\begin{bmatrix}
		0\\
		\tilde{\boldsymbol{b}}_{k,2}^{s+1}
		\end{bmatrix} },
	\end{equation}
	with
	\begin{align}
	\tilde{\boldsymbol{b}}_{k,2}^{s+1}&=
	\frac{12\Psi n}{1-\sigma^2}\mathbb{E}\left[  \left\| \bar{\boldsymbol{ \theta }}_{k+1}^{s+1} - \bar{\boldsymbol{ \theta }}_{0}^{s+1}\right\|^2\right] \notag \\
	&+ \frac{12\Psi n}{1-\sigma^2}\mathbb{E}\left[  \left\| \bar{\boldsymbol{ \theta }}_{k}^{s+1} - \bar{\boldsymbol{ \theta }}_{0}^{s+1}\right\|^2\right] \notag \\
	&+ \frac{24\Psi }{1-\sigma^2}\mathbb{E}\left[  \left\| {{\boldsymbol{\theta }}_{0}^{s+1} - \boldsymbol{1}_{n}\bar{\boldsymbol{ \theta }}_{0}^{s+1}} \right\|^2\right].
    \label{eq_bk2}
	\end{align}

\end{Pro}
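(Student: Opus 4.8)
The plan is to assemble the two scalar recursions already established in Lemma~\ref{Lem6} and Lemma~\ref{Lem7} into a single vector recursion, reading off the entries of $\boldsymbol{G}$ and $\boldsymbol{b}_{k}^{s+1}$ directly. The key observation is that (\ref{eq37}) is a component-wise inequality between vectors in $\mathbb{R}^2$, so it suffices to verify each of its two rows separately and then confirm that stacking the two bounds is legitimate.

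First I would recall inequality (\ref{eq23}) from Lemma~\ref{Lem6}, which upper bounds the consensus error at iteration $k+1$ by
\[
\frac{1+\sigma^2}{2}\mathbb{E}\left[\left\|\boldsymbol{\theta}_{k}^{s+1}-\boldsymbol{1}_{n}\bar{\boldsymbol{\theta}}_{k}^{s+1}\right\|^2\right]
+\frac{2\alpha^2}{1-\sigma^2}\mathbb{E}\left[\left\|\boldsymbol{y}_{k}^{s+1}-\boldsymbol{1}_{n}\bar{\boldsymbol{y}}_{k}^{s+1}\right\|^2\right].
\]
This is exactly the first row of $\boldsymbol{G}\boldsymbol{u}_{k}^{s+1}$, and the corresponding forcing term vanishes, matching the zero first entry of $\boldsymbol{b}_{k}^{s+1}$ in (\ref{eq_G_b}). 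Since Lemma~\ref{Lem6} uses only Assumption~\ref{Asu4}, no step-size restriction enters through this row. Next I would recall inequality (\ref{eq26}) from Lemma~\ref{Lem7}, valid under the stated condition $0<\alpha<\frac{1-\sigma^2}{4\sqrt{6\Psi}}$: its first two terms, with coefficients $\frac{36\Psi}{1-\sigma^2}$ and $\frac{3+\sigma^2}{4}$, reproduce precisely the second row of $\boldsymbol{G}\boldsymbol{u}_{k}^{s+1}$, while its remaining three terms are collected into $\tilde{\boldsymbol{b}}_{k,2}^{s+1}$ as defined in (\ref{eq_bk2}), which is the second entry of $\boldsymbol{b}_{k}^{s+1}$. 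Stacking the two scalar bounds into $\boldsymbol{u}_{k+1}^{s+1}$ then yields (\ref{eq37}), where the inequality is read entrywise.

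I do not anticipate a genuine obstacle: the analytical content has already been expended in proving Lemmas~\ref{Lem6} and~\ref{Lem7}, and the proposition is essentially a repackaging step that also fixes the step-size admissibility to the more restrictive of the two lemmas, namely that of Lemma~\ref{Lem7}. The only point requiring mild care is to confirm that stacking preserves the direction of the inequality; this holds because every entry of $\boldsymbol{G}$ and of $\boldsymbol{b}_{k}^{s+1}$ is nonnegative (using $0<\sigma<1$, $\alpha>0$, and $\Psi>0$), so the component-wise ordering of the two rows cannot be reversed when they are combined into the matrix-vector form.
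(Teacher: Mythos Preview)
Your proposal is correct and matches the paper's own treatment: the paper does not give a separate proof of Proposition~\ref{Pro1} but simply introduces it with the sentence ``we write (\ref{eq23}) and (\ref{eq26}) jointly as a linear matrix inequality,'' which is exactly the stacking you describe. One small remark: the nonnegativity of the entries of $\boldsymbol{G}$ is not actually needed to justify the componentwise inequality (\ref{eq37}) itself---each row is literally one of the two scalar bounds---though it does become relevant later when the recursion is iterated in (\ref{eq39}).
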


The inequality (\ref{eq37}) in Proposition \ref{Pro1} recursively leads to
\begin{equation}\label{eq39}
\begin{aligned}
\boldsymbol{u}_{k}^{s+1} \leq \boldsymbol{G}^{k}\boldsymbol{u}_{0}^{s+1}+\sum_{r=0}^{k-1}\boldsymbol{G}^{k-r-1}\boldsymbol{b}_{r}^{s+1}.
\end{aligned}
\end{equation}

The subsequent lemma gives an upper bound for $\boldsymbol{G}^{k}$.

\begin{Lem}\label{Lem8}
	If $0 < \alpha < \frac{\left(1-\sigma^2\right)^2}{24\sqrt{2\Psi}}$, we have
	\begin{equation}\label{eq40}
	\begin{aligned}
	\boldsymbol{G}^{k}
	\leq& \begin{bmatrix}
	\lambda^{k} & \boldsymbol{G}_{12} k \lambda^{k-1}\\
	\boldsymbol{G}_{21} k \lambda^{k-1} & \lambda^{k}+\frac{\left(\boldsymbol{G}_{22}-\boldsymbol{G}_{11}\right) k \lambda^{k-1} }{2}
	\end{bmatrix},
	\end{aligned}
	\end{equation}
	where $\lambda=\frac{3+\sigma^2}{4}+\frac{6\alpha\sqrt{2\Psi}}{1-\sigma^2}$.
\end{Lem}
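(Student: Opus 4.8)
The plan is to reduce the claim to a statement about the powers of the fixed nonnegative $2\times2$ matrix $\boldsymbol{G}$ and to control those powers through its two eigenvalues. Writing $\boldsymbol{G}_{11}=\frac{1+\sigma^2}{2}$ and $\boldsymbol{G}_{22}=\frac{3+\sigma^2}{4}$, and noting that $\sqrt{\boldsymbol{G}_{12}\boldsymbol{G}_{21}}=\sqrt{\frac{72\alpha^2\Psi}{(1-\sigma^2)^2}}=\frac{6\alpha\sqrt{2\Psi}}{1-\sigma^2}$, one sees that the threshold is exactly $\lambda=\boldsymbol{G}_{22}+\sqrt{\boldsymbol{G}_{12}\boldsymbol{G}_{21}}$. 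The eigenvalues of $\boldsymbol{G}$ are
\begin{equation*}
\mu_{\pm}=\frac{\boldsymbol{G}_{11}+\boldsymbol{G}_{22}}{2}\pm\frac12\sqrt{(\boldsymbol{G}_{22}-\boldsymbol{G}_{11})^2+4\boldsymbol{G}_{12}\boldsymbol{G}_{21}},
\end{equation*}
which are real since $\boldsymbol{G}_{12},\boldsymbol{G}_{21}\ge0$. First I would record the structural fact $\boldsymbol{G}_{22}-\boldsymbol{G}_{11}=\frac{1-\sigma^2}{4}>0$ and, using $\sqrt{a^2+b^2}\le|a|+|b|$, deduce $\mu_+\le\frac{\boldsymbol{G}_{11}+\boldsymbol{G}_{22}}{2}+\frac{(\boldsymbol{G}_{22}-\boldsymbol{G}_{11})+2\sqrt{\boldsymbol{G}_{12}\boldsymbol{G}_{21}}}{2}=\boldsymbol{G}_{22}+\sqrt{\boldsymbol{G}_{12}\boldsymbol{G}_{21}}=\lambda$. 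Moreover the step-size bound $\alpha<\frac{(1-\sigma^2)^2}{24\sqrt{2\Psi}}$ yields $\boldsymbol{G}_{12}\boldsymbol{G}_{21}<\frac{(1-\sigma^2)^2}{16}<\boldsymbol{G}_{11}\boldsymbol{G}_{22}$, so $\det\boldsymbol{G}>0$ and hence $0<\mu_-\le\mu_+\le\lambda$.

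Next I would invoke the Cayley--Hamilton identity $\boldsymbol{G}^2=(\mu_++\mu_-)\boldsymbol{G}-(\det\boldsymbol{G})\boldsymbol{I}_2$, which on iteration gives the closed form $\boldsymbol{G}^k=\phi_k\,\boldsymbol{G}-(\det\boldsymbol{G})\,\phi_{k-1}\,\boldsymbol{I}_2$ for $k\ge1$, where
\begin{equation*}
\phi_k=\frac{\mu_+^k-\mu_-^k}{\mu_+-\mu_-}=\sum_{j=0}^{k-1}\mu_+^{\,j}\mu_-^{\,k-1-j}\ge0 .
\end{equation*}
Since $0\le\mu_-\le\mu_+\le\lambda$, every one of the $k$ summands is at most $\lambda^{k-1}$, whence $\phi_k\le k\lambda^{k-1}$. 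The off-diagonal entries follow at once: the identity contributes nothing off the diagonal, so $(\boldsymbol{G}^k)_{12}=\phi_k\boldsymbol{G}_{12}\le\boldsymbol{G}_{12}\,k\lambda^{k-1}$ and likewise $(\boldsymbol{G}^k)_{21}\le\boldsymbol{G}_{21}\,k\lambda^{k-1}$, matching the two off-diagonal targets.

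The diagonal entries are the delicate point and I expect them to be the main obstacle: merely discarding the nonpositive term $-(\det\boldsymbol{G})\phi_{k-1}$ leaves $\phi_k\boldsymbol{G}_{11}\le k\lambda^{k-1}\boldsymbol{G}_{11}$, which overshoots the target $\lambda^k$, so the subtracted term must be retained. Instead I would rewrite the exact $(1,1)$ entry as the weighted average
\begin{equation*}
(\boldsymbol{G}^k)_{11}=\frac{\mu_+^k(\boldsymbol{G}_{11}-\mu_-)+\mu_-^k(\mu_+-\boldsymbol{G}_{11})}{\mu_+-\mu_-},
\end{equation*}
and observe that $\mu_-\le\boldsymbol{G}_{11}\le\mu_+$ (again a consequence of $\boldsymbol{G}_{22}>\boldsymbol{G}_{11}$ together with the radical bound), so the two coefficients are nonnegative and sum to $\mu_+-\mu_-$. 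Thus $(\boldsymbol{G}^k)_{11}$ is a convex combination of $\mu_+^k$ and $\mu_-^k$, both at most $\lambda^k$, giving $(\boldsymbol{G}^k)_{11}\le\lambda^k$. The analogous computation for the $(2,2)$ entry gives $(\boldsymbol{G}^k)_{22}=\mu_+^k-(\mu_+-\boldsymbol{G}_{22})\phi_k\le\mu_+^k\le\lambda^k$, which is already tighter than the stated bound $\lambda^k+\frac{(\boldsymbol{G}_{22}-\boldsymbol{G}_{11})k\lambda^{k-1}}{2}$ and therefore suffices.

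Finally I would dispose of the degenerate case $\mu_+=\mu_-$ separately: there $\phi_k$ is read as its limit $k\mu_+^{k-1}$, and all four estimates persist either by continuity in the entries of $\boldsymbol{G}$ or by a direct Jordan-form computation. Collecting the four entrywise inequalities then establishes the claimed bound on $\boldsymbol{G}^k$, with the $(1,2)$, $(2,1)$ entries met with equality in the coefficient $\phi_k$ and the diagonal entries met with room to spare. The only quantitative input beyond elementary algebra is the step-size condition, which enters solely to guarantee $\det\boldsymbol{G}>0$ (hence $\mu_-\ge0$) and, as a by-product, $\lambda<1$; everything else is structural.
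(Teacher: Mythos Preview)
Your proof is correct and follows essentially the same route as the paper: both obtain an exact entrywise formula for $\boldsymbol{G}^k$ in terms of the two eigenvalues, bound the larger eigenvalue by $\lambda$ via $\sqrt{a^2+b^2}\le a+b$, and control the off-diagonal entries through $\phi_k=\frac{\mu_+^k-\mu_-^k}{\mu_+-\mu_-}=\sum_{j=0}^{k-1}\mu_+^{\,j}\mu_-^{\,k-1-j}\le k\lambda^{k-1}$. The only cosmetic differences are that the paper computes the eigenvector matrices $\boldsymbol{T},\boldsymbol{T}^{-1}$ explicitly (whereas you reach the identical formulas via Cayley--Hamilton), and that your convex-combination argument for the $(2,2)$ entry actually yields the sharper bound $(\boldsymbol{G}^k)_{22}\le\lambda^k$, which of course implies the paper's looser $\lambda^k+\tfrac{(\boldsymbol{G}_{22}-\boldsymbol{G}_{11})k\lambda^{k-1}}{2}$.
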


\begin{proof}
	Consider the eigen-decomposition $\boldsymbol{G}=\boldsymbol{T}\Lambda\boldsymbol{T}^{-1}$ with $\Lambda=\text{diag}\left(\lambda_{1}, \lambda_{2}\right)$, where $\lambda_{1}$ and $\lambda_{2}$ are the two eigenvalues of $\boldsymbol{G}$ and $\lambda_{1} < \lambda_{2}$. Define $\Omega=\sqrt{\left(\boldsymbol{G}_{11}- \boldsymbol{G}_{22}\right)^2+4\boldsymbol{G}_{12}\boldsymbol{G}_{21}}$. With some tedious calculation, we obtain that
	\begin{align*}
    &\lambda_{1}=\frac{5+3\sigma^2}{8}-\frac{\sqrt{\left(1-\sigma^2\right)^4
    +4608\alpha^2\Psi}}{8\left(1-\sigma^2\right)},\\	&\lambda_{2}=\frac{5+3\sigma^2}{8}+\frac{\sqrt{\left(1-\sigma^2\right)^4 + 4608\alpha^2\Psi}}{8\left(1-\sigma^2\right)},\\
	&\boldsymbol{T}=\begin{bmatrix}
	\frac{\boldsymbol{G}_{11}-\boldsymbol{G}_{22}-\Omega}{2\boldsymbol{G}_{21}}
    & \frac{\boldsymbol{G}_{11}-\boldsymbol{G}_{22}+\Omega}{2\boldsymbol{G}_{21}}\\
	1 & 1
	\end{bmatrix},  \\
	&\boldsymbol{T}^{-1}=\begin{bmatrix}
	-\frac{\boldsymbol{G}_{21}}{\Omega}
    & \frac{\boldsymbol{G}_{11}-\boldsymbol{G}_{22}+\Omega}{2\Omega}\\
	\frac{\boldsymbol{G}_{21}}{\Omega}
    & \frac{\boldsymbol{G}_{22}-\boldsymbol{G}_{11}+\Omega}{2\Omega}
	\end{bmatrix}.
	\end{align*}
	Hence, for any $k \geq 0$, we have
	\begin{equation}\label{eq41}
	\begin{aligned}
	&\boldsymbol{G}^{k}=\boldsymbol{T}\Lambda^{k}\boldsymbol{T}^{-1} \\
	\leq& \begin{bmatrix}	\frac{\lambda_{1}^{k}+\lambda_{2}^{k}}{2}+\frac{\left(\boldsymbol{G}_{11}-\boldsymbol{G}_{22}\right)\left(\lambda_{2}^{k}-\lambda_{1}^{k}\right)}{2\Omega}       & \frac{\boldsymbol{G}_{12}\left(\lambda_{2}^{k}-\lambda_{1}^{k}\right)}{\Omega}\\
	\frac{\boldsymbol{G}_{21}\left(\lambda_{2}^{k}-\lambda_{1}^{k}\right)}{\Omega} & \frac{\lambda_{1}^{k}+\lambda_{2}^{k}}{2}+\frac{\left(\boldsymbol{G}_{11}-\boldsymbol{G}_{22}\right)\left(\lambda_{1}^{k}-\lambda_{2}^{k}\right)}{2\Omega}
	\end{bmatrix}
	\end{aligned}
	\end{equation}
	By the basic inequality $\sqrt{a^2+b^2} \le a +b, \forall a,b > 0$, we have $\lambda_{2} \le \frac{5+3\sigma^2}{8}+\frac{1-\sigma^2}{8}+\frac{6\alpha\sqrt{2\Psi}}{1-\sigma^2} = \lambda$.
	Since $0 < \alpha < \frac{\left(1-\sigma^2\right)^2}{24\sqrt{2\Psi}}$, we have $\lambda < 1 $. Then we can obtain $0 < \lambda_{1} < \lambda_{2} \le \lambda < 1$. By substituting $\lambda_{2}^{k}-\lambda_{1}^{k} = \left(\lambda_{2}-\lambda_{1}\right)\sum_{l=0}^{k-1}\lambda_{2}^{l}\lambda_{1}^{k-1-l}=\Omega k \lambda^{k-1}$ into (\ref{eq41}) leads to (\ref{eq40}).
\end{proof}

\section{Proofs of Section IV-D}
\subsection{Proof of Lemma \ref{Lem9}}
The proof builds upon the following lemma from \cite{zhang2019decentralized}.
\begin{Lem}\label{Lem5}
	Let $\{s_k\}$ be a non-negative sequence, $\theta$ be a constant in $(0,1)$, and $a_{k}=\sum_{r=1}^{k}s_r (k-r) \theta^{k-r-1}$. Then
	\begin{equation*}
	\sum_{k=1}^{K}a_{k} \le \frac{1}{\left(1-\theta\right)^2}\sum_{k=1}^{K}s_{k}, \quad \forall \theta \in (0,1), k \in \mathbb{N}.
	\end{equation*}
\end{Lem}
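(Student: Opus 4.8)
The plan is to prove the bound by interchanging the order of the double summation and then recognizing the resulting inner sum as a truncation of the series $\sum_{j\ge 0} j\theta^{j-1} = (1-\theta)^{-2}$. Since every quantity in sight is nonnegative (the sequence $\{s_k\}$ by hypothesis, and the weights $(k-r)\theta^{k-r-1}$ because $\theta\in(0,1)$), truncating an infinite series can only decrease it, which is exactly what yields the clean constant $(1-\theta)^{-2}$.

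First I would substitute the definition of $a_k$ and write
\[
\sum_{k=1}^{K} a_k = \sum_{k=1}^{K}\sum_{r=1}^{k} s_r (k-r)\theta^{k-r-1},
\]
then swap the two finite sums so that $r$ becomes the outer index running from $1$ to $K$ and $k$ the inner index running from $r$ to $K$:
\[
\sum_{k=1}^{K} a_k = \sum_{r=1}^{K} s_r \sum_{k=r}^{K} (k-r)\theta^{k-r-1}.
\]
The reindexing $j=k-r$ turns the inner sum into $\sum_{j=0}^{K-r} j\theta^{j-1}$, whose $j=0$ term vanishes so that no negative power of $\theta$ actually contributes.

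Next I would bound the inner sum by its infinite counterpart. Because each summand $j\theta^{j-1}$ is nonnegative for $\theta\in(0,1)$, extending the upper limit to infinity only increases the sum, giving
\[
\sum_{j=0}^{K-r} j\theta^{j-1} \le \sum_{j=1}^{\infty} j\theta^{j-1} = \frac{1}{(1-\theta)^2},
\]
where the closed form is the termwise derivative of the geometric series $\sum_{j\ge 0}\theta^{j}=(1-\theta)^{-1}$, valid since $|\theta|<1$. Substituting this uniform bound back and using $s_r\ge 0$ to pull the constant out of the sum produces $\sum_{k=1}^{K} a_k \le (1-\theta)^{-2}\sum_{r=1}^{K} s_r$, which is the claim.

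There is essentially no analytic obstacle here; the lemma is a bookkeeping estimate. The only points requiring care are the interchange of the order of summation (a finite Fubini step, so no convergence issue arises) and the justification that the closed form $(1-\theta)^{-2}$ dominates the truncated inner sum, which hinges on the nonnegativity of both $\{s_k\}$ and the weights together with $\theta\in(0,1)$. I would also note explicitly that the $r=k$ term in the original definition contributes zero (since $k-r=0$), so the formal appearance of $\theta^{-1}$ is harmless.
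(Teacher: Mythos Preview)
Your proof is correct. The paper does not actually prove this lemma; it simply cites it from \cite{zhang2019decentralized} and uses it as a tool in the proof of Lemma~\ref{Lem9}. Your argument---swapping the finite double sum, reindexing $j=k-r$, and bounding the truncated inner sum by the closed form $\sum_{j\ge 1} j\theta^{j-1}=(1-\theta)^{-2}$---is the standard way to establish this bookkeeping estimate and is complete as written.
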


\begin{proof}
	Let $\boldsymbol{G}^{k} \boldsymbol{u}^{s+1}_{0}\left[1,;\right]$ be the first row of $\boldsymbol{G}^{k} \boldsymbol{u}^{s+1}_{0}$.
	By using (\ref{eq_G_b}) and (\ref{eq40}), we obtain that
	\begin{equation}\label{eq43}
	\begin{aligned}
	\boldsymbol{G}^{k} \boldsymbol{u}^{s+1}_{0}\left[1,;\right]
	&\le \lambda^{k}\mathbb{E}\left[\left\|\boldsymbol{\theta}_{0}^{s+1}-{\boldsymbol{1}_{n}}\bar{\boldsymbol{\theta}}_{0}^{s+1}\right\|^2\right] \\
	+& \boldsymbol{G}_{12}k\lambda^{k-1} \mathbb{E}\left[\left\|\boldsymbol{y}_{0}^{s+1}-{\boldsymbol{1}_{n}}\bar{\boldsymbol{y}}_{0}^{s+1}\right\|^2\right].
	\end{aligned}
	\end{equation}
	Let $\boldsymbol{G}^{k-r-1}\boldsymbol{b}_{r}^{s+1}[1]$ be the first element of $\boldsymbol{G}^{k-r-1}\boldsymbol{b}_{r}^{s+1}$. Similarly, it follows from (\ref{eq_G_b}) and (\ref{eq40}) that
	\begin{align} \label{eq44}
	&\boldsymbol{G}^{k-r-1}\boldsymbol{b}_{r}^{s+1}[1] \notag \\
	\leq& \boldsymbol{G}_{12} (k-r-1)\lambda^{k-r-2} {\tilde{\boldsymbol{b}}_{r,2}^{s+1}}.
	\end{align}
	This combined with (\ref{eq39}),  { (\ref{eq43}), the definition of $\boldsymbol{G}$ in (\ref{eq_G_b}), and (\ref{eq_bk2})} produces that
	\begin{align} &\mathbb{E}\left[\left\|\boldsymbol{\theta}_{k}^{s+1}-\boldsymbol{1}_{n}\bar{\boldsymbol{\theta}}_{k}^{s+1}\right\|^{2}\right] \le \lambda^{k}\mathbb{E}\left[\left\|\boldsymbol{\theta}_{0}^{s+1}-{\boldsymbol{1}_{n}}\bar{\boldsymbol{\theta}}_{0}^{s+1}\right\|^2\right] \notag \\
	&+\frac{2 \alpha^2 k\lambda^{k-1} }{1-\sigma^2} \mathbb{E}\left[\left\|\boldsymbol{y}_{0}^{s+1}-{\boldsymbol{1}_{n}}\bar{\boldsymbol{y}}_{0}^{s+1}\right\|^2\right] \notag \\
	&+ \sum_{r=0}^{k-1}\frac{2 \alpha^2 (k-r-1)\lambda^{k-r-2}}{1-\sigma^2}   \left(\frac{12\Psi n}{1-\sigma^2}\mathbb{E}\left[  \left\| \bar{\boldsymbol{ \theta }}_{r+1}^{s+1} - \bar{\boldsymbol{ \theta }}_{0}^{s+1}\right\|^2\right] \right.& \notag \\
	&\left. \qquad + \frac{12\Psi n}{1-\sigma^2}\mathbb{E}\left[  \left\| \bar{\boldsymbol{ \theta }}_{r}^{s+1} - \bar{\boldsymbol{ \theta }}_{0}^{s+1}\right\|^2\right] \right.& \notag \\
	&\left. \qquad + \frac{24\Psi }{1-\sigma^2}\mathbb{E}\left[  \left\| {{\boldsymbol{\theta }}_{0}^{s+1} - \boldsymbol{1}_{n}\bar{\boldsymbol{ \theta }}_{0}^{s+1}} \right\|^2\right]\right). \label{eq45}
	\end{align}
	By using $c_1$ and $c_2$ in (\ref{addeq2}) and  defining
		\begin{equation}\label{def-tdc}
		\begin{aligned}
		 {\tilde{c}_0(k)}=&\lambda^{k}\mathbb{E}[\left\|\boldsymbol{\theta}_{0}^{s+1}-{\boldsymbol{1}_{n}}\bar{\boldsymbol{\theta}}_{0}^{s+1}\right\|^2]
		\\& +\frac{2 \alpha^2k\lambda^{k-1}}{1-\sigma^2} \mathbb{E}[\left\|\boldsymbol{y}_{0}^{s+1}-{\boldsymbol{1}_{n}}\bar{\boldsymbol{y}}_{0}^{s+1}\right\|^2],
		\end{aligned}
		\end{equation}
	we have
	\begin{equation}\label{eq46}
	\begin{aligned}
	&\mathbb{E}\left[\left\|\boldsymbol{\theta}_{k}^{s+1}-\boldsymbol{1}_{n}\bar{\boldsymbol{\theta}}_{k}^{s+1}\right\|^{2}\right]
	\le  {\tilde{c}_0(k)}\\
	&+ \sum_{r=0}^{k-1}(k-r-1)\lambda^{k-r-2} \alpha^2\left(c_1\mathbb{E}\left[  \left\| \bar{\boldsymbol{ \theta }}_{r+1}^{s+1} - \bar{\boldsymbol{ \theta }}_{0}^{s+1}\right\|^2\right] \right.\\
	&\left.+  c_1\mathbb{E}\left[  \left\| \bar{\boldsymbol{ \theta }}_{r}^{s+1} - \bar{\boldsymbol{ \theta }}_{0}^{s+1}\right\|^2\right]+ c_2\mathbb{E}\left[  \left\| {{\boldsymbol{\theta }}_{0}^{s+1} - \boldsymbol{1}_{n}\bar{\boldsymbol{ \theta }}_{0}^{s+1}} \right\|^2\right]\right).
	\end{aligned}
	\end{equation}	
	Define $a_{k}= \sum_{r=0}^{k-1} s_{r} \left(k-r-1\right)\lambda^{k-r-2}$ with
	\begin{equation}\label{def-sk}
	\begin{aligned}
	s_{r} =& \alpha^2 \left(c_1\mathbb{E}\left[  \left\| \bar{\boldsymbol{ \theta }}_{r+1}^{s+1} - \bar{\boldsymbol{ \theta }}_{0}^{s+1}\right\|^2\right]
	+ c_1\mathbb{E}\left[  \left\| \bar{\boldsymbol{ \theta }}_{r}^{s+1} - \bar{\boldsymbol{ \theta }}_{0}^{s+1}\right\|^2\right]\right.\\
	&\left.+ c_2 \mathbb{E}\left[\left\|\boldsymbol{\theta}_{0}^{s+1}-{\boldsymbol{1}_{n}}\bar{\boldsymbol{\theta}}_{0}^{s+1}\right\|^2\right]\right).
	\end{aligned}
	\end{equation}
	Thus, we can write (\ref{eq46}) as follows
	\begin{equation}\label{recu-theta}
	\mathbb{E}\left[\left\|\boldsymbol{\theta}_{k}^{s+1}-\boldsymbol{1}_{n}
	\bar{\boldsymbol{\theta}}_{k}^{s+1}\right\|^{2}\right] \le  {\tilde{c}_0(k)} + a_k.
	\end{equation}
	
	 By using $\sum_{r=0}^{\infty} \lambda ^r = \frac{1}{1-\lambda}$, $\sum_{r=1}^{\infty} r \lambda^{r-1} = \frac{1}{\left(1-\lambda\right)^2}$, we obtain from  (\ref{addeq1})  and \eqref{def-tdc} that
	\begin{equation}\label{bd-c0k}
	\begin{aligned}
	&\sum_{k=0}^{K-1}  \tilde{c}_0(k) \le    \frac{1}{1-\lambda}\mathbb{E}\left[\left\|\boldsymbol{\theta}_{0}^{s+1}-{\boldsymbol{1}_{n}}\bar{\boldsymbol{\theta}}_{0}^{s+1}\right\|^2\right]\\
	&+\frac{2\alpha^2}{\left(1-\sigma^2\right)\left(1-\lambda\right)^2}		\mathbb{E}\left[\left\|\boldsymbol{y}_{0}^{s+1}-{\boldsymbol{1}_{n}}\bar{\boldsymbol{y}}_{0}^{s+1}\right\|^2\right]
		 =c_0(s).
	\end{aligned}
	\end{equation}
	 By summing up the inequality  \eqref{recu-theta} over $k=0, \ldots, K-1$,  using   \eqref{bd-c0k} and Lemma \ref{Lem5}, we get
	\begin{equation}\label{eq47}
	\begin{aligned}
	&\sum_{k=0}^{K-1} \mathbb{E}\left[\left\|\boldsymbol{\theta}_{k}^{s+1}-\boldsymbol{1}_{n}\bar{\boldsymbol{\theta}}_{k}^{s+1}\right\|^{2}\right] \\\le& \sum_{k=0}^{K-1}  {\tilde{c}_0(k)} + \sum_{k=0}^{K-1} a_k
	\le   c_0(s)+  \frac{1}{\left(1-\lambda\right)^2}\sum_{k=0}^{K-1} s_k .
	\end{aligned}
	\end{equation}
	This combined with \eqref{def-sk} proves the lemma.
\end{proof}

\subsection{Proof of Lemma \ref{Lem10}}
\begin{proof}
	Since $\boldsymbol{y}_{0}^{s+2} = \boldsymbol{y}_{K}^{s+1}$ in Line 11 of Algorithm \ref{Alg1}, we have $\boldsymbol{u}_{0}^{s+2} = \boldsymbol{u}_{K}^{s+1}$. Then by (\ref{eq39}), we can obtain
	\begin{align}\label{eq51}
	\boldsymbol{u}_{0}^{s+2} = \boldsymbol{u}_{K}^{s+1} \leq \boldsymbol{G}^{K}\boldsymbol{u}_{0}^{s+1}+\sum_{r=0}^{K-1}\boldsymbol{G}^{K-1-r}\boldsymbol{b}_{r}^{s+1}.
	\end{align}
	Applying the above inequality over $s$ leads to
	\begin{align}\label{eq52}
	\boldsymbol{u}_{0}^{s+1} \le \boldsymbol{G}^{sK}\boldsymbol{u}_{0}^{1}+\sum_{l=0}^{s-1}\sum_{r=0}^{K-1}\boldsymbol{G}^{(s-l)K-1-r}\boldsymbol{b}_{r}^{l+1}.
	\end{align}
	
	Similarly to (\ref{eq43}) in the proof of Lemma \ref{Lem9}, let $\boldsymbol{G}^{sK}\boldsymbol{u}_{0}^{1}\left[1,;\right]$ be the first row of $\boldsymbol{G}^{sK}\boldsymbol{u}_{0}^{1}$. By using (\ref{eq40}), we have
	\begin{align}\label{eq53}
	\boldsymbol{G}^{sK}\boldsymbol{u}_{0}^{1}\left[1,;\right]
    \le& \lambda^{sK}\mathbb{E}\left[\left\|\boldsymbol{\theta}_{0}^{1}
    -{\boldsymbol{1}_{n}}\bar{\boldsymbol{\theta}}_{0}^{1}\right\|^2\right] \notag\\
	&+  {\boldsymbol{G}_{12}sK \lambda^{sK-1}} \mathbb{E}\left[\left\|\boldsymbol{y}_{0}^{1}
    -{\boldsymbol{1}_{n}}\bar{\boldsymbol{y}}_{0}^{1}\right\|^2\right].
	\end{align}
	
	Similarly to (\ref{eq44}), let $\boldsymbol{G}^{(s-l)K-1-r}\boldsymbol{b}_{r}^{l+1}[1]$ be the first element of $\boldsymbol{G}^{(s-l)K-1-r}\boldsymbol{b}_{r}^{l+1}$, then we get
	\begin{equation}\label{eq54}
	\begin{aligned}
	&\boldsymbol{G}^{(s-l)K-1-r}\boldsymbol{b}_{r}^{l+1}[1] \\
	\le&  {\boldsymbol{G}_{12} \left((s-l)K-r-1\right)\lambda ^{(s-l)K-r-2} \tilde{\boldsymbol{b}}_{r,2}^{l+1}}.
	\end{aligned}
	\end{equation}
	This together with (\ref{eq52}),  (\ref{eq53}), $\boldsymbol{G}$ in (\ref{eq_G_b}), $c_1$ and $c_2$ in (\ref{addeq2}), and  (\ref{eq_bk2}) yields
	\begin{align}
	&\mathbb{E}\left[\left\| {{\boldsymbol{\theta }}_{0}^{s+1} - \boldsymbol{1}_{n}\bar{\boldsymbol{ \theta }}_{0}^{s+1}} \right\|^{2}\right] \notag\\
	\le & \lambda^{sK}\mathbb{E}\left[\left\|\boldsymbol{\theta}_{0}^{1}-{\boldsymbol{1}_{n}}\bar{\boldsymbol{\theta}}_{0}^{1}\right\|^2\right] + \frac{2 \alpha^2 sK \lambda^{sK-1}}{1-\sigma^2} \mathbb{E}\left[\left\|\boldsymbol{y}_{0}^{1}-{\boldsymbol{1}_{n}}\bar{\boldsymbol{y}}_{0}^{1}\right\|^2\right] \notag\\
	+&  \sum_{l=0}^{s-1}\sum_{r=0}^{K-1}\left((s-l)K-r-1\right)\lambda ^{(s-l)K-r-2}\alpha^2 \notag\\
	& \times \left(c_1\mathbb{E}\left[  \left\| \bar{\boldsymbol{ \theta }}_{r+1}^{l+1} - \bar{\boldsymbol{ \theta }}_{0}^{l+1}\right\|^2\right]
	+ c_1\mathbb{E}\left[  \left\| \bar{\boldsymbol{ \theta }}_{r}^{l+1} - \bar{\boldsymbol{ \theta }}_{0}^{l+1}\right\|^2\right] \right. & \notag\\
	& \left. \qquad
	+ c_2\mathbb{E}\left[  \left\| {{\boldsymbol{\theta }}_{0}^{l+1} - \boldsymbol{1}_{n}\bar{\boldsymbol{ \theta }}_{0}^{l+1}} \right\|^2\right]\right). \label{eq55}
	\end{align}
	Similarly to (\ref{bd-c0k}) and (\ref{eq47}), we sum up the inequality (\ref{eq55}) over $s=0, \cdots, S-1$. By using Lemma \ref{Lem5}, we obtain the upper bound of $\sum_{s=0}^{S-1}\mathbb{E}\left[\left\| {{\boldsymbol{\theta }}_{0}^{s+1} - \boldsymbol{1}_{n}\bar{\boldsymbol{ \theta }}_{0}^{s+1}} \right\|^{2}\right]$ as follows

	\begin{align}
	& \sum_{s=0}^{S-1}\mathbb{E}\left[\left\| {{\boldsymbol{\theta }}_{0}^{s+1}
    - \boldsymbol{1}_{n}\bar{\boldsymbol{ \theta }}_{0}^{s+1}} \right\|^{2}\right]
	\le  \frac{1}{1-\lambda}\mathbb{E} \left[\left\|\boldsymbol{\theta}_{0}^{1}
    -{\boldsymbol{1}_{n}}\bar{\boldsymbol{\theta}}_{0}^{1}\right\|^2\right] \notag \\
	&+ \frac{2 \alpha^2}{\left(1-\sigma^2\right)\left(1-\lambda\right)^2}
    \mathbb{E}\left[\left\|\boldsymbol{y}_{0}^{1}
    -{\boldsymbol{1}_{n}}\bar{\boldsymbol{y}}_{0}^{1}\right\|^2\right]  \notag\\
	&  + \frac{\alpha^2}{\left(1-\lambda\right)^2}\sum_{s=0}^{S-1}\sum_{k=0}^{K-1} c_1\mathbb{E}
    \left[  \left\| \bar{\boldsymbol{ \theta }}_{k+1}^{s+1}
    - \bar{\boldsymbol{ \theta }}_{0}^{s+1}\right\|^2\right]   \notag\\
	&+ \frac{\alpha^2}{\left(1-\lambda\right)^2}\sum_{s=0}^{S-1}\sum_{k=0}^{K-1}
    c_1 \mathbb{E} \left[  \left\| \bar{\boldsymbol{ \theta }}_{k}^{s+1} - \bar{\boldsymbol{ \theta }}_{0}^{s+1}\right\|^2\right]  \notag\\
	& + \frac{\alpha^2 K}{\left(1-\lambda\right)^2}\sum_{s=0}^{S-1}
    c_2 \mathbb{E}\left[\left\| {{\boldsymbol{\theta }}_{0}^{s+1} - \boldsymbol{1}_{n}\bar{\boldsymbol{ \theta }}_{0}^{s+1}} \right\|^{2}\right]. \label{eq56}
	\end{align}
	Suppose the upper bound of $\lambda$ is $\lambda \le \frac{7+\sigma^2}{8}$. If $0 < \alpha < \frac{\left(1-\sigma^2\right)^2}{32\sqrt{3 \Psi K}}$, we have $\left(1-\lambda\right)^2\left(1-\sigma^2\right)^2-48\alpha^2\Psi K > 0$.
	We regroup the terms of (\ref{eq56}) and use the definition of $\Xi$ in (\ref{eq50a}) to obtain (\ref{eq48}).
	
	
	Next, we bound the accumulated gradient tracking error in the outer loop.
	Similarly to (\ref{eq43}), let $\boldsymbol{G}^{sK}\boldsymbol{u}_{0}^{1}\left[2,;\right]$ be the second row of $\boldsymbol{G}^{sK}\boldsymbol{u}_{0}^{1}$. By Lemma \ref{Lem8}, we have
	\begin{equation}\label{eq57}
	\begin{aligned}
	&\boldsymbol{G}^{sK}\boldsymbol{u}_{0}^{1}\left[2,;\right] \le \frac{36\Psi sK\lambda^{sK-1}}{1-\sigma^2 }\mathbb{E}\left[\left\|\boldsymbol{\theta}_{0}^{1}-{\boldsymbol{1}_{n}}\bar{\boldsymbol{\theta}}_{0}^{1}\right\|^2\right]\\
	&\quad +\left(\lambda^{sK}+\frac{\left(1-\sigma^2\right)sK \lambda^{sK-1}}{8}\right)\mathbb{E}\left[\left\|\boldsymbol{y}_{0}^{1}-{\boldsymbol{1}_{n}}\bar{\boldsymbol{y}}_{0}^{1}\right\|^2\right].
	\end{aligned}
	\end{equation}
	Similarly to (\ref{eq44}), let $\boldsymbol{G}^{(s-l)K-1-r}\boldsymbol{b}_{r}^{l+1}[2]$ be the second element of $\boldsymbol{G}^{(s-l)K-1-r}\boldsymbol{b}_{r}^{l+1}$, we get
	\begin{equation}\label{eq58}
	\begin{aligned}
	&\boldsymbol{G}^{(s-l)K-1-r}\boldsymbol{b}_{r}^{l+1}[2] \\
	\le&
\lambda^{(s-l)K-r-1}\left(1+\frac{\left(1-\sigma^2\right)\left((s-l)K-r-1\right)}{8\lambda}\right)\tilde{\boldsymbol{b}}_{r,2}^{l+1}.
	\end{aligned}
	\end{equation}
	This together with (\ref{eq52}) and (\ref{eq57}) produces that
	
	\begin{align}\label{eq59}
	&\mathbb{E}\left[\left\| {{\boldsymbol{y }}_{0}^{s+1} - \boldsymbol{1}_{n}\bar{\boldsymbol{y}}_{0}^{s+1}} \right\|^{2}\right]
	\le \frac{36\Psi sK\lambda^{sK-1}}{1-\sigma^2 }\mathbb{E}\left[\left\|\boldsymbol{\theta}_{0}^{1}-{\boldsymbol{1}_{n}}\bar{\boldsymbol{\theta}}_{0}^{1}\right\|^2\right] \notag \\
	& +\left(\lambda^{sK}+\frac{\left(1-\sigma^2\right)sK \lambda^{sK-1}}{8}\right)\mathbb{E}\left[\left\|\boldsymbol{y}_{0}^{1}-{\boldsymbol{1}_{n}}\bar{\boldsymbol{y}}_{0}^{1}\right\|^2\right] \notag\\
	& + \sum_{l=0}^{s-1}\sum_{r=0}^{K-1}\lambda ^{(s-l)K-r-1}\left(1+\frac{\left(1-\sigma^2\right)\left((s-l)K-r-1\right)}{8\lambda}\right) \notag\\
	&\times \left(\frac{12 \Psi n}{1-\sigma^2}\mathbb{E}\left[  \left\| \bar{\boldsymbol{ \theta }}_{r+1}^{l+1} - \bar{\boldsymbol{ \theta }}_{0}^{l+1}\right\|^2\right]
	+ \frac{12 \Psi n}{1-\sigma^2}\mathbb{E}\left[  \left\| \bar{\boldsymbol{ \theta }}_{r}^{l+1} - \bar{\boldsymbol{ \theta }}_{0}^{l+1}\right\|^2\right] \right.& \notag \\
	&\left. \qquad+\frac{24 \Psi}{1-\sigma^2}\mathbb{E}\left[  \left\| {{\boldsymbol{\theta }}_{0}^{l+1} - \boldsymbol{1}_{n}\bar{\boldsymbol{ \theta }}_{0}^{l+1}} \right\|^2\right]\right).
	\end{align}
	
	Similarly to (\ref{bd-c0k}) and (\ref{eq47}), we sum up the inequality (\ref{eq59}) over $s=0, \cdots, S-1$. Then using Lemma \ref{Lem5} yields
	\begin{align}\label{eq60}
	& \sum_{s=0}^{S-1}\mathbb{E}\left[\left\| {{\boldsymbol{y }}_{0}^{s+1} - \boldsymbol{1}_{n}\bar{\boldsymbol{y}}_{0}^{s+1}} \right\|^{2}\right] \notag \\
	\le &
	\frac{36\Psi}{\left(1-\lambda\right)^2\left(1-\sigma^2\right)}\mathbb{E} \left[\left\|\boldsymbol{\theta}_{0}^{1}-{\boldsymbol{1}_{n}}\bar{\boldsymbol{\theta}}_{0}^{1}\right\|^2\right] \notag \\
	+& \frac{8 \left(1-\lambda\right)+\left(1-\sigma^2\right)}{8\left(1-\lambda\right)^2}\mathbb{E} \left[\left\|\boldsymbol{y}_{0}^{1}-{\boldsymbol{1}_{n}}\bar{\boldsymbol{y}}_{0}^{1}\right\|^2\right]  \notag \\
	+& \frac{3 \Psi K \left(8 \left(1-\lambda\right)+\left(1-\sigma^2\right)\right)}{\left(1-\lambda\right)^2\left(1-\sigma^2\right)}\sum_{s=0}^{S-1}\mathbb{E}\left[\left\| {{\boldsymbol{\theta }}_{0}^{s+1} - \boldsymbol{1}_{n}\bar{\boldsymbol{ \theta }}_{0}^{s+1}} \right\|^{2}\right] \notag \\
	+&\frac{3 \Psi n \left(8 \left(1-\lambda\right)+\left(1-\sigma^2\right)\right)}{2 \left(1-\lambda\right)^2\left(1-\sigma^2\right)}\sum_{s=0}^{S-1}\sum_{k=0}^{K-1}\mathbb{E}\left[  \left\| \bar{\boldsymbol{ \theta }}_{k+1}^{s+1} - \bar{\boldsymbol{ \theta }}_{0}^{s+1}\right\|^2\right] \notag\\
	+& \frac{3 \Psi n \left(8 \left(1-\lambda\right)+\left(1-\sigma^2\right)\right)}{2 \left(1-\lambda\right)^2\left(1-\sigma^2\right)}\sum_{s=0}^{S-1}\sum_{k=0}^{K-1}\mathbb{E}\left[  \left\| \bar{\boldsymbol{ \theta }}_{k}^{s+1} - \bar{\boldsymbol{ \theta }}_{0}^{s+1}\right\|^2\right].
	\end{align}
	Substituting the upper bound of accumulated consensus error in (\ref{eq48}) into (\ref{eq60}) and using $\Phi$ in (\ref{eq50b}) yield (\ref{eq49}).
\end{proof}

\subsection{Proof of Lemma \ref{Lem11}}	

\begin{proof}
	Using $\boldsymbol{v}_{i,k}^{s+1}$ in (\ref{eq8}) and Line 4 of Algorithm 1, we have
	\begin{align}
	&\mathbb{E}_{M,B}\left[ {{{\left\| {\boldsymbol{v}_{i,k}^{s+1} - \nabla J_i\left( {\boldsymbol{\theta}_{i,k}^{s+1}} \right)} \right\|}^2}} \right] \notag \\
	=&\mathbb{E}_{M,B}\left[\left\| {\frac{1}{M}\sum\limits_{j = 1}^M {{g_i}\left( {{\tilde{\tau}_{i,j}}\left| {\tilde{\boldsymbol{\theta}}_i^s} \right.} \right)} }+\frac{1}{B}\sum_{b=1}^{B}\left(g_i\left(\tau_{i,b}\left|{\boldsymbol{\theta}}_{i,k}^{s+1}\right.\right)\right.\right.\right. \notag \\
	&\left.\left.\left.-\omega\left(\tau_{i,b}\left|{\boldsymbol{\theta}}_{i,k}^{s+1}\right.,\tilde{\boldsymbol{\theta}}_i^s\right)g_i\left(\tau_{i,b}\left|\tilde{\boldsymbol{\theta}}_i^s\right.\right)\right)
	-\nabla J_i\left( {\boldsymbol{\theta}_{i,k}^{s+1}} \right) \right\|^2\right] \notag \\
	=& {\mathbb{E}_{M,B}\left[\left\| \mathbf{e}_1-\mathbf{e}_2  \right\|^2\right],}\label{eq62}
	\end{align}
	 where
	\begin{align*}
	{\bf{e}_1} &= {\nabla}J_i\left(\tilde{\boldsymbol{\theta}}_i^s\right)-\nabla J_i\left( {\boldsymbol{\theta}_{i,k}^{s+1}} \right)	+\frac{1}{B}\sum_{b=1}^{B}\left(g_i\left(\tau_{i,b}\left|{\boldsymbol{\theta}}_{i,k}^{s+1}\right.\right) \right.\\
	&\left.\quad-\omega\left(\tau_{i,b}\left|{\boldsymbol{\theta}}_{i,k}^{s+1}\right.,\tilde{\boldsymbol{\theta}}_i^s\right)g_i\left(\tau_{i,b}\left|\tilde{\boldsymbol{\theta}}_i^s\right.\right)\right),\\
	{\bf{e}_2} &= {\nabla J_i\left( {\tilde{\boldsymbol{\theta}}_i^s} \right) - \frac{1}{M}\sum\limits_{j = 1}^M {{g_i}\left( {{\tilde{\tau}_{i,j}}\left| {\tilde{\boldsymbol{\theta}}_i^s} \right.} \right)} }.
	\end{align*}	
	Recall the unbiased estimator of $g\left(\tau_{j} | \boldsymbol{ \theta }\right)$ and the property in (\ref{eq10}), we have $\mathbb{E}\left[ {\bf{e}_1}\right]=0$ and $\mathbb{E}\left[ {\bf{e}_2}\right]=0$.
	Since the trajectories $\{{\tilde{\tau}_{i,j}}\}_{j=1}^{M}$ in the outer loop and $\{\tau_{i,b}\}_{b=1}^{B}$ in the inner loop are independent, $\bf{e}_1$ and $\bf{e}_2$ are independent. Therefore, we have $\mathbb{E} \left[\left \|  \bf{e}_1 -\bf{e}_2  \right \|^2\right] = \mathbb{E} \left[\left \|  \bf{e}_1 \right \|^2\right] + \mathbb{E} \left[\left \| \bf{e}_2 \right \|^2\right]$.
	 Then from \eqref{eq62} it follows that
	\begin{align} \label{bd-vik}
	&\mathbb{E}_{M,B}\left[ {{{\left\| {\boldsymbol{v}_{i,k}^{s+1} - \nabla J_i\left( {\boldsymbol{\theta}_{i,k}^{s+1}} \right)} \right\|}^2}} \right] \notag \\
	 {\stackrel{(a)}{=}}&\frac{1}{B^2}\sum_{b=1}^{B}\mathbb{E}_{M,B}\left[\left\|{\nabla}J_i\left(\tilde{\boldsymbol{\theta}}_i^s\right)-\nabla J_i\left( {\boldsymbol{\theta}_{i,k}^{s+1}} \right)\right.\right.  \notag \\
	&\left.\left.+g_i\left(\tau_{i,b}\left|{\boldsymbol{\theta}}_{i,k}^{s+1}\right.\right)-\omega\left(\tau_{i,b}\left|{\boldsymbol{\theta}}_{i,k}^{s+1}\right.,\tilde{\boldsymbol{\theta}}_i^s\right)g_i\left(\tau_{i,b}\left|\tilde{\boldsymbol{\theta}}_i^s\right.\right)\right\|^2\right] \notag \\
	+& \frac{1}{M^2}\sum\limits_{j = 1}^M\mathbb{E}_{M,B}\left[\left\|{\nabla J_i\left( {\tilde{\boldsymbol{\theta}}_i^s} \right) -  {{g_i}\left( { {\tilde{\tau}_{i,j}}\left| {\tilde{\boldsymbol{\theta}}_i^s} \right.} \right)} }\right\|^2\right] \notag \\
	\stackrel{(b)}{\le}& \frac{1}{B^2}\sum_{b=1}^{B}\mathbb{E}_{M,B}\left[\left\|\omega\left(\tau_{i,b}\left|{\boldsymbol{\theta}}_{i,k}^{s+1}\right.,\tilde{\boldsymbol{\theta}}_i^s\right)g_i\left(\tau_{i,b}\left|\tilde{\boldsymbol{\theta}}_i^s\right.\right)\right.\right.  \notag \\	&\left.\left.-g_i\left(\tau_{i,b}\left|{\boldsymbol{\theta}}_{i,k}^{s+1}\right.\right)\right\|^2\right]+\frac{1}{M^2}\sum\limits_{j = 1}^M\mathbb{E}_{M,B}\left[\left\|{  {{g_i}\left( { {\tilde{\tau}_{i,j}}\left| {\tilde{\boldsymbol{\theta}}_i^s} \right.} \right)} }\right\|^2\right] \notag \\
	\stackrel{(c)}{\le}& \frac{1}{B^2}\sum_{b=1}^{B}\mathbb{E}_{M,B}\left[\left\|\omega\left(\tau_{i,b}\left|{\boldsymbol{\theta}}_{i,k}^{s+1}\right.,\tilde{\boldsymbol{\theta}}_i^s\right)g_i\left(\tau_{i,b}\left|\tilde{\boldsymbol{\theta}}_i^s\right.\right)\right.\right. \notag \\	&\left.\left.-g_i\left(\tau_{i,b}\left|{\boldsymbol{\theta}}_{i,k}^{s+1}\right.\right)\right\|^2\right]+\frac{V}{M}, \end{align}
	 where in (a) we use
	\begin{align*}
	{\bf{e}_2}=\frac{1}{M}\sum_{j = 1}^M {\bf{x}}_j, \text{ with } {\bf{x}}_j = {\nabla J_i\left( {\tilde{\boldsymbol{\theta}}_i^s} \right) -  {{g_i}\left( { {\tilde{\tau}_{i,j}}\left| {\tilde{\boldsymbol{\theta}}_i^s} \right.} \right)} },
	\end{align*}
	since the trajectories $\{\tilde{\tau}_{i,j}\}_{j=1}^{M}$ are independent from each other and therefore $\mathbb{E}\left[\left\|\frac{1}{M}\sum_{j = 1}^M {\bf{x}}_j\right\|^2\right] =  \frac{1}{M^2}\sum_{j = 1}^M\mathbb{E}\left[\left\| {\bf{x}}_j\right\|^2\right]$ for independent and zero mean variables $\{{\bf{x}}_j\}_{j=1}^{M}$. The same to ${\bf{e}_1}$.
	In (b) we use the standard conditional variance decomposition that $\mathbb{E} \left[\left \| \bf{x}-\mathbb{E}\left[\bf{x}\right]  \right \|^2\right]\leq  \mathbb{E} \left[\left \|\bf{x}\right \|^2\right]$, and in (c) we use Assumption \ref{Asu2}.	
	
	Substituting (\ref{eq32}) into \eqref{bd-vik} yields
	\begin{align}\label{eq63}
	&\mathbb{E}_{M,B}\left[ {{{\left\| {\boldsymbol{v}_{i,k}^{s+1} - \nabla J_i\left( {\boldsymbol{\theta}_{i,k}^{s+1}} \right)} \right\|}^2}} \right] \notag \\
	&\leq \frac{2\left(C_g^2{C_\omega }+{L_g^2}\right)}{B}\left\|\boldsymbol{\theta}_{i,k}^{s+1}-\tilde{\boldsymbol{\theta}}_i^s \right\|^2 + \frac{V}{M}.
	\end{align}
	For the gradient estimation error, we have
	\begin{align}
	&\mathbb{E}_{M,B}\left[\left\|\bar{\boldsymbol{v}}_{k}^{s+1}-\overline{\nabla J}\left(\boldsymbol{\theta}_{k}^{s+1}\right)\right\|^{2}\right] \notag \\
	=& \frac{1}{n^2}\mathbb{E}_{M,B}\left[\left\|\sum_{i=1}^n \left( \boldsymbol{v}_{i,k}^{s+1}-\nabla J_{i}\left(\boldsymbol{\theta}_{i,k}^{s+1}\right) \right)\right\|^{2}\right] \notag \\
	=& \frac{1}{n^2}\mathbb{E}_{M,B} \left[ \sum_{i=1}^n \left\| \boldsymbol{v}_{i,k}^{s+1}-\nabla J_{i}\left(\boldsymbol{\theta}_{i,k}^{s+1}\right)\right\|^{2} \right. \notag\\
	&\left. + \sum_{i \neq r} \left<\boldsymbol{v}_{i,k}^{s+1}-\nabla J_{i}\left(\boldsymbol{\theta}_{i,k}^{s+1}\right), \boldsymbol{v}_{r,k}^{s+1}-\nabla J_{r}\left(\boldsymbol{\theta}_{r,k}^{s+1}\right)\right>\right] \notag \\
	=& \frac{1}{n^2}\sum_{i=1}^n\mathbb{E}_{M,B} \left[  \left\| \boldsymbol{v}_{i,k}^{s+1}-\nabla J_{i}\left(\boldsymbol{\theta}_{i,k}^{s+1}\right)\right\|^{2} \right], \label{eq64}
	\end{align}
	where in the last equality we use that $\{\boldsymbol{v}_{i,k}^{s+1}\}$ are independent with each other and therefore $\mathbb{E}_{M,B}\left[ \sum\limits_{i \neq r} \left<\boldsymbol{v}_{i,k}^{s+1}-\nabla J_{i}(\boldsymbol{\theta}_{i,k}^{s+1}), \boldsymbol{v}_{r,k}^{s+1}-\nabla J_{r}(\boldsymbol{\theta}_{r,k}^{s+1})\right>\right] =0$.
	By substituting (\ref{eq63}) into (\ref{eq64}) and using the definition of $\Psi$ in (\ref{eq_psi}), we obtain
	\begin{align} &\mathbb{E}_{M,B}\left[\left\|\bar{\boldsymbol{v}}_{k}^{s+1}-\overline{\nabla J}\left(\boldsymbol{\theta}_{k}^{s+1}\right)\right\|^{2}\right] \notag \\
	\leq& \frac{2\left(C_g^2{C_\omega }+{L_g^2}\right)}{B}\frac{1}{n^2}\sum_{i=1}^n \left\|\boldsymbol{\theta}_{i,k}^{s+1}-\tilde{\boldsymbol{\theta}}_i^s \right\|^2 + \frac{V}{Mn} \notag \\
	\leq& \frac{\Psi}{B n^2}\left\|\boldsymbol{\theta}_{k}^{s+1}-\tilde{\boldsymbol{\theta}}^s \right\|^2 + \frac{V}{Mn}. \label{eq64-1}
	\end{align}	
	Applying the upper bound in (\ref{eq34}) to (\ref{eq64-1}) and taking the expectations of the resulting inequality lead to (\ref{eq61}).
\end{proof}



\ifCLASSOPTIONcaptionsoff
  \newpage
\fi



\bibliographystyle{IEEEtran}
\bibliography{IEEEabrv,myIEEE}

\begin{thebibliography}{10}
\providecommand{\url}[1]{#1}
\csname url@samestyle\endcsname
\providecommand{\newblock}{\relax}
\providecommand{\bibinfo}[2]{#2}
\providecommand{\BIBentrySTDinterwordspacing}{\spaceskip=0pt\relax}
\providecommand{\BIBentryALTinterwordstretchfactor}{4}
\providecommand{\BIBentryALTinterwordspacing}{\spaceskip=\fontdimen2\font plus
\BIBentryALTinterwordstretchfactor\fontdimen3\font minus
  \fontdimen4\font\relax}
\providecommand{\BIBforeignlanguage}[2]{{%
\expandafter\ifx\csname l@#1\endcsname\relax
\typeout{** WARNING: IEEEtran.bst: No hyphenation pattern has been}%
\typeout{** loaded for the language `#1'. Using the pattern for}%
\typeout{** the default language instead.}%
\else
\language=\csname l@#1\endcsname
\fi
#2}}
\providecommand{\BIBdecl}{\relax}
\BIBdecl

\bibitem{sutton2018reinforcement}
R.~S. Sutton and A.~G. Barto, \emph{Reinforcement learning: An
  introduction}.\hskip 1em plus 0.5em minus 0.4em\relax MIT press, 2018.

\bibitem{bertsekas2019reinforcement}
D.~Bertsekas, \emph{Reinforcement learning and optimal control}.\hskip 1em plus
  0.5em minus 0.4em\relax Athena Scientific, 2019.

\bibitem{yan2022optimal}
B.~Yan, P.~Shi, C.-C. Lim, and Z.~Shi, ``Optimal robust formation control for
  heterogeneous multi-agent systems based on reinforcement learning,''
  \emph{Int J Robust Nonlinear Control.}, vol.~32, no.~5, pp. 2683--2704, 2022.

\bibitem{duan2022optimal}
S.~Duan, Z.~Zhao, S.~Chen, and Z.-P. Jiang, ``Optimal resource allocation for
  high-order nonlinear uncertain multi-agent systems,'' \emph{Int J Robust
  Nonlinear Control.}, vol.~32, no.~15, pp. 8507--8525, 2022.

\bibitem{Busoniu2008survey}
L.~{Busoniu}, R.~{Babuska}, and B.~{De Schutter}, ``A comprehensive survey of
  multiagent reinforcement learning,'' \emph{IEEE Trans. Syst., Man, Cybern.
  {Part C}}, vol.~38, no.~2, pp. 156--172, 2008.

\bibitem{lin2018efficient}
K.~Lin, R.~Zhao, Z.~Xu, and J.~Zhou, ``Efficient large-scale fleet management
  via multi-agent deep reinforcement learning,'' in \emph{Proc. ACM SIGKDD Int.
  Conf. Know. Disc. Data Mini.}, 2018, pp. 1774--1783.

\bibitem{chen2019iraf}
J.~Chen, S.~Chen, Q.~Wang, B.~Cao, G.~Feng, and J.~Hu, ``{iRAF}: A deep
  reinforcement learning approach for collaborative mobile edge computing {IoT}
  networks,'' \emph{IEEE Internet Things J.}, vol.~6, no.~4, pp. 7011--7024,
  2019.

\bibitem{li2012optimal}
F.-D. Li, M.~Wu, Y.~He, and X.~Chen, ``Optimal control in microgrid using
  multi-agent reinforcement learning,'' \emph{ISA Trans.}, vol.~51, no.~6, pp.
  743--751, 2012.

\bibitem{zhang2019multi}
K.~Zhang, Z.~Yang, and T.~Ba{\c{s}}ar, ``Multi-agent reinforcement learning: A
  selective overview of theories and algorithms,'' \emph{Handb. Reinf. Learn.
  Control}, pp. 321--384, 2021.

\bibitem{sutton2000policy}
R.~S. Sutton, D.~McAllester, S.~Singh, and Y.~Mansour, ``Policy gradient
  methods for reinforcement learning with function approximation,'' in
  \emph{Proc. Adv. Neural Inf. Process. Syst.}, 2000, pp. 1057--1063.

\bibitem{williams1992simple}
R.~J. Williams, ``Simple statistical gradient-following algorithms for
  connectionist reinforcement learning,'' \emph{Mach. Learn.}, vol.~8, no. 3-4,
  pp. 229--256, 1992.

\bibitem{baxter2001infinite}
J.~Baxter and P.~L. Bartlett, ``Infinite-horizon policy-gradient estimation,''
  \emph{J. Artif. Intell. Res.}, vol.~15, pp. 319--350, 2001.

\bibitem{littman1994markov}
M.~L. Littman, ``Markov games as a framework for multi-agent reinforcement
  learning,'' in \emph{Proc. Int. Conf. Mach. Learn.}\hskip 1em plus 0.5em
  minus 0.4em\relax Elsevier, 1994, pp. 157--163.

\bibitem{hernandez2019survey}
P.~Hernandez-Leal, B.~Kartal, and M.~E. Taylor, ``A survey and critique of
  multiagent deep reinforcement learning,'' \emph{Auton. Agents Multiagent
  Syst.}, vol.~33, no.~6, pp. 750--797, 2019.

\bibitem{nguyen2020deep}
T.~T. Nguyen, N.~D. Nguyen, and S.~Nahavandi, ``Deep reinforcement learning for
  multiagent systems: A review of challenges, solutions, and applications,''
  \emph{IEEE Trans. Cybern.}, vol.~50, no.~9, pp. 3826--3839, 2020.

\bibitem{foerster2016learning}
J.~Foerster, I.~A. Assael, N.~De~Freitas, and S.~Whiteson, ``Learning to
  communicate with deep multi-agent reinforcement learning,'' in \emph{Proc.
  Adv. Neural Inf. Process. Syst.}, 2016, pp. 2137--2145.

\bibitem{lowe2017multi}
R.~Lowe, Y.~I. Wu, A.~Tamar, J.~Harb, O.~P. Abbeel, and I.~Mordatch,
  ``Multi-agent actor-critic for mixed cooperative-competitive environments,''
  in \emph{Proc. Adv. Neural Inf. Process. Syst.}, 2017, pp. 6379--6390.

\bibitem{foerster2019bayesian}
J.~Foerster, F.~Song, E.~Hughes, N.~Burch, I.~Dunning, S.~Whiteson,
  M.~Botvinick, and M.~Bowling, ``Bayesian action decoder for deep multi-agent
  reinforcement learning,'' in \emph{Proc. Int. Conf. Mach. Learn.}, 2019, pp.
  1942--1951.

\bibitem{vinyals2019grandmaster}
O.~Vinyals, I.~Babuschkin, W.~M. Czarnecki, M.~Mathieu, A.~Dudzik, J.~Chung,
  D.~H. Choi, R.~Powell, T.~Ewalds, P.~Georgiev \emph{et~al.}, ``Grandmaster
  level in starcraft {II} using multi-agent reinforcement learning,''
  \emph{Nat.}, vol. 575, no. 7782, pp. 350--354, 2019.

\bibitem{lee2020optimization}
D.~Lee, N.~He, P.~Kamalaruban, and V.~Cevher, ``Optimization for reinforcement
  learning: From a single agent to cooperative agents,'' \emph{IEEE Trans.
  Signal Process.}, vol.~37, no.~3, pp. 123--135, 2020.

\bibitem{macua2014distributed}
S.~V. Macua, J.~Chen, S.~Zazo, and A.~H. Sayed, ``Distributed policy evaluation
  under multiple behavior strategies,'' \emph{IEEE Trans. Automat. Contr.},
  vol.~60, no.~5, pp. 1260--1274, 2014.

\bibitem{lee2018primal}
D.~Lee, H.~Yoon, and N.~Hovakimyan, ``Primal-dual algorithm for distributed
  reinforcement learning: distributed {GTD},'' in \emph{Proc. IEEE Conf. Decis.
  Control}, 2018, pp. 1967--1972.

\bibitem{wai2018multi}
H.-T. Wai, Z.~Yang, Z.~Wang, and M.~Hong, ``Multi-agent reinforcement learning
  via double averaging primal-dual optimization,'' in \emph{Proc. Adv. Neural
  Inf. Process. Syst.}, 2018, pp. 9649--9660.

\bibitem{doan2019finite}
T.~Doan, S.~Maguluri, and J.~Romberg, ``Finite-time analysis of distributed
  {TD} (0) with linear function approximation on multi-agent reinforcement
  learning,'' in \emph{Proc. Int. Conf. Mach. Learn.}, 2019, pp. 1626--1635.

\bibitem{cassano2020multi}
L.~Cassano, K.~Yuan, and A.~H. Sayed, ``Multi-agent fully decentralized value
  function learning with linear convergence rates,'' \emph{IEEE Trans. Automat.
  Contr.}, vol.~66, no.~4, pp. 1497--1512, 2021.

\bibitem{konda2000actor}
V.~R. Konda and J.~N. Tsitsiklis, ``Actor-critic algorithms,'' in \emph{Proc.
  Adv. Neural Inf. Process. Syst.}, 2000, pp. 1008--1014.

\bibitem{bhatnagar2009natural}
S.~Bhatnagar, R.~S. Sutton, M.~Ghavamzadeh, and M.~Lee, ``Natural actor--critic
  algorithms,'' \emph{Autom.}, vol.~45, no.~11, pp. 2471--2482, 2009.

\bibitem{schmidt2017minimizing}
M.~Schmidt, N.~Le~Roux, and F.~Bach, ``Minimizing finite sums with the
  stochastic average gradient,'' \emph{Math. Program.}, vol. 162, no. 1-2, pp.
  83--112, 2017.

\bibitem{johnson2013accelerating}
R.~Johnson and T.~Zhang, ``Accelerating stochastic gradient descent using
  predictive variance reduction,'' in \emph{Proc. Adv. Neural Inf. Process.
  Syst.}, 2013, pp. 315--323.

\bibitem{defazio2014saga}
A.~Defazio, F.~Bach, and S.~Lacoste-Julien, ``{SAGA}: A fast incremental
  gradient method with support for non-strongly convex composite objectives,''
  in \emph{Proc. Adv. Neural Inf. Process. Syst.}, 2014, pp. 1646--1654.

\bibitem{nguyen2017sarah}
L.~M. Nguyen, J.~Liu, K.~Scheinberg, and M.~Tak{\'a}{\v{c}}, ``{SARAH}: a novel
  method for machine learning problems using stochastic recursive gradient,''
  in \emph{Proc. Int. Conf. Mach. Learn.}, 2017, pp. 2613--2621.

\bibitem{allen2016variance}
Z.~Allen-Zhu and E.~Hazan, ``Variance reduction for faster non-convex
  optimization,'' in \emph{Proc. Int. Conf. Mach. Learn.}, 2016, pp. 699--707.

\bibitem{reddi2016stochastic}
S.~J. Reddi, A.~Hefny, S.~Sra, B.~Poczos, and A.~Smola, ``Stochastic variance
  reduction for nonconvex optimization,'' in \emph{Proc. Int. Conf. Mach.
  Learn.}, 2016, pp. 314--323.

\bibitem{li2018simple}
Z.~Li and J.~Li, ``A simple proximal stochastic gradient method for nonsmooth
  nonconvex optimization,'' in \emph{Proc. Adv. Neural Inf. Process. Syst.},
  2018, pp. 5564--5574.

\bibitem{ge2019stabilized}
R.~Ge, Z.~Li, W.~Wang, and X.~Wang, ``Stabilized {SVRG}: Simple variance
  reduction for nonconvex optimization,'' in \emph{Proc. Conf. Learn. Theory},
  2019, pp. 1394--1448.

\bibitem{papini2018stochastic}
M.~Papini, D.~Binaghi, G.~Canonaco, M.~Pirotta, and M.~Restelli, ``Stochastic
  variance-reduced policy gradient,'' in \emph{Proc. Int. Conf. Mach. Learn.},
  2018, pp. 4026--4035.

\bibitem{xu2020improved}
P.~Xu, F.~Gao, and Q.~Gu, ``An improved convergence analysis of stochastic
  variance-reduced policy gradient,'' in \emph{Proc. Uncert. Artif. Intell.},
  2020, pp. 541--551.

\bibitem{xu2020sample}
------, ``Sample efficient policy gradient methods with recursive variance
  reduction,'' in \emph{Proc. Int. Conf. Learn. Repres.}, 2020.

\bibitem{mokhtari2016dsa}
A.~Mokhtari and A.~Ribeiro, ``{DSA}: Decentralized double stochastic averaging
  gradient algorithm,'' \emph{J. Mach. Learn. Res.}, vol.~17, no.~1, pp.
  2165--2199, 2016.

\bibitem{ShiLWY15}
W.~Shi, Q.~Ling, G.~Wu, and W.~Yin, ``{EXTRA:} an exact first-order algorithm
  for decentralized consensus optimization,'' \emph{{SIAM} J. Optim.}, vol.~25,
  no.~2, pp. 944--966, 2015.

\bibitem{yuan2019variance}
K.~Yuan, B.~Ying, J.~Liu, and A.~H. Sayed, ``Variance-reduced stochastic
  learning by networked agents under random reshuffling,'' \emph{IEEE Trans.
  Signal Process.}, vol.~67, no.~2, pp. 351--366, 2019.

\bibitem{xin2020decentralized}
R.~Xin, S.~Kar, and U.~A. Khan, ``Decentralized stochastic optimization and
  machine learning: A unified variance-reduction framework for robust
  performance and fast convergence,'' \emph{IEEE Signal Process. Mag.},
  vol.~37, no.~3, pp. 102--113, 2020.

\bibitem{qu2017harnessing}
G.~Qu and N.~Li, ``Harnessing smoothness to accelerate distributed
  optimization,'' \emph{IEEE Trans. Control Netw. Syst.}, vol.~5, no.~3, pp.
  1245--1260, 2017.

\bibitem{nedic2017achieving}
A.~Nedic, A.~Olshevsky, and W.~Shi, ``Achieving geometric convergence for
  distributed optimization over time-varying graphs,'' \emph{SIAM J. Optim.},
  vol.~27, no.~4, pp. 2597--2633, 2017.

\bibitem{pu2020distributed}
S.~Pu and A.~Nedi{\'c}, ``Distributed stochastic gradient tracking methods,''
  \emph{Math. Program.}, vol. 187, no.~1, pp. 409--457, 2021.

\bibitem{xin2020variance}
R.~Xin, U.~A. Khan, and S.~Kar, ``Variance-reduced decentralized stochastic
  optimization with accelerated convergence,'' \emph{IEEE Trans. Signal
  Process.}, vol.~68, pp. 6255--6271, 2020.

\bibitem{zhang2019decentralized}
J.~Zhang and K.~You, ``Decentralized stochastic gradient tracking for
  non-convex empirical risk minimization,'' \emph{arXiv:1909.02712}, 2019.

\bibitem{xin2021improved}
R.~Xin, U.~A. Khan, and S.~Kar, ``An improved convergence analysis for
  decentralized online stochastic non-convex optimization,'' \emph{IEEE Trans.
  Signal Process.}, vol.~69, pp. 1842--1858, 2021.

\end{thebibliography}

\end{document}